\pgfplotsset{compat=1.4}
\pgfplotsset{filter discard warning=false}
\newcommand{\pt}{po\-ly\-nom\-i\-al-time}
\newcounter{mycounter}  
\newenvironment{noindlist}
 {\begin{list}{\arabic{mycounter}.~~}{\usecounter{mycounter} \labelsep=0em \labelwidth=0em \leftmargin=0em \itemindent=\parindent}}
 {\end{list}}
\tikzstyle{vertex}=[circle,draw,fill=black,minimum size=3pt,inner sep=0pt]
\tikzstyle{interval}=[{[-]}]
\tikzstyle{dotinterval}=[circle,draw,minimum size=3pt,inner sep=0pt]
\newcommand{\fp}{fi\-xed-pa\-ra\-me\-ter}
\newcommand{\maxcolcli}{{\Gamma}}
\newcommand{\decprob}[3]{%
\par\vspace{\topsep}%
\begin{minipage}{0.93\linewidth}%
  #1%
  \begin{compactdesc}%
  \item[{Input:}] #2%
  \item[{Question:}] #3%
  \end{compactdesc}%
\end{minipage}\par\vspace{\topsep}%
}
\newcommand{\decprobNN}[3]{%
\par\vspace{\topsep}%
\begin{minipage}{0.93\linewidth}%
  \begin{compactdesc}%
  \item[{Input:}] #2%
  \item[{Question:}] #3%
  \end{compactdesc}%
\end{minipage}\par\vspace{\topsep}%
}
\newcommand{\NP}{\ensuremath{\mathrm{NP}}}
\newcommand{\Wone}{\ensuremath{\mathrm{W}[1]}}
\newcommand{\CompMax}{\ensuremath{c_{\forall}}}
\newcommand{\CompMin}{\ensuremath{c_{\exists}}}
\newcommand{\MColIS}{\textsc{Colorful Independent Set with Lists}\xspace}
\newcommand{\JIntSel}{\textsc{Job Interval Selection}\xspace}
\newcommand{\col}{\ensuremath{\mathrm{col}}}
\newcommand{\cols}{\ensuremath{\gamma}}
\newcommand{\pw}{\ensuremath{\omega}}
\newcommand{\MCIS}{\textsc{2-Union Independent Set}\xspace}
\newcommand{\sig}{\ensuremath{\mathrm{sig}}}
\newtheorem{corollary}{Corollary}
\newtheorem{theorem}{Theorem}
\newtheorem{proposition}{Proposition}
\theoremstyle{definition}
\newtheorem{lemma}{Lemma}
\newtheorem{definition}{Definition}
\newtheorem{rrule}{Reduction Rule}
\newtheorem{observation}{Observation}
\newcommand{\vsig}[1]{\ensuremath{(-#1_s,#1_e,-#1'_s,#1'_e)}}
\newcommand{\poly}{\ensuremath{\operatorname{poly}}}
\newenvironment{comment}{}
\author[1]{René van Bevern} \affil[1]{Institut für Softwaretechnik und Theoretische Informatik, TU~Berlin, Germany, \texttt{\{rene.vanbevern,rolf.niedermeier\}@tu-berlin.de}}
\author{Matthias Mnich}
\affil{Cluster of Excellence Multimodal Computing and Interaction, Saarbr\"ucken, Germany, \texttt{mmnich@mmci.uni-saarland.de}}
\author[1]{Rolf Niedermeier}
\author{Mathias Weller}
\affil{LIRMM, University Montpellier II, France, \texttt{mathias.weller@lirmm.fr}}
\newcommand{\entry}[1]{\ensuremath{T%
\ifthenelse{\equal{#1}{}}{}{[#1]}%
}}
\newcommand{\entryp}[1]{\ensuremath{T'%
\ifthenelse{\equal{#1}{}}{}{[#1]}%
}}
\date{}
\title{Interval Scheduling\\and Colorful Independent Sets\thanks{A
    preliminary version of this article appeared in the proceedings of
    the \emph{23rd International Symposium on Algorithms and Computation
    (ISAAC 2012)}, volume~7676 in Lecture Notes in Computer Science,
    pp. 247--256, Springer, 2012. Besides providing full proof details,
    this revised and extended version improves running times, shows that
    \JIntSel{} is fixed-parameter tractable with respect to the standard
    parameter~$k$, and introduces the parameter
    $c$-compactness. Moreover, it adds an experimental evaluation of the
    algorithms.}}
\begin{document}
\maketitle

{\small\noindent \paragraph{Abstract.}  Numerous applications in scheduling, such as resource allocation or
  steel manufacturing, can be modeled using the NP-hard
  \textsc{Independent Set} problem (given an undirected graph and an
  integer~$k$, find a set of at least~$k$ pairwise non-ad\-ja\-cent
  vertices).  Here, one encounters special graph classes like 2-union
  graphs (edge-wise unions of two interval graphs) and strip graphs
  (edge-wise unions of an interval graph and a cluster graph), on which
  \textsc{Independent Set} remains \NP-hard but admits constant-ratio
  approximations in polynomial time.

We study the parameterized complexity of \textsc{Independent Set} on 2-union graphs and on subclasses like strip graphs. Our investigations significantly benefit from a new structural ``compactness'' parameter of interval graphs and novel problem formulations using vertex-colored interval graphs. Our main contributions are:

\begin{noindlist}
\item We show a complexity dichotomy: restricted to graph classes closed
  under induced subgraphs and disjoint unions, \textsc{Independent Set}
  is po\-ly\-no\-mi\-al-time solvable if both input interval graphs are
  cluster graphs, and is \NP-hard otherwise.

\item We chart the possibilities and limits of effective po\-ly\-no\-mi\-al-time preprocessing (also known as kernelization).

\item We extend \citet{HK06}'s \fp{} algorithm for \textsc{Independent Set} on strip graphs parameterized by the structural parameter ``maximum number of live jobs'' to show that the problem (also known as \JIntSel{}) is \fp{} tractable with respect to the parameter~$k$ and generalize their algorithm from strip graphs to 2-union graphs. Preliminary experiments with random data indicate that \JIntSel{} with up to fifteen jobs and $5\cdot 10^5$ intervals can be solved optimally in less than five minutes.
\end{noindlist}}
%
%
%


%
%
%
%
%
%

\section{Introduction}

Many fundamental scheduling problems can be modeled as finding maximum independent sets in generalizations of interval graphs \citep{KLPS07}.
Intuitively, finding a maximum independent set
corresponds to scheduling a maximum number of jobs (represented by time intervals) on a limited set of machines in a given time frame. %

In this context, we consider two popular generalizations of interval graphs, namely 2-union graphs \citep{BHNSS06} and strip graphs \citep{HK06}:
An undirected graph $G=(V,E)$ is a \emph{2-union graph} if it is the
edge-wise union of two interval graphs~$G_1=(V,E_1)$ and~$G_2=(V,E_2)$
on the same vertex set~$V$, that is, $G=(V,E_1\cup E_2)$, where an
\emph{interval graph} is a graph whose vertices one-to-one correspond to
intervals on the real line and there is an edge between two vertices if
and only if
their intervals intersect. If one of the two interval graphs~$G_1$
or~$G_2$ is even a \emph{cluster graph}, that is, if it consists of
pairwise disjoint cliques, then $G$~is called a \emph{strip graph}.

Examples for solving scheduling problems using (weighted) \textsc{Independent Set} on 2-union graphs include resource allocation scenarios \citep{BHNSS06} and coil
coating in steel manufacturing \citep{HKML11,Moe11}.
Formally, we are interested in the following problem:

\decprob{\MCIS{}}{Two interval graphs~$G_1=(V,E_1),G_2=(V,E_2)$, and a natural number~$k$.}{Is there a size-$k$ independent set in $G=(V,E_1\cup E_2)$?}
\noindent If $G$~is a strip graph, then the problem is known as
\JIntSel{} \citep{Spi99}.  We make two main conceptual contributions:
\begin{noindlist}
\item Since \MCIS{} is NP-hard \citep{BHNSS06}, there is little hope to find optimal solutions within polynomial time. Instead of following the route of approximation algorithms and heuristics \citep{Spi99,BHNSS06%
    ,HKML11}, we aim for solving the problem optimally using \emph{fixed-parameter algorithms} \citep{DF13,FG06,Nie06}, a concept to date largely neglected in the field of scheduling problems \citep{Mar11,MW14}. 

\item In order to obtain our results, we provide ``colorful reformulations'' of \MCIS{} and \JIntSel{}, providing characterizations of these problems in terms of vertex-colored interval graphs, thus replacing the conceptually more complicated 2-union  and strip graphs.
\end{noindlist}

\subsection{Known Results} 

\paragraph{Results for \MCIS{}.}
Checking whether a graph is a 2-union graph is
NP-hard~\citep{GW95,journal-Jia13}.  Therefore, we require two separate
interval graphs as input to \MCIS{}.

To date, a number of polynomial-time approximation algorithms  has been devised to solve \MCIS{}.  \citet{BHNSS06} showed that ver\-tex-weigh\-ted \MCIS{} admits a po\-ly\-no\-mi\-al-time \mbox{ratio-4} ap\-prox\-i\-ma\-tion.  For the special case of so-called \mbox{$K_{1,5}$-free} graphs (which comprises the case that both input graphs are proper interval graphs), \citet{BNR96} provided a ratio-3.25 approximation. 

In the context of applying \MCIS{} to coil coating---a process in steel
manufacturing---\citet{HKML11} showed \NP-hardness of \MCIS{} on
so-called $M$-composite 2-union graphs (which arise in their
application), and showed a dynamic programming based algorithm running
in polynomial time for constant~$M$, where the degree of the polynomial
depends on~$M$. They additionally provided experimental studies based on
heuristics using mathematical programming.

 Regarding parameterized complexity, \citet{Jia10} proved that \MCIS{} is \Wone-hard parameterized by the independent set size~$k$, thus excluding any hope for \fp{} tractability with respect to~$k$. Jiang's \Wone-hardness result holds even when both input graphs are proper interval graphs.

\paragraph{Results for \JIntSel{}.} \JIntSel{} was introduced by \citet{NH82} and was shown $\mathrm{APX}$-hard by \citet{Spi99}, who also provided a ratio-2 greedy approximation algorithm.  \citet{COR06} improved this to a ratio-1.582 approximation algorithm. \citet{HK06} showed \fp{} tractability results for \JIntSel{} in terms of the structural parameter ``maximum number of live jobs'' and in terms of the parameter ``total number of jobs''. Moreover, they showed that recognizing strip graphs is NP-hard.

\subsection{Our Results}

We provide a refined computational complexity analysis for \MCIS{}. Herein, our results mainly touch parameterized complexity. 

We start by proving a complexity dichotomy that shows that all problem variants encountered in our work remain NP-hard: roughly speaking, we show that \textsc{Independent Set} is polynomial-time solvable if the input is the edge-wise union of two cluster graphs, while it is \NP-hard otherwise.

\paragraph{Results for \JIntSel{}.}
  We complement known polynomial-time
approximability results \citep{Spi99,COR06} for \JIntSel{} with parameterized complexity results and extend the tractability
results by \citet{HK06} in several ways:
\begin{noindlist}
\item \looseness=-1 We generalize their \fp{} algorithm for \JIntSel{} parameterized
  by the maximum number of ``live jobs'' to \MCIS{}. Moreover, for
  \JIntSel{}, we show that it can be turned into a \fp{} algorithm with
  respect to the parameter~$k$ (``number of selected intervals'').  Note
  that the latter appears to be impossible for \MCIS{}, which is
  W[1]-hard for the parameter~$k$~\citep{Jia10}.

\item We prove the non-existence of polynomial-size problem kernels for \JIntSel{} with respect to~$k$ and structural parameters like the maximum clique size~$\pw$, thus lowering hopes for provably efficient and effective~preprocessing.

\item We show that, if the input graph is the edge-wise union of a
  cluster graph and a \emph{proper} interval graph, then \JIntSel{}
  admits a problem kernel comprising $4k^2\pw$ intervals that can be
  computed in linear time. %
\end{noindlist}

\paragraph{Results for \MCIS{}.}
Since \MCIS{} is \Wone-hard with respect to the 
parameter~$k$ \citep{Jia10} and NP-hard even when natural graph parameters like ``maximum clique size~$\pw$'' or ``maximum vertex degree~$\Delta$'' are constants (which is implied by our complexity dichotomy), \MCIS{} is unlikely to be \fp{} tractable for any of these parameters.

However, we identify a new natural interval graph parameter that highly
influences the computational complexity of \MCIS{}: we call an interval
graph \emph{$c$-compact} if its intervals are representable using at
most $c$ distinct start and end points.  That is, $c$ is the ``number of
numbers'' required in an interval representation. Similar ``number of
numbers'' parameters have previously been exploited to obtain \fp{}
algorithms for problems unrelated to interval graphs~\citep{FGR12}.

We use $\CompMax$ to denote the minimum number such that \emph{both} input interval graphs are $\CompMax$-compact and $\CompMin$ to denote the minimum number such that at least \emph{one} input interval graph is $\CompMin$-compact. We obtain the following results:

\begin{noindlist}
\item\looseness=-1 We give a simple polynomial-time data reduction rule for \MCIS{}. The analysis of its effectiveness naturally leads to the compactness parameter: the reduction rule yields a $\CompMax^3$-vertex problem kernel. This improves to a $2\CompMax^2$-vertex problem kernel if one of the input graphs is a proper interval graph.
\item The problem kernel with respect to~$\CompMax$ shows that \MCIS{} is \fp{} tractable with respect to~$\CompMax$. By generalizing \citet{HK06}'s \fp{} algorithm from \JIntSel{} to \MCIS{}, we improve this to a time-$O(2^{\CompMin}\cdot n)$ \fp{} algorithm for the  parameter~$\CompMin\leq \CompMax$. %
\end{noindlist}

\begin{table*}
  \centering
  \caption[Overview of parameterized complexity results]{Overview of parameterized complexity results for \JIntSel{}, where~$G_2$---one of the two input graphs---is a cluster graph, and \MCIS, where $G_2$~is any interval graph.  Results for various graph classes of~$G_1$---the other input graph---are shown.    The complexity dichotomy in \autoref{thm:dichotomy} shows that all these problem variants remain NP-hard.}
  \label{tab:results}\small
  \begin{tabular}{lp{4.4cm}p{4.4cm}}
    \toprule
    Class of~$G_1$&\JIntSel{}&\MCIS{}\\
    \midrule
    interval
    &randomized FPT algorithm:
    $O(5.5^k\cdot n)$ time (\autoref{thm:colorcoding})\newline \newline No polynomial-size kernel
    w.\,r.\,t.\ $k$ and~$\omega$ (\autoref{thm:uig-nopoly}) 
    &  FPT algorithm:
    $O(2^{\CompMin}\cdot n)$~time (\autoref{thm:fpt-2union})\newline\newline problem kernel:
    $\CompMax^3$ vertices in $O(n\log^2n)$~time (\autoref{thm:sigkern})\\
    \addlinespace[1em]
    proper
    
    interval 
    & problem kernel:
    $4k^2\omega$ vertices 
    in $O(n)$~time (\autoref{thm:splitUIG kernel})\newline
    & problem kernel:
    $2\CompMax^2$ vertices in $O(n\log^2n)$~time (\autoref{thm:sigkern})\\
    \bottomrule
  \end{tabular}
\end{table*}

\bigskip\noindent \autoref{tab:results} summarizes our results.
Experiments with random data indicate that, within less than
five minutes, one can optimally solve \JIntSel{} with up to fifteen jobs
and $5\cdot 10^5$ intervals and \MCIS{} with $\CompMin\leq 15$ and
$5\cdot 10^5$~intervals.

\paragraph{Organization of this Work.} %
In \autoref{sec:realpreliminaries}, we introduce basic notation and the concepts of parameterized algorithmics.

 \autoref{sec:preliminaries} introduces the compactness parameter for interval graphs and some basic observations on compactness.  In the remaining sections, we assume to work on $c$-compact representations of interval graphs such that $c$~is minimum. 

\looseness=-1\autoref{sec:colors} presents our colored model of \MCIS{} and \JIntSel{} and discusses pros and cons of the new model. 

\looseness=-1 \autoref{sec:dichotomy} presents a computational complexity dichotomy that has consequences both for \JIntSel{} and \MCIS{}.

\autoref{sec:jisp} presents our results specific to \JIntSel{}, whereas  \autoref{sec:mcis} contains the results for the more general \MCIS{}. 

Finally, we present experimental results in \autoref{sec:exp} and conclude in \autoref{sec:discussion}.

\section{Preliminaries}\label{sec:realpreliminaries}

Throughout the work, we use the notation~$[c]$ as shorthand for the subset~$\{1,2,\dots,c\}$ of natural numbers.

We consider undirected, finite graphs~$G=(V,E)$ with vertex set~$V(G)$
and edge set~$E(G)$. If not stated otherwise, we use $n\coloneqq{}|V|$
and~$m\coloneqq{}|E|$. Two vertices~$v,w\in V$ are \emph{adjacent} or
\emph{neighbors} if~$\{v,w\}\in E$.  The \emph{open
  neighborhood}~$N_G(v)$ of a vertex~$v\in V$ is the set of vertices
that are adjacent to~$v$, the \emph{closed neighborhood}
is~$N_G[v]\coloneqq N_G(v)\cup\{v\}$.  For a vertex set~$U\subseteq V$,
we define~$N_G[U]\coloneqq{}\bigcup_{v\in U} N_G[v]$.
If the graph~$G$ is clear from context, we
drop the subscript~$G$.  For a vertex set~$V'\subseteq V$, the
\emph{induced subgraph}~$G[V']$ is the graph obtained from~$G$ by
deleting all vertices in $V\setminus V'$.

An \emph{independent set} is a set of pairwise non-adjacent vertices.
A \emph{matching} is a set of pairwise disjoint edges.
 The \emph{chromatic index~$\chi'(G)$} of~$G$ is the minimum number of colors required in a \emph{proper edge coloring}, that is, in a coloring of edges of~$G$ such that no pair of edges sharing a vertex has the same color.

A \emph{path} in~$G$ from~$v_1$ to~$v_\ell$ is a sequence~$(v_1,v_2,\dots,v_\ell)\in V^{\ell}$ of vertices with~$\{v_i,v_{i+1}\}\in E$ for~$i\in[\ell-1]$.
Its \emph{length} is ${\ell-1}$. We denote a path on~$\ell$ vertices by~$P_{\ell}$. Two vertices~$v$~and~$w$ are \emph{connected} in~$G$ if there is a path from~$v$ to~$w$ in~$G$. A \emph{connected component} of~$G$ is a maximal set of pairwise connected vertices. If in each connected component of~$G$, all its vertices are pairwise adjacent (that is, they form a \emph{clique}), then we call $G$~a \emph{cluster graph}. Equivalently, a graph is a cluster graph if and only if it does not contain a~$P_3$ as induced subgraph.

The \emph{disjoint union} of two graphs~$G_1=(V_1,E_1)$ and $G_2=(V_2,E_2)$ is the graph~$G_1\uplus G_2=(V_1\uplus V_2, E_1\uplus E_2)$, where  $V_1\cap V_2=\emptyset$. The \emph{edge-wise union} of two graphs~$G_1=(V,E_1)$ and~$G_2=(V,E_2)$ on the same vertex set is $G_1\cup G_2=(V,E_1\cup E_2)$. A~class of graphs~$\mathcal C$ is \emph{closed under induced subgraphs} if $G=(V,E)\in\mathcal C$ implies $G[V']\in\mathcal C$ for any $V'\subseteq V$. A~class of graphs~$\mathcal C$ is \emph{closed under disjoint unions} if $G_1,G_2\in\mathcal C$ implies $G_1\uplus G_2\in C$.

\bigskip\noindent An \emph{interval graph} is a graph whose vertices can be represented as closed %
intervals on the real line such that two vertices $v$~and~$w$ are adjacent if and only if the intervals corresponding to $v$~and~$w$ intersect. We denote the start point of the interval associated with~$v$ by~$v_s$ and its end point by~$v_e$.
 A~graph is a \emph{proper interval} graph if it allows for an interval representation such that for no two intervals~$v$ and~$w$ it holds that $v\subsetneq w$.
Equivalently, proper interval graphs are precisely those interval graphs that do not contain a~$K_{1,3}$ as induced subgraph~\citep{BLS99}.

\paragraph{Fixed-Parameter Algorithms.} The main idea in \fp{} algorithms is to accept exponential running time, which is seemingly inevitable in solving NP-hard problems, but to restrict it to one aspect of the problem, the \emph{parameter}. More precisely, a problem $\Pi$ is \emph{\fp{} tractable~(FPT)} with respect to a parameter~$k$ if there is an algorithm solving any instance of $\Pi$ with size~$n$ in $f(k) \cdot \poly(n)$~time for some computable function~$f$ \citep{DF13,FG06,Nie06}.  Such an algorithm is potentially efficient for small values of~$k$.

\paragraph{Problem Kernelization.} One way of deriving \fp{} algorithms is \emph{problem kernelization}  \citep{GN07,Kra14}. As a formal approach of describing efficient data reduction that preserves optimal solutions, problem kernelization is a powerful tool for attacking NP-hard problems.
A \emph{kernelization algorithm} consists of polynomial-time executable \emph{data reduction rules} that, applied to any instance $x$ with parameter~$k$, yield an equivalent instance $x'$ with parameter~$k'$,
such that both the size $|x'|$ and $k'$ are bounded by some functions $g$ and $g'$ in~$k$, respectively. The function~$g$ is referred to as the \emph{size} of the \emph{problem kernel}~$(x',k')$.  Mostly, the focus lies on finding problem kernels of polynomial size.

\section{Compact Interval Graphs}\label{sec:preliminaries}

A parameter that we will see to highly influence the computational complexity of \MCIS{} is the ``compactness'' of an interval graph, which corresponds to the number of distinct numbers required in its  interval representation. %

\begin{definition}\label{def:compactness}
  An interval representation is \emph{$c$-compact} if the start and end point of each interval lies in~$[c]$. Moreover, the intervals are required to be sorted by increasing start points.

  An interval graph is \emph{$c$-compact} if it admits a $c$-compact interval representation.
\end{definition}

\noindent In this work, we state all running times under the assumption that a \emph{$c$-compact interval representation} for minimum~$c$ is given. We show in the following that such a representation can be efficiently computed. To this end, we make a few observations.
\begin{observation}\label{obs:cliques=compactness}
  Let $G$ be an interval graph and $c$~be the minimum integer such that $G$~is $c$-compact. Then $G$ has exactly~$c$~maximal cliques.
\end{observation}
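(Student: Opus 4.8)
The plan is to prove the two inequalities $c \le (\text{number of maximal cliques})$ and $c \ge (\text{number of maximal cliques})$ separately, exploiting the well-known consecutive-arrangement characterization of interval graphs via their maximal cliques. Recall that in any interval graph $G$ with maximal cliques $Q_1, \dots, Q_t$, there is an ordering of these cliques such that for every vertex $v$, the cliques containing $v$ occur consecutively (the Helly/PQ-tree property); conversely, laying the cliques out at integer points $1, \dots, t$ and assigning to each vertex $v$ the interval $[a_v, b_v]$, where $a_v$ (resp.\ $b_v$) is the smallest (resp.\ largest) index of a clique containing $v$, yields a valid interval representation that uses only the $t$ distinct numbers $1, \dots, t$. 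This immediately gives $c \le t$.

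For the reverse inequality $c \ge t$, I would argue that any $c$-compact representation must ``see'' all $t$ maximal cliques among its at most $c$ distinct coordinates. Fix a $c$-compact representation with coordinate set $S \subseteq [c]$, $|S| \le c$. For a point $x$ on the real line, let $K_x$ be the set of vertices whose interval contains $x$; this is a clique of $G$. As $x$ sweeps from left to right, $K_x$ only changes at coordinates in $S$, so there are at most $|S|$ distinct inclusion-maximal ``clique regions''. The key claim is that every maximal clique of $G$ equals $K_x$ for some $x$: indeed, by the Helly property for intervals, the intervals of the vertices in a maximal clique $Q$ have a common point $x$, so $Q \subseteq K_x$, and by maximality of $Q$ (and the fact that $K_x$ is itself a clique) we get $Q = K_x$. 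Distinct maximal cliques thus give distinct sets $K_x$, and each such set is constant on a maximal interval between consecutive coordinates of $S$; counting the ``vertex-containing'' regions between consecutive points of $S \cup \{\pm\infty\}$ bounds the number of distinct non-empty $K_x$ by $|S| \le c$, hence $t \le c$.

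The main obstacle — really the only delicate point — is making the counting in the second part tight enough to get $t \le c$ rather than something like $t \le 2c$ or $t \le c-1$ off-by-one slack. Between $|S|$ sorted coordinates $s_1 < s_2 < \dots < s_{|S|}$ there are $|S|+1$ open gaps plus the $|S|$ points themselves, a priori $2|S|+1$ regions; I need to observe that on any half-open interval $[s_i, s_{i+1})$ the set $K_x$ is constant (since no interval starts or ends strictly inside), and that the gaps before $s_1$ and after $s_{|S|}$ contribute $K_x = \emptyset$, which is never a maximal clique (assuming $G$ has at least one vertex; the empty graph is a trivial edge case to handle separately or exclude). This collapses the count to exactly $|S| \le c$ distinct candidate cliques, covering all $t$ maximal cliques, giving $t \le c$. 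Combined with $c \le t$ from the construction, we conclude $c = t$, i.e.\ $G$ has exactly $c$ maximal cliques. I would also remark that this argument implicitly uses that the coordinates can be taken to be integers in $[c]$ without loss, which is immediate since only the relative order of endpoints matters for the represented graph.
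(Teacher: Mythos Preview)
Your approach is essentially the same as the paper's: both directions rest on the same standard facts (the Fulkerson--Gross consecutive clique ordering for $c \le t$, and the observation that maximal cliques arise at coordinates of the representation for $t \le c$).

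There is one small slip in your $t \le c$ argument: $K_x$ is \emph{not} constant on the half-open interval $[s_i, s_{i+1})$. An interval ending exactly at $s_i$ belongs to $K_{s_i}$ but drops out for any $x \in (s_i, s_{i+1})$, even though no interval starts or ends strictly inside. The easy fix is to observe that $K_x$ is constant on the \emph{open} gap $(s_i, s_{i+1})$ and there satisfies $K_x \subseteq K_{s_i}$ (any closed interval containing a point strictly between $s_i$ and $s_{i+1}$ must contain $s_i$), so the only candidates for maximal cliques are the $|S| \le c$ sets $K_{s_1}, \dots, K_{s_{|S|}}$, and your count goes through. The paper handles this direction more directly by simply noting that each of the at most~$c$ interval end points gives rise to at most one maximal clique (the set of intervals containing it), bypassing the sweep bookkeeping altogether.
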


\begin{proof}
  Let $c'$ be the number of maximal cliques in~$G$. We show $c=c'$ by proving $c'\leq c$ and $c\leq c'$ independently.

  First, it is easy to see that a $c$-compact interval graph has at most $c$~maximal cliques: each interval end point~$v_e$ gives rise to at most one maximal clique, which consists of the intervals containing the point~$v_e$. Hence, $c'\leq c$.

  Second, the interval graph~$G$ allows for an ordering of its~$c'$~maximal cliques such that the cliques containing an arbitrary vertex occur consecutively in the ordering \citep{journal-FG65}. Hence, a $c'$-compact interval representation can be constructed in which each vertex~$v$ is represented by the interval $[v_s,v_e]$, where $v_s$ is the number of the first maximal clique containing~$v$ and $v_e$ is the number of the last maximal clique containing~$v$. It follows that $c\leq c'$. 
\end{proof}

\noindent From \autoref{obs:cliques=compactness}, it immediately follows that

\begin{observation}
  An $n$-vertex interval graph is $n$-compact.
\end{observation}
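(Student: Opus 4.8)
The plan is to derive this as an immediate consequence of \autoref{obs:cliques=compactness}. By that observation, the minimum~$c$ for which an interval graph~$G$ is $c$-compact equals the number of maximal cliques of~$G$. Hence it suffices to argue that an $n$-vertex interval graph has at most~$n$ maximal cliques, and then invoke \autoref{obs:cliques=compactness} together with \autoref{def:compactness} (which only requires the start and end points to lie in~$[c]$, so any $c$-compact graph is also $c'$-compact for $c'\geq c$).

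To bound the number of maximal cliques, I would use a sweep-line argument on a fixed interval representation of~$G$. Order the vertices~$v^{(1)},\dots,v^{(n)}$ by increasing end point. For each maximal clique~$C$ of~$G$, consider the vertex of~$C$ whose end point is smallest; I claim distinct maximal cliques are charged to distinct such vertices. Indeed, if $v$~is the vertex of smallest end point~$v_e$ in a maximal clique~$C$, then $C$~is exactly the set of intervals containing the point~$v_e$: every interval of~$C$ must contain~$v_e$ (it intersects~$v$ and starts no later than, but ends no earlier than, $v_e$), and maximality forces $C$~to contain \emph{all} intervals through~$v_e$. Thus $C$~is uniquely determined by~$v$, so the number of maximal cliques is at most~$n$. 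Alternatively, one can simply cite that interval graphs are chordal and that chordal graphs on~$n$ vertices have at most~$n$ maximal cliques~\citep{BLS99}.

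I do not expect any genuine obstacle here: the statement is a one-line corollary once the "at most~$n$ maximal cliques'' fact is in place, and that fact is standard. The only point requiring a line of care is the observation that \autoref{obs:cliques=compactness} gives the \emph{exact} number of numbers needed for a minimum representation, so the trivial monotonicity remark ("fewer numbers needed~$\Rightarrow$ also representable with~$n$ numbers'') should be stated explicitly to conclude $n$-compactness rather than merely "$c$-compactness for some $c\le n$.''
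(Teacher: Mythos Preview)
Your proposal is correct and follows exactly the paper's route: the paper simply states that the observation ``immediately follows'' from \autoref{obs:cliques=compactness}, implicitly using that an $n$-vertex interval graph has at most~$n$ maximal cliques, which you make explicit via the sweep-line/chordality argument. Your added monotonicity remark is appropriate and the argument is sound.
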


\noindent\looseness=-1 In the remainder of this article, we will assume to be given a $c$-compact representation for \emph{minimum}~$c$. This assumption is justified by the fact that such a $c$-compact representation is computable in $O(n\log n)$~time from an arbitrary interval representation or even in linear time from a graph given as adjacency list.

\begin{observation}\label{obs:shrinkints}
  Any interval representation of an interval graph~$G$ can be converted
  into a $c$-compact representation for~$G$ in $O(n\log n)$~time such that
  \begin{enumerate}[i)]
  \item\label{shrinkints2} at each position in~$[c]$, there is an
    interval start point \emph{and} an interval end point, and
  \item\label{shrinkints1} $c$~is the minimum number such that~$G$ is
    $c$-compact.
  \end{enumerate}
\end{observation}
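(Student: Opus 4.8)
The plan is to reduce the task to the clique-based construction already used in the proof of Observation~\ref{obs:cliques=compactness}, and then argue that this construction can be carried out in $O(n\log n)$ time and automatically satisfies property~\ref{shrinkints2}. First I would sort the $2n$ endpoints of the given interval representation; ties can be broken so that, at a common coordinate, all start points precede all end points, which ensures that the set of maximal cliques is read off correctly by a single sweep. Sweeping left to right while maintaining the set of currently open intervals, a new maximal clique is detected exactly at each coordinate where the sweep is about to close an interval after having opened at least one; equivalently, at each ``end point immediately preceded (in the sorted order) by a start point.'' This yields the consecutive-cliques ordering $Q_1,\dots,Q_{c'}$ of the $c'$ maximal cliques of $G$, and by Observation~\ref{obs:cliques=compactness} we have $c'=c$, the minimum compactness.

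Next I would define the new representation by setting, for each vertex $v$, its start point $v_s$ to the index of the first clique $Q_i$ containing $v$ and its end point $v_e$ to the index of the last such clique; since the cliques containing $v$ are consecutive, $[v_s,v_e]$ is a genuine interval, and two vertices share a clique $Q_i$ iff their new intervals both contain $i$, so the represented graph is still $G$. During the sweep, $v_s$ is recorded when $v$ is opened (as the index of the next clique to be created) and $v_e$ is recorded when $v$ is closed (as the index of the most recently created clique), so all $2n$ coordinates are produced in $O(n)$ additional time after the sort. Finally, sorting the vertices by increasing $v_s$ (a bucket sort over $[c]$, hence linear) restores the ordering required by Definition~\ref{def:compactness}. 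The overall running time is dominated by the initial sort, giving $O(n\log n)$; if instead the graph is given as an adjacency list, a linear-time interval-recognition algorithm (e.g.\ via PQ-trees or Lex-BFS) produces the clique ordering directly, yielding the claimed linear time.

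It remains to verify property~\ref{shrinkints2}: every position $i\in[c]$ carries both a start point and an end point. Position $i$ corresponds to the maximal clique $Q_i$. Pick any vertex $v\in Q_i$. Among all vertices of $Q_i$, choose one, say $u$, whose first clique index is largest; I claim $u_s=i$. Indeed $u_s\le i$ since $u\in Q_i$; if $u_s<i$ then $u\in Q_{u_s}$, and because $Q_{u_s}$ is a maximal clique distinct from $Q_i$ there is a vertex $w\in Q_{u_s}\setminus Q_i$, which by consecutiveness has $w_e<i$, while every vertex of $Q_i$ lies in $Q_{u_s}\cap Q_i$ — a contradiction with maximality of $Q_{u_s}$ unless some $Q_i$-vertex starts strictly after $u$, contradicting the choice of $u$. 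A symmetric argument using the vertex of $Q_i$ with smallest last clique index shows some vertex has end point exactly $i$. Hence position $i$ contains both a start and an end point, completing the proof.

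The step I expect to be the main obstacle is the bookkeeping in this last paragraph: one must argue cleanly that the minimality of $c$ forces every clique index to be ``witnessed'' by both a start and an end point, rather than merely that the construction produces a valid $c$-compact representation. The cleanest route is probably to observe directly that if some position $i$ lacked, say, a start point, then no vertex would have $v_s=i$, so the cliques $Q_{i-1}$ and $Q_i$ would have the same vertex set (every vertex open at $i-1$ is still open at $i$), contradicting that they are distinct maximal cliques; and symmetrically for a missing end point. This two-line argument is preferable to the maximality shuffling above, so in the final write-up I would use it instead.
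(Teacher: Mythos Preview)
Your proposal is correct and takes a genuinely different route from the paper. The paper proceeds \emph{algorithmically}: it sorts the event points and then greedily shifts each one to the smallest integer position preserving all pairwise intersections; it then argues property~(\ref{shrinkints2}) directly from the greedy rule (any position lacking a start or end point would allow a further shift), and derives~(\ref{shrinkints1}) by observing that each resulting position witnesses a distinct maximal clique and invoking \autoref{obs:cliques=compactness}. You instead go \emph{structurally}: you extract the maximal-clique sequence $Q_1,\dots,Q_{c}$ by a sweep, re-use the clique-index representation from the proof of \autoref{obs:cliques=compactness}, and get~(\ref{shrinkints1}) for free; property~(\ref{shrinkints2}) then needs its own argument. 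Your approach makes the link to \autoref{obs:cliques=compactness} more transparent and avoids reasoning about ``what the greedy shift would have done''; the paper's approach has the virtue of describing a concrete in-place compression that a reader could implement verbatim.

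Two small points. First, in your preferred ``two-line'' argument the parenthetical has the containment backwards: if position~$i$ lacks a start point, what follows is that every vertex open at~$i$ was already open at~$i-1$, i.e.\ $Q_i\subseteq Q_{i-1}$, not the other direction; this still contradicts maximality of~$Q_i$ (and handles $i=1$ trivially since some vertex lies in $Q_1$), so the conclusion stands. Second, the remark about adjacency-list input and linear-time recognition is really the content of the next observation in the paper, not this one, so you may want to drop it here.
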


\begin{proof}
  We first sort all \emph{event points} (start or end points of intervals) in increasing order in $O(n\log n)$~time. Then, in linear time, we iterate over all event points in increasing order and move each event point to the smallest possible integer position that maintains all pairwise intersections.  It remains to show (i) and (ii).

  (\ref{shrinkints2}) First observe that every interval start point~$v_s$ is also an end point for some interval: otherwise, we would have moved the event point~$v'_e$ (possibly, $v'_e=v_e$) that directly follows $v_s$ to the position~$v'_e-1$, maintaining all pairwise intersections. It follows that $G$~is $c'$-compact, where $c'$ is the number of different end point positions.

  Second, every interval end point~$v_e$ is also a start point for some interval: otherwise, we could have moved~$v_e$ to the position~$v_e-1$ maintaining all pairwise intersections. It follows that each end point~$v_e$ gives rise to a distinct maximal clique, because the interval starting at~$v_e$ cannot be part of the maximal cliques raised by earlier end points.

  (\ref{shrinkints1}) From~(\ref{shrinkints2}), it follows that, for
  any two positions~$i,j$ of event points, the set of intervals
  containing~$i$ and the set of intervals containing~$j$ are distinct
  and, therefore, $i$~and~$j$ give rise to distinct maximal cliques
  in~$G$. Thus, our algorithm computes a $c'$-compact representation
  of~$G$ with $c'\leq c$, where $c$ is the number of maximal cliques
  in~$G$. From \autoref{obs:cliques=compactness}, it follows that $c'$
  is the minimum number such that $G$~is $c'$-compact.
\end{proof}

\noindent If the input graph is given in form of an adjacency list, we can compute a $c$-compact representation for minimum~$c$ in linear time.

\begin{observation}
  Given an interval graph~$G$ as adjacency list, a $c$-compact representation for minimum~$c$ can be computed in $O(n+m)$~time.
\end{observation}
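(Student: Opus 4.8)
The statement to prove is that, given an interval graph $G$ as an adjacency list, a $c$-compact representation for minimum $c$ can be computed in $O(n+m)$ time. The natural strategy is to bypass the sorting step used in \autoref{obs:shrinkints} (which cost $O(n\log n)$) by exploiting the combinatorial structure revealed in \autoref{obs:cliques=compactness}: the minimum $c$ equals the number of maximal cliques of $G$, and a $c$-compact representation with this minimum $c$ can be read off from a linear ordering of the maximal cliques in which, for every vertex, the cliques containing it appear consecutively (a so-called consecutive-clique arrangement or clique path). So the whole task reduces to: (1)~recognize that $G$ is an interval graph and produce such a clique ordering in linear time, and (2)~convert that ordering into the desired interval representation, also in linear time.

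For step~(1) I would invoke a known linear-time interval graph recognition algorithm that simultaneously produces the required output. The classical reference is the Booth–Lueker PQ-tree algorithm, which recognizes interval graphs and computes a valid ordering of the maximal cliques (equivalently, a PQ-tree whose frontier gives such an ordering) in $O(n+m)$ time; alternatively one can cite the Lex-BFS based certifying algorithms (e.g.\ Habib, McConnell, Paul, Viennot, or Corneil's work) that produce a clique path in linear time. From such an ordering $Q_1,\dots,Q_{c}$ of the maximal cliques, for each vertex $v$ let $v_s$ be the index of the first $Q_i$ containing $v$ and $v_e$ the index of the last; this is exactly the construction already used in the proof of \autoref{obs:cliques=compactness}, and it is $c$-compact with $c$ the number of maximal cliques, which \autoref{obs:cliques=compactness} tells us is the minimum possible. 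For step~(2) I need to verify that extracting the numbers $v_s, v_e$ for all $v$, and then sorting the intervals by start point as required by \autoref{def:compactness}, costs only $O(n+m)$: the first/last clique indices can be recorded while scanning the clique list (whose total size is $O(n+m)$, since each vertex appears in a contiguous block of cliques and the cliques form a path, so $\sum_i |Q_i| = O(n+m)$), and the final sort-by-start-point is a bucket/counting sort over the key range $[c]\subseteq[n]$, hence linear.

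The main obstacle, and the point that most needs care, is making sure the linear-time recognition routine actually hands us what we need in the right form and that nothing in the post-processing secretly costs $\omega(n+m)$. Specifically: the clique path representation has total description size $\Theta(n+m)$ in the worst case, so we must only ever scan it a constant number of times and never, e.g., sort the cliques or the vertices by a superlinear method; the replacement of the sort in \autoref{obs:shrinkints} by counting sort is the key trick. A secondary subtlety is the edge case where $G$ is disconnected: the clique-path machinery applies per connected component, and concatenating the per-component orderings is fine since \autoref{obs:cliques=compactness}'s count (number of maximal cliques) is additive over components, so minimality is preserved. I would then simply assemble these pieces: recognize and get a clique ordering in $O(n+m)$ by the cited algorithm, derive $(v_s,v_e)$ for all $v$ in one linear scan, counting-sort the intervals by start point, and observe by \autoref{obs:cliques=compactness} that the resulting $c$ is minimum. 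The proof is thus short, with essentially all the real content deferred to the cited linear-time interval recognition result.
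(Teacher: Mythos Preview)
Your proposal is correct and follows essentially the same approach as the paper: invoke a known linear-time interval-graph algorithm as a black box and replace the $O(n\log n)$ sort by counting sort over a linear-size key range. The only minor difference is that the paper first obtains an $n$-compact interval representation (citing Corneil, Olariu, and Stewart) and then reruns the shrinking procedure of \autoref{obs:shrinkints} with counting sort, whereas you read the minimum-$c$ representation directly off a clique path via the construction in the proof of \autoref{obs:cliques=compactness}; both routes are valid and equally short.
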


\begin{proof}
  Using a linear-time algorithm by \citet[Section~8]{COS09}, we obtain an $n$-compact interval representation of~$G$. Using this, we can execute the algorithm in the proof of \autoref{obs:shrinkints} in linear time, since the list of sorted event points can be obtained in linear time using counting sort: the list has $n$~elements and the sorting keys are integers not exceeding~$n$.
\end{proof}

\section{Colorful Independent Sets}\label{sec:colors}
Many of our results significantly benefit from a novel but
natural embedding of \MCIS{} into a more general problem: \MColIS{}. We discuss this embedding in the following.

\subsection{Colorful Independent Sets and Job Interval Selection}
The first step in formalizing \MCIS{} as \MColIS{} is an
alternative formulation of
the classical scheduling problem \JIntSel{}.

The task in \JIntSel{} is to execute a maximum number of jobs out of a given set, where each job has multiple possible execution intervals, each job is executed at most once, and a machine can only execute one job at a time. We formally state this problem in terms of colored interval graphs, where the colors correspond to jobs and intervals of one color correspond to multiple possible execution times of the same job:

\decprob{\JIntSel{}}{An interval graph~$G=(V,E)$, a coloring $\col\colon V\rightarrow[\cols]$, and a natural number~$k$.}{Is there a size-$k$ colorful independent set in~$G$?}

\looseness=-1\noindent Here, \emph{colorful} means that no two vertices of the independent set
have the same color. 

Note that the colored formulation of \JIntSel{} is indeed equivalent to the known formulation \citep{Spi99} as special case of \MCIS{}, where one input interval graph is a cluster graph:

\decprobNN{\JIntSel{}$\ast$}{An interval graph~$G_1=(V,E_1)$, a cluster graph $G_2=(V,E_2)$, and a natural number~$k$.}{Is there a size-$k$ independent set in $G=(V,E_1\cup E_2)$?}

\noindent In this second formulation, the maximal cliques of $G_2$ correspond to jobs, and the intervals in $G_1$ that are part of the same maximal clique in $G_2$ correspond to multiple possible execution times of the same job. That is, the maximal cliques in $G_2$ one-to-one correspond to the colors in the colorful problem formulation.

\medskip\noindent In \autoref{sec:DP}, we restate the \fp{} algorithms for \JIntSel{} by \citet{HK06} in terms of our colorful formulation. This formulation uncouples the algorithms from the geometric arguments originally used by \citet{HK06} and allows for a more combinatorial point of view. Exploiting this, we turn \citet{HK06}'s \fp{} algorithms for the total number of jobs (which translates to the number~$\gamma$ of colors in our formulation) into a \fp{} algorithm for the smaller parameter~$k$---the number of jobs we want to execute.

\subsection{From Strip Graphs to 2-Union Graphs}\label{sec:mcolisdef}
Our  more combinatorially stated version of \citet{HK06}'s \fp{} algorithm easily applies to \MColIS{}, which is a canonical generalization of \JIntSel{}:

\decprob{\MColIS{}}{An interval graph~$G=(V,E)$, %
      a \emph{list-color\-ing} $\col\colon V\rightarrow2^{[\cols]}$, and a natural number~$k$.}{Is there a size-$k$ colorful independent set in~$G$?}

\noindent Here, \emph{colorful} means that the 
intersection of the color sets of any two vertices in the independent set
is empty. 

We will later show that \MColIS{} is actually even more general than \MCIS{}. The colored reformulation turned out to be the key in generalizing \citet{HK06}'s algorithm for \JIntSel{} to \MCIS{}.

\subsection{Advantages and Limitations of the Model}
The colorful formulation of \JIntSel{} helps us to transform
\citet{HK06}'s \fp{} algorithm for the parameter
``number~$\gamma$ of colors'' into a \fp{} algorithm
for the parameter ``size~$k$ of the sought colorful
independent set'', where $\gamma\leq k$.  Moreover, the algorithm for the colorful formulation
of \JIntSel{} easily generalizes to \MColIS{} and, as we will see, to
\MCIS{}.

The advantage of considering \MColIS{} instead of \MCIS{} is that one
can concentrate on a single given interval graph instead of two merged
ones, thus making the numerous structural results on interval graphs
applicable.  Possibly, the colorful view on finding independent sets and
scheduling might be useful in further studies.

Not always, however, the colorful viewpoint is superior to the geometric one. Herein, it is important to note that \MColIS{} is actually a more general problem than \MCIS{} and it is cumbersome to formulate precisely \MCIS{} in terms of \MColIS{}. Thus, when exploiting the specific combinatorial properties of \MCIS{}, for example in the kernelization algorithm in \autoref{sec:independentsetin2uniographs}, the colored model is not helpful.

Moreover, in the following \autoref{sec:cindexdic}, we prove hardness
results for finding independent sets not only on 2-union and strip
graphs.  Hence, the colorful model is not exploited
there. %

\section{A Complexity Dichotomy}\label{sec:cindexdic}
\label{sec:dichotomy}
In this section, we determine the computational complexity of  \textsc{Independent Set} on edge-wise unions of graphs in dependence of the allowed input graph classes. Formally, we define the considered problem as follows:
\newcommand{\CIS}{\textsc{Common Independent Set}}

\decprob{\CIS{}}{Two graphs~$G_1=(V,E_1),G_2=(V,E_2)$, and a natural number~$k$.}{Is there a size-$k$ independent set in $G=(V,E_1\cup E_2)$?}

\noindent Note that \CIS{} contains \MCIS{} as special case since the
only difference is that it does not restrict the two input graphs to be
interval graphs.

If we assume that the input graphs $G_1$~and~$G_2$ are members in a graph class that is closed under induced subgraphs and disjoint unions (as, for example, the widely studied chordal graphs and, in particular, interval graphs, cluster graphs, and forests \citep{BLS99}), we can use the main result of this section to precisely state for which classes \CIS{} is NP-hard and for which it is polynomial-time solvable, thus giving a \emph{complexity dichotomy} of the problem. 

\bigskip\noindent Now, we first precisely state the dichotomy theorem. Since it is quite technical, we immediately illustrate it by some examples in form of implications to the complexity of \JIntSel{} and \MCIS{}. We conclude the section with the proof of the theorem.
\begin{theorem}
\label{thm:dichotomy}
  Let $\mathcal C_1$ and $\mathcal C_2$ be graph classes such that
  \begin{itemize}
  \item $\mathcal C_1$ and $\mathcal C_2$ are closed under disjoint unions and induced subgraphs, and
  \item $\mathcal C_1$ and $\mathcal C_2$ each contain at least one graph that has an edge.
  \end{itemize}
Then, \CIS{} restricted to input graphs $G_1\in\mathcal C_1$ and $G_2\in\mathcal C_2$
  \begin{enumerate}[i)]
  \item is solvable in $O(n^{1.5})$~time if $\mathcal C_1$ and $\mathcal C_2$ only contain cluster graphs, and
  \item NP-hard otherwise.
  \end{enumerate}
\end{theorem}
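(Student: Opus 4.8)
The plan is to prove the two parts separately.

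\emph{Part (i).} I would reduce the problem to bipartite maximum matching. Given two cluster graphs $G_1$ and $G_2$ on a common vertex set $V$, let $\mathcal A$ be the set of connected components of $G_1$ (equivalently, its maximal cliques) and $\mathcal B$ that of $G_2$; both partitions of $V$ are obtained in $O(n)$ time (scan one vertex's neighbourhood per clique). Form the bipartite graph $\mathcal H$ with color classes $\mathcal A$ and $\mathcal B$ that contains, for each $v\in V$, one edge joining the clique of $\mathcal A$ containing $v$ to the clique of $\mathcal B$ containing $v$ (parallel edges discarded). A vertex set of $G=G_1\cup G_2$ is independent if and only if it meets each clique in $\mathcal A\cup\mathcal B$ at most once; hence $G$ has an independent set of size $k$ if and only if $\mathcal H$ has a matching of size $k$, and a matching of $\mathcal H$ translates back by picking one witness vertex per matched edge. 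Since $\mathcal H$ has at most $2n$ vertices and at most $n$ edges, the Hopcroft--Karp algorithm finds a maximum matching in $O(n^{1.5})$ time.

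\emph{Part (ii).} By symmetry I may assume that $\mathcal C_1$ contains a non-cluster graph. Since the cluster graphs are exactly the $P_3$-free graphs and $\mathcal C_1$ is closed under induced subgraphs, $P_3\in\mathcal C_1$, and closure under disjoint unions puts every vertex-disjoint union of $P_3$'s into $\mathcal C_1$; likewise $K_2\in\mathcal C_2$, so every disjoint union of $K_2$'s---that is, every graph that is a perfect matching---lies in $\mathcal C_2$. It therefore suffices to show that \textsc{Independent Set} is already \NP-hard on graphs $G=G_1\cup G_2$ in which $G_1$ is a vertex-disjoint union of $P_3$'s and $G_2$ is a perfect matching. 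I would prove this by a reduction from \textsc{Independent Set} on cubic bridgeless graphs, which remains \NP-hard (e.g.\ \textsc{Independent Set} is \NP-hard on $3$-connected cubic planar graphs, and $3$-connected graphs have no bridge). Let $H$ be such a graph on $n$ vertices. By Petersen's theorem $H$ has a perfect matching $M$, found in polynomial time. Let $\hat H$ arise from $H$ by subdividing every edge \emph{not} in $M$ exactly twice. Two claims finish the argument. First, $\alpha(\hat H)=\alpha(H)+|E(H)\setminus M|$: subdividing a single edge twice raises the independence number by exactly one (an independent set of the smaller graph extends by one of the two new vertices; conversely a maximum independent set of the subdivided graph either already restricts to an independent set of the original graph or becomes one after deleting a single endpoint of that edge), and one iterates over the $|E(H)\setminus M|$ subdivided edges. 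Second, $\hat H=G_1\cup G_2$ with $G_1$ a vertex-disjoint union of $P_3$'s and $G_2$ a perfect matching: into $G_1$ put, for each $v\in V(H)$, the $P_3$ on $v$ and the two subdivision vertices adjacent to $v$ on the two non-$M$ edges at $v$; into $G_2$ put all edges of $M$ together with the middle edge of every subdivided edge. One checks that these $P_3$'s are vertex-disjoint and exhaust $V(\hat H)$, that $G_2$ is a perfect matching, and that $E(G_1)$ and $E(G_2)$ partition $E(\hat H)$. Hence $H$ has an independent set of size $\kappa$ if and only if $\hat H$ has one of size $\kappa+|E(H)\setminus M|$, the desired reduction.

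\emph{Main obstacle.} Part~(i) is routine once the matching reformulation is in place. The delicate part of~(ii) is the gadget: one must subdivide \emph{exactly} the non-matching edges, because every degree-$3$ vertex of $H$ is forced to become the center of a $P_3$ of $G_1$ and hence must route its single remaining incidence through its $M$-edge as a $G_2$-edge---this is precisely why a \emph{perfect} matching of $H$ is needed, and therefore why bridgelessness and Petersen's theorem are invoked. The accompanying technical burden is verifying that the independence number shifts by \emph{exactly} $|E(H)\setminus M|$, so that the construction is a correct many--one reduction; a comparatively minor point is extracting the clique partitions of the two cluster graphs fast enough for the $O(n^{1.5})$ bound in part~(i).
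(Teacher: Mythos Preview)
Your Part~(i) is essentially the paper's argument: both reduce to bipartite maximum matching between the clique families and invoke Hopcroft--Karp on a graph with at most $n$ edges.

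Your Part~(ii) is correct but genuinely different from the paper's. The paper reduces directly from \textsc{3-SAT}, building variable-cycle gadgets and clause gadgets, and then subdivides the clause-gadget edges twice so that the second graph consists only of isolated edges and vertices (the same ``subdivide twice, $\alpha$ goes up by one'' fact you use). You instead start from \textsc{Independent Set} on bridgeless cubic graphs, invoke Petersen's theorem to get a perfect matching~$M$, and subdivide the non-$M$ edges; the cubic structure then forces the clean $P_3$-union/perfect-matching decomposition. Your route is shorter and conceptually tidier for the dichotomy itself, but it leans on two external ingredients (Petersen's theorem and a specific hardness result for bridgeless cubic graphs) where the paper is self-contained from \textsc{3-SAT}. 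More importantly, the paper's \textsc{3-SAT} construction does double duty elsewhere: it yields an instance with $k$ proportional to the number of clauses (used for the ETH-based $2^{o(k)}$ lower bound) and, in its unsubdivided form with $G_2$ having exactly $k$ triangle/edge components, supplies the structure needed for the cross-composition that rules out polynomial kernels. Your reduction does not deliver these by-products.
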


\noindent 
We illustrate \autoref{thm:dichotomy} using some examples.
First, choose $\mathcal C_1=\mathcal C_2$ to be the class of interval graphs that have maximum degree~$2$ and maximum clique size~$2$. Since~$P_3$ is an interval graph but not a cluster graph in $\mathcal C_1$, we obtain the following NP-hardness result for \MCIS{}:

\begin{corollary}\label{cor:mcis-veryhard}
  \MCIS{} is NP-hard even if both of the following hold:
  \begin{itemize}
  \item the maximum degree of each input interval graphs is~$2$,

  \item the maximum clique size of each input interval graph is~$2$.
  \end{itemize}
\end{corollary}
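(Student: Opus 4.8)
The plan is to obtain \autoref{cor:mcis-veryhard} as a direct instantiation of \autoref{thm:dichotomy}. I would set $\mathcal C_1 = \mathcal C_2 =: \mathcal C$ to be the class of all interval graphs whose maximum vertex degree is at most~$2$ and whose maximum clique size is at most~$2$. Concretely, $\mathcal C$ is exactly the class of linear forests (disjoint unions of paths): an interval graph is chordal, the bound on the clique size forbids induced triangles, and the bound on the degree forbids claws, so the only connected components that survive are paths; conversely, every path is an interval graph and disjoint unions of interval graphs are interval graphs. This explicit characterization is convenient but not strictly needed for what follows.

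The first step is to check that $\mathcal C$ satisfies the two hypotheses of \autoref{thm:dichotomy}. For closure under induced subgraphs, note that deleting vertices from a graph in $\mathcal C$ cannot increase the maximum degree or the maximum clique size, and that interval graphs are closed under taking induced subgraphs; hence the resulting graph is again in $\mathcal C$. For closure under disjoint unions, the disjoint union of two interval graphs is an interval graph (place the two interval representations on disjoint stretches of the real line), and taking a disjoint union affects neither the maximum degree nor the maximum clique size; hence $G_1 \uplus G_2 \in \mathcal C$ whenever $G_1, G_2 \in \mathcal C$. Finally, $\mathcal C$ contains a graph with an edge, namely $P_2$, and $\mathcal C$ is \emph{not} contained in the class of cluster graphs, since $P_3 \in \mathcal C$ and $P_3$ is itself an induced $P_3$.

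With the hypotheses verified, the second step is simply to invoke part~(ii) of \autoref{thm:dichotomy}: because $\mathcal C$ contains a non-cluster graph, \CIS{} restricted to inputs $G_1, G_2 \in \mathcal C$ is NP-hard. Since every graph in $\mathcal C$ is an interval graph, this restricted version of \CIS{} is precisely \MCIS{} restricted to input interval graphs of maximum degree~$2$ and maximum clique size~$2$. Therefore \MCIS{} remains NP-hard under both stated restrictions simultaneously, which is the claim.

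I expect no real obstacle here: the entire difficulty is encapsulated in \autoref{thm:dichotomy}, and for the corollary the only point demanding a moment's care is confirming that membership in the interval graphs, the degree bound, and the clique-size bound are all preserved simultaneously under disjoint unions, and that $\mathcal C$ genuinely escapes the ``cluster graphs only'' case of the dichotomy, for which $P_3$ is the obvious witness.
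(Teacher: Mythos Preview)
Your proposal is correct and follows exactly the paper's own approach: the paper also instantiates \autoref{thm:dichotomy} with $\mathcal C_1=\mathcal C_2$ being the interval graphs of maximum degree~$2$ and maximum clique size~$2$, and invokes the witness~$P_3\in\mathcal C_1$ to land in case~(ii). Your explicit verification of the closure hypotheses is a bit more detailed than the paper's one-line justification, but the argument is the same.
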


\noindent Now, choose $\mathcal C_1$ to be the class of disjoint unions of paths of length at most two and~$\mathcal C_2$ to be the class of cluster graphs of clique size at most two. Then, we have $P_3\in\mathcal C_1$, which is not a cluster graph and, since disjoint unions of paths are interval graphs, we obtain the result that \MCIS{} is NP-hard even if one input interval graph is a cluster graph consisting of cliques of size two and if the other is a disjoint union of paths of length at most two. Transferring this to the colored model as described in \autoref{sec:colors}, we obtain:

\begin{corollary}
  \JIntSel{} is NP-hard even if the input interval graph is a disjoint union of paths of length at most two and contains each color at most two times.
\end{corollary}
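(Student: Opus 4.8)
The plan is to obtain the statement as a direct instantiation of \autoref{thm:dichotomy}, followed by a translation into the colored formulation of \JIntSel{}. Concretely, I would take $\mathcal C_1$ to be the class of disjoint unions of paths each having length at most two (i.e.\ disjoint unions of copies of $P_1$, $P_2$, and $P_3$) and $\mathcal C_2$ to be the class of cluster graphs all of whose cliques have size at most two (i.e.\ disjoint unions of copies of $K_1$ and $K_2$).

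First I would verify that $\mathcal C_1$ and $\mathcal C_2$ satisfy the hypotheses of \autoref{thm:dichotomy}. Both are trivially closed under disjoint unions. For closure under induced subgraphs, the key observation for $\mathcal C_1$ is that deleting vertices from a $P_3$ can only yield shorter paths or isolated vertices, never a longer path, so an induced subgraph of a member of $\mathcal C_1$ is again a disjoint union of paths of length at most two; for $\mathcal C_2$ the corresponding fact is immediate, since deleting vertices from a disjoint union of $K_1$'s and $K_2$'s leaves only $K_1$'s and $K_2$'s. Each class contains a graph with an edge, namely $P_2 = K_2$. Finally, $P_3 \in \mathcal C_1$ is not a cluster graph (it is an induced $P_3$), so the premise of \autoref{thm:dichotomy}(i) fails, and hence \autoref{thm:dichotomy}(ii) applies: \CIS{} restricted to $G_1 \in \mathcal C_1$ and $G_2 \in \mathcal C_2$ is NP-hard. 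Since disjoint unions of paths and cluster graphs are both interval graphs, this is exactly \MCIS{} under these restrictions on $G_1$ and $G_2$.

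It then remains to rephrase this \MCIS{} hardness in the colored language. By the correspondence described after the definition of \JIntSel{}$\ast$ in \autoref{sec:colors}, an \MCIS{} instance whose second graph $G_2$ is a cluster graph is the same thing as a colored \JIntSel{} instance whose interval graph is $G_1$ and whose color classes are the maximal cliques of $G_2$ (each vertex getting the color of the unique maximal clique of $G_2$ containing it). Under this dictionary, ``every clique of $G_2$ has size at most two'' becomes ``every color occurs at most twice'', while ``$G_1$ is a disjoint union of paths of length at most two'' is carried over verbatim. This yields precisely the claimed NP-hardness of \JIntSel{}.

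I do not expect any real obstacle here: the whole argument is bookkeeping on top of \autoref{thm:dichotomy}. The only points that need to be stated with a little care are the closure-under-induced-subgraphs check for $\mathcal C_1$ (one must note explicitly that no induced subgraph of $P_3$ is a longer path) and the explicit clique-to-color translation, so that the two graph-class restrictions map exactly onto the two restrictions asserted for \JIntSel{}.
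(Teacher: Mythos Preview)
Your proposal is correct and takes essentially the same approach as the paper: choose $\mathcal C_1$ to be disjoint unions of paths of length at most two and $\mathcal C_2$ to be cluster graphs with clique size at most two, apply \autoref{thm:dichotomy}(ii), and translate via the colored model from \autoref{sec:colors}. The paper's derivation is terser (it does not spell out the closure checks), but the argument is identical.
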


\noindent It remains to prove \autoref{thm:dichotomy}. %
For the first part of the theorem, \citet{BHNSS06} and \citet{HK06}
mentioned that \CIS{} is \pt{} solvable if both input graphs are cluster
graphs. We prove here an explicit upper bound on the running time.
\begin{lemma}\label{lem:clusolve}
  \CIS{} is solvable in $O(n^{1.5})$ time if both input interval graphs are cluster graphs.
\end{lemma}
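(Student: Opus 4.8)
The key observation is that when both $G_1 = (V,E_1)$ and $G_2 = (V,E_2)$ are cluster graphs, an independent set in $G = (V, E_1 \cup E_2)$ is a set of vertices that is independent in $G_1$ and independent in $G_2$ simultaneously. Since a cluster graph is a disjoint union of cliques, "independent in $G_1$" means taking at most one vertex from each clique (cluster) of $G_1$, and likewise for $G_2$. So the plan is to model this as a bipartite matching problem: create one bipartite-graph vertex for each clique of $G_1$ and one for each clique of $G_2$, and for each original vertex $v \in V$ put an edge between the $G_1$-clique containing $v$ and the $G_2$-clique containing $v$. A set $S \subseteq V$ is independent in $G$ iff the corresponding edge set is a matching in this bipartite graph (no two chosen vertices share a $G_1$-clique or a $G_2$-clique). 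Hence the maximum independent set size in $G$ equals the maximum matching size in the bipartite graph, and we answer "yes" iff this is at least $k$.

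The steps I would carry out: (1) compute the connected components (cliques) of $G_1$ and of $G_2$ — this is linear time, and note we can assume the cluster graphs are given or recognized in linear time since recognizing cluster graphs amounts to checking for an induced $P_3$, or we simply take the components as the clusters; (2) build the bipartite graph $H$ described above, which has at most $n$ vertices on each side and exactly $n$ edges (one per vertex of $V$) — actually parallel edges can be collapsed or just kept, they don't affect matching size beyond multiplicity, and since distinct $v$ give distinct edges only if they lie in distinct clique-pairs, we may merge duplicates; (3) run a maximum bipartite matching algorithm; (4) compare the matching size with $k$. For correctness I would prove both directions of the equivalence between independent sets in $G$ and matchings in $H$ explicitly: given an independent set $S$, the map $v \mapsto (\text{cluster}_{G_1}(v), \text{cluster}_{G_2}(v))$ is injective on clusters in each coordinate, so it yields a matching; conversely a matching yields, for each matched edge, a representative vertex, and these representatives form an independent set in both $G_1$ and $G_2$.

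For the running time, the bipartite graph $H$ has $O(n)$ vertices and $O(n)$ edges (at most $n$, one per element of $V$), so the Hopcroft–Karp algorithm runs in $O(\sqrt{|V(H)|}\cdot |E(H)|) = O(\sqrt{n}\cdot n) = O(n^{1.5})$ time, which matches the claimed bound. The only mild subtlety — and the place I would be most careful — is the reduction bookkeeping: making sure the cluster structure is extracted correctly and in time within budget (handling the case where $G_1$ or $G_2$ is given merely as a graph promised to be a cluster graph versus given with an explicit clique partition), and making sure parallel edges in $H$ are handled so they neither inflate the edge count beyond $O(n)$ nor break the matching correspondence. None of this is deep; the core of the proof is the one-line matching reduction, and the running time then follows immediately from Hopcroft–Karp.
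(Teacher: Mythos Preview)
Your proposal is correct and follows essentially the same approach as the paper: reduce to maximum bipartite matching by putting the cliques of~$G_1$ and~$G_2$ on the two sides and adding an edge for each vertex of~$V$, then invoke Hopcroft--Karp on the resulting $O(n)$-edge bipartite graph. The paper's proof is the same reduction with the same correctness argument and running-time analysis.
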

\begin{proof}
  Let~$G_1=(V,E_1),G_2=(V,E_2)$ be cluster graphs on $n$~vertices, and let~$G=(V,E_1\cup E_2)$. %
  Let $C_i$ denote the set of connected components (cliques) of~$G_i$ for $i \in\{1,2\}$.
  Define a bipartite graph~$H=(C_1\uplus C_2, E_H)$ with edge set~$E_H$ such that there is an edge~$\{U_1,U_2\}\in E_H$ if and only if there is a vertex~$v\in V$ that occurs in clique~$U_1$ of~$G_1$ and in clique~$U_2$ of~$G_2$.
  We claim that~$G$ has an independent set of size~$k$ if and only if~$H$ has a matching of size~$k$.

  First, let $I$~be an independent set of size~$k$ for~$G$.
  We construct a matching~$M$ in~$H$.
  To this end, for each~$v\in I$, add an edge~$\{U_1,U_2\}$ to~$M$, where $U_1$ is the clique in~$G_1$ that contains~$v$ and $U_2$~is the clique in~$G_2$ that contains~$v$.
  To verify that~$M$ is a matching in~$H$, observe that each clique of~$G_1$ or~$G_2$ can contain only one vertex of~$I$.
  Therefore, the edges in~$M$ are pairwise disjoint and $|M|=|I|=k$.

  Second, let $M$~be a matching of size~$k$ in~$H$.
  We construct an independent set~$I$ for~$G$ with $|I|=k$ as follows: for each edge~$\{U_1,U_2\}\in M$, include an arbitrary vertex contained in both~$U_1$
  and~$U_2$ in~$I$.
  Since each clique of~$G_1$ or~$G_2$ is incident to at most one edge of~$M$, we have chosen at most one vertex per clique of~$G_1$ and~$G_2$, respectively.
  Hence, $I$~is an independent set. Furthermore, this implies that we did not choose a vertex twice, so~$|I|=|M|=k$.
  This completes the proof of the claim.

  It remains to prove the running time. Note that we can verify in linear time that a graph is a cluster graph.  Moreover, its connected components can be listed in linear time using depth first search. Hence, also the bipartite graph~$H$ is computable in linear time. Moreover, by construction, the graph~$H$~has at most $n$~edges, and, therefore, we can compute a maximum matching in~$H$ in $O(n^{1.5})$~time using the algorithm of Hopcroft and Karp~\citep[Theorem~16.4]{Sch03}.
\end{proof}

\noindent\looseness=-1 We point out that \autoref{lem:clusolve} generalizes to finding a maximum-\emph{weight} independent set if the vertices in the input cluster graphs have weights; to this end, we compute a \emph{weighted} maximum matching in the auxiliary bipartite graph, where each edge is assigned the maximum weight of the vertices occurring in both clusters it connects.

\newcommand{\POITS}{\textsc{3-SAT}}
\bigskip\noindent
To prove the second part of \autoref{thm:dichotomy}, we will employ a reduction from \POITS{}.

\decprob{\POITS}
{A Boolean formula $\phi$ in conjunctive normal form with at most three variables per clause.}
{Does $\phi$ have a satisfying assignment?}

\noindent In fact, we will see two very similar reductions from \POITS{} to \CIS{}, where the second is an extension of the first. This has the following benefits. The first reduction is an adaption of a simple NP-hardness reduction by \citet{GJS76} and, while not sufficient to show \autoref{thm:dichotomy}, it has properties that we will exploit to exclude polynomial-size problem kernels for \JIntSel{} in \autoref{sec:kernelization}:

\begin{lemma}\label{lem:mcis-NPhard}
  In polynomial time, a \POITS{} instance~$\phi$ can be reduced to a \CIS{} instance~$(G_1,G_2,k)$ such that 
  \begin{enumerate}[i)]
  \item $G_1$~consists of pairwise disjoint paths of length at most two and
  \item   $G_2$~consists of $k$ connected components, each of which is a triangle or an edge.
  \end{enumerate}
 Moreover, $k$ is proportional to the number of clauses in~$\phi$.
\end{lemma}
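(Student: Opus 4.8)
The plan is to adapt the classical NP-hardness reduction of Garey, Johnson, and Stockmeyer from \textsc{3-SAT} to \textsc{Independent Set}, and then reinterpret the two kinds of constraints it produces as two graphs each lying in the required class. Concretely, given a formula~$\phi$ with clauses~$c_1,\dots,c_m$ over variables~$x_1,\dots,x_n$, I would introduce, for each clause~$c_j$ with $\ell_j\in\{1,2,3\}$ literals, a set~$W_j$ of $\ell_j$ vertices, one per literal occurrence in~$c_j$. Set $k\coloneqq m$, so $k$ is proportional to the number of clauses as claimed. The graph~$G_2$ will be the ``clause gadget'' graph: put all vertices of~$W_j$ into one connected component, which is a triangle when $\ell_j=3$ and an edge when $\ell_j=2$ (and, if one allows unit clauses, a single vertex — or one may preprocess unit clauses away, or simply pad them; since paths of length at most two and triangles/edges both tolerate isolated vertices this is harmless, though I would note that we may assume every clause has at least two literals). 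This immediately gives property~(ii): $G_2$ has exactly $k=m$ connected components, each a triangle or an edge. The graph~$G_1$ will be the ``consistency gadget'' graph: for every pair of literal-occurrence vertices $u\in W_i$, $v\in W_j$ with $i\neq j$ such that $u$ and $v$ correspond to \emph{complementary} literals (one is $x_t$, the other $\lnot x_t$), put an edge $\{u,v\}$ in~$G_1$.

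The correctness argument is the standard one: a size-$k$ independent set in $G=(V,E_1\cup E_2)$ must pick exactly one vertex from each $W_j$ (it can pick at most one per clause component because those components are cliques in $G_2$, and it needs $k=m$ of them), and the chosen occurrences are pairwise non-complementary because of the $G_1$-edges, hence extend to a consistent truth assignment satisfying every clause; conversely a satisfying assignment picks, in each clause, one literal made true, yielding an independent set of size~$m$. This direction is routine and I would only sketch it.

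The one genuine obstacle is property~(i): the consistency graph~$G_1$ as just defined is \emph{not} in general a disjoint union of paths of length at most two — the vertices corresponding to all occurrences of a single variable $x_t$ (positive and negative) can form a large complete bipartite-like structure. The fix, which is exactly the trick in the Garey--Johnson--Stockmeyer-style reductions and is presumably why the excerpt says this is an \emph{adaption}, is to assume (w.l.o.g., by adding dummy clauses or by the standard reduction to \textsc{3-SAT} with each variable occurring a bounded number of times — e.g.\ \textsc{3-SAT} where every literal occurs at most twice, which remains NP-hard) that each variable occurs in at most a constant, indeed at most two, clauses with a controlled pattern, so that after the reduction each connected component of~$G_1$ has at most three vertices and is a path; a single complementary pair $\{x_t,\lnot x_t\}$ gives a $P_2$, and a $P_3$ arises when one occurrence is complementary to two others. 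I would therefore first invoke (with citation to the appropriate bounded-occurrence \textsc{3-SAT} variant, or prove in one line by a local replacement gadget) that we may assume every variable has at most one positive and at most two negative occurrences (or the symmetric bound), check that the resulting~$G_1$ then decomposes into isolated vertices, edges, and paths $P_3$ — i.e.\ paths of length at most two — and that the correctness argument is unaffected since the bounded-occurrence transformation is solution-preserving. Finally I would observe the reduction runs in polynomial time, as everything is a local, size-linear construction.
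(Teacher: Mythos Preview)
Your approach is correct but genuinely different from the paper's. You first pass to a bounded-occurrence variant of \POITS{} (each variable with at most one occurrence of one sign and at most two of the other), then use only the literal-occurrence vertices: clause cliques go into~$G_2$ and complementary-literal edges go into~$G_1$. The occurrence bound guarantees that the conflict graph of each variable is a~$P_1$, $P_2$, or~$P_3$, giving~(i); you get $k=m'$ equal to the (new) number of clauses, and the standard cycle-of-implications replacement keeps $m'\le 4m$, so $k$ is linear in the original clause count as required for the later ETH consequence.

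The paper, by contrast, does \emph{not} restrict variable occurrences. Instead it introduces, in addition to a triangle per clause, a \emph{variable cycle} of length~$2m_i$ per variable~$x_i$, and splits all edges into three matchings labeled $1,2,3$: the clause triangles and every second cycle edge (label~$2$) form~$G_2$, while the ``antennas'' from triangle to cycle (label~$1$) together with the remaining cycle edges (label~$3$) form~$G_1$. Here $k=m+\sum_i m_i$, so the edges of~$G_2$ come from the variable gadgets, not from $2$-literal clauses. What the paper's route buys is that the three-label structure immediately yields a proper $3$-edge-coloring of $G_1\cup G_2$ and, after subdividing each triangle and antenna edge a fixed number of times, directly gives \autoref{lem:mcis-NPharder} (triangles in~$G_2$ replaced by edges, chromatic index three). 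Your route is cleaner for \autoref{lem:mcis-NPhard} alone but would need a separate gadget to recover these extra features.
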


\noindent This first reduction, which we will use to prove
\autoref{lem:mcis-NPhard}, can then be modified to prove the following
lemma, which we will exploit to show \autoref{thm:dichotomy}. Note that,
in comparison to \autoref{lem:mcis-NPhard}, the following lemma
restricts $G_2$~to consist only of isolated edges and vertices.  A \pt{}
many-one reduction that \emph{additionally} ensures $G_2$~to have
$k$~connected components, like in \autoref{lem:mcis-NPhard}, does not
exist unless P${}={}$NP: in this case, \CIS{} is \pt{} solvable using
\textsc{2-Sat}.

\begin{lemma}\label{lem:mcis-NPharder}
  In polynomial time, a \POITS{} instance~$\phi$ can be reduced to a \CIS{} instance~$(G_1,G_2,k)$ such that 
  \begin{enumerate}[i)]
  \item   $G_1$~consists of pairwise disjoint paths of length at most two,
    \item $G_2$ consists only of isolated edges and vertices, and
    \item $G_1\cup G_2$ has chromatic index three.
    \end{enumerate}
Moreover, $k$ is proportional to the number of clauses in~$\phi$.
\end{lemma}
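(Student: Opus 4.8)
The plan is to build on the standard reduction from \POITS{} used to prove \autoref{lem:mcis-NPhard} and extend it so that the resulting graph $G_1\cup G_2$ additionally has chromatic index three. First I would recall the structure of the reduction: for each clause of $\phi$ we create a small gadget in $G_1$ consisting of a path with one vertex per literal occurrence (a $P_2$ or $P_3$ depending on whether the clause has two or three literals), so $G_1$ is a disjoint union of paths of length at most two; and $G_2$ encodes the requirement that for each variable we consistently choose a truth value, by placing edges between vertices corresponding to conflicting literal occurrences (a positive occurrence and a negative occurrence of the same variable). In \autoref{lem:mcis-NPhard} these conflict constraints were bundled into triangles/edges forming $k$ components; here, to force $G_2$ to be a matching, I would instead split each variable's occurrences into a chain of disjoint conflict edges, introducing "copy" vertices as needed and linking the copies through clause-style paths in $G_1$ so that selecting one copy in an independent set propagates the truth value — this is the standard trick for pushing the $G_2$-structure down to isolated edges, and it preserves both properties (i) and (ii) while keeping $k$ proportional to the number of clauses.

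The new ingredient is property (iii). Since $G_1$ is a disjoint union of paths of length at most two, $\chi'(G_1)\le 2$, and since $G_2$ is a matching, $\chi'(G_2)=1$; hence $\chi'(G_1\cup G_2)\le \Delta(G_1\cup G_2)+?$ is not immediate, but note every vertex has $G_1$-degree at most $2$ and $G_2$-degree at most $1$, so $\Delta(G_1\cup G_2)\le 3$. By Vizing's theorem $\chi'(G_1\cup G_2)\in\{3,4\}$, so the real content is to show it is exactly $3$, i.e.\ that the edge set can be properly $3$-colored, and (trivially) that it is at least $3$, which follows as soon as some vertex has degree $3$ (a middle vertex of a $P_3$ that also carries a $G_2$-edge). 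To get the $3$-edge-colorability I would argue directly about the structure of $G_1\cup G_2$: its connected components are obtained by gluing paths of length $\le 2$ at their endpoints via matching edges, so each component is a graph of maximum degree $3$ that is either a path, a cycle, or a "caterpillar-like" tree/unicyclic graph; for such sparse graphs a greedy argument along a suitable ordering (or an appeal to the fact that subcubic graphs with no $K_4$ and, more specifically, no overfull subgraph are $3$-edge-colorable) yields the bound. The cleanest route is probably to design the gadgets so that $G_1\cup G_2$ is manifestly a union of paths and even cycles (or cycles whose length I can control), for which $\chi'=3$ — or better, to observe that the degree-$3$ vertices form an independent set in $G_1\cup G_2$ (each is a $P_3$-center, and $P_3$-centers are never adjacent in $G_1$ nor joined by a $G_2$-edge if we route conflict edges only through degree-$\le 2$ vertices), and a subcubic graph whose degree-$3$ vertices form an independent set is easily $3$-edge-colored.

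Concretely, the key steps in order would be: (1) present the gadget construction, reusing the clause-paths of \autoref{lem:mcis-NPhard} for $G_1$ and replacing the triangle/edge conflict components of $G_2$ by a matching via copy-vertices and propagation paths; (2) verify correctness (a size-$k$ independent set in $G_1\cup G_2$ corresponds to a satisfying assignment and vice versa) — this is essentially the same argument as before, with the extra check that the propagation paths faithfully transmit truth values; (3) verify (i) and (ii) by inspection of the construction; (4) verify (iii): first $\Delta(G_1\cup G_2)\le 3$, then exhibit the $3$-edge-coloring, arranging the gadgets so the degree-$3$ vertices are pairwise non-adjacent and then coloring greedily, and finally note a degree-$3$ vertex exists so the index is not $2$; (5) confirm $k$ is proportional to the number of clauses. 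I expect the main obstacle to be step (4): ensuring the gadget can simultaneously enforce the logical constraints \emph{and} avoid creating an "overfull" obstruction to $3$-edge-colorability (e.g.\ a triangle or $K_4$-minor-like dense spot, or two adjacent degree-$3$ vertices forming an odd structure). The safest design choice — which I would commit to — is to make every conflict edge of $G_2$ incident only to vertices that are endpoints of $G_1$-paths (hence of $G_1$-degree $1$), so that no vertex of degree $3$ is incident to a $G_2$-edge at all; then the degree-$3$ vertices are exactly the $P_3$-centers, which form an independent set, and $3$-edge-colorability is immediate.
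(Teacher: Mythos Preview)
Your plan diverges from the paper in a significant way, and your recollection of the construction behind \autoref{lem:mcis-NPhard} is not accurate. In the paper, the clause gadget is \emph{not} a $P_3$ on the literals in~$G_1$; it is a triangle (all three edges in~$G_2$) equipped with three ``antennas'' (edges in~$G_1$) whose leaf vertices are identified with vertices of even \emph{variable cycles} whose edges alternate between~$G_2$ and a third matching~$G_3'$, and one sets $G_1:=G_1'\cup G_3'$. The whole point of carrying this three-matching decomposition $G_1',G_2',G_3'$ through the construction is that it makes property~(iii) automatic: the labels $1,2,3$ are a proper $3$-edge-coloring of $G_1\cup G_2$ by design, so no separate structural argument (Fournier-type or greedy) is ever needed.

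Given this, the passage from \autoref{lem:mcis-NPhard} to \autoref{lem:mcis-NPharder} is much more direct than what you propose. The only obstacle to~(ii) is that the clause triangles sit in~$G_2=G_2'$. The paper removes them by a single device: \emph{subdivide} every triangle edge twice and every antenna four times, relabeling cyclically with $1,2,3$ along each new path. Since subdividing an edge twice raises the maximum independent set size by exactly one, correctness is inherited verbatim from \autoref{lem:mcis-NPhard} with $k$ increased by~$9$ per clause (so $k=10m+\sum_i m_i$). After the subdivision $G_2'$~is a matching, $G_1'\cup G_3'$~is still a disjoint union of paths of length at most two, and the labels still $3$-color the edges.

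Your alternative route could perhaps be completed, but as written it has a gap already in the base reduction: a path $\ell_1\text{--}\ell_2\text{--}\ell_3$ in~$G_1$ does \emph{not} enforce ``select exactly one literal per clause'' the way a triangle does---the independent set $\{\ell_1,\ell_3\}$ lets one clause absorb two units of the $k$-budget while another clause contributes none, so an independent set of size~$k$ need not certify every clause is satisfied. You would have to repair the clause gadget before the copy-vertex and colorability discussion even begins, whereas the paper's subdivision trick delivers (i), (ii), and (iii) simultaneously with essentially no extra work.
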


\begin{proof}[Proof of Lemmas \ref{lem:mcis-NPhard} and \ref{lem:mcis-NPharder}] 
  We first show \autoref{lem:mcis-NPhard}.
  To this end, we transform a \POITS{} formula $\phi$ into three
  graphs~$G_1', G_2'$, and~$G_3'$. The ``three graphs approach'' will
  easily allow us to show that the edge-wise union~$G_1:=G_1'\cup
  G_3'$ consists of pairwise disjoint paths of length at most two and
  that~$G_2:=G_2'$ consists of~$k$ pairwise disjoint triangles and~edges.

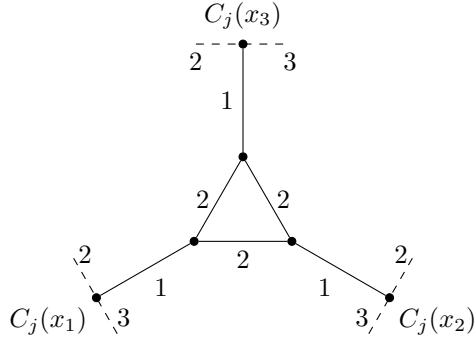
\begin{figure}
  \centering
  
  \begin{tikzpicture}[x=0.75cm,y=0.75cm]
    \node (i2) at (210:1) [vertex] {};
    \node (i3) at (330:1) [vertex] {};

    \node (i1) at (90:1) [vertex] {};

    \node (o2) at (210:3)[vertex] {};
    \node (o3) at (330:3)[vertex] {};
    \node (o1) at (90:3) [vertex] {};
    \node (x1a) at ($(180:1)+(o1)$) {};
    \node (x1b) at ($(0:1)+(o1)$) {};
    \draw[dashed] (x1a)--(x1b) node[midway,above=0.1]  {$C_j(x_3)$}
    node[at end, below] {$3$}
    node[at start, below] {$2$} ;

    \node (x2b) at ($(300:1)+(o2)$) {};
    \node (x2a) at ($(120:1)+(o2)$) {};
    \draw[dashed] (x2a)--(x2b) node[midway,below left] {$C_j(x_1)$}
    node[near start, above=0.1] {$2$}
    node[near end, right] {$3$};

    \node (x3a) at ($(60:1)+(o3)$) {};
    \node (x3b) at ($(240:1)+(o3)$) {};
    \draw[dashed] (x3a)--(x3b) node[midway, below right] {$C_j(x_2)$}
    node[near start, above=0.1] {$2$}
    node[near end, left] {$3$};

    \draw (i1)--(i2) node[midway, left] {$2$};
    \draw (i2)--(i3) node[midway, below] {$2$};
    \draw (i3)--(i1) node[midway, right] {$2$};

    \draw (i1)--(o1) node[midway, left] {$1$} ;
    \draw (i2)--(o2) node[midway, below right] {$1$};
    \draw (i3)--(o3) node[midway, below left] {$1$};
  \end{tikzpicture}

  \caption{Solid edges form the gadget for a clause~$C_j$ containing the variables~$x_1$, $x_2$, and~$x_3$. With each variable~$x_i$, we associate a leaf vertex~$C_j(x_i)$, which will be merged with either a T- or an F-vertex in the gadget for variable~$x_i$ (\autoref{fig:vargad}), depending on whether $C_j$ contains the variable~$x_i$ negated or not. The dashed edges represent edges of the variable gadgets. Edges labeled by a number~$\ell$ belong to the graph~$G_\ell'$.}
  \label{fig:simpleclogad}
\end{figure}

  We now give the details of the construction. Let~$\phi$ be a
  formula in conjunctive normal form with the clauses $C_1,\dots,C_m$,
  each of which contains at most three variables from the variable set
  $\{x_1,\hdots,x_n\}$. For a variable~$x_i$, let~$m_i$ denote the
  number of clauses in $\phi$ that contain~$x_i$.  For each
  clause~$C_j$ in~$\phi$, create the gadget shown in
  \autoref{fig:simpleclogad}, where an edge labeled~$\ell\in\{1,2,3\}$
  belongs to~$G_\ell'$. We call the non-triangle vertices \emph{leaves}
  and the non-triangle edges \emph{antennas}. With each variable~$x_i$
  in~$C_j$, we associate a leaf vertex~$C_j(x_i)$.

 For each variable~$x_i$, create a cycle gadget~$X_i$ (as illustrated in \autoref{fig:vargad}), with $2m_i$~edges, alternatingly labeled~$2$ and~$3$, and $2m_i$~vertices, which we alternatingly call \emph{T-vertex} and \emph{F-vertex}. %

  For each variable~$x_i$ of a clause~$C_j$, we merge the leaf
  vertex~$C_j(x_i)$ with a vertex of the variable gadget~$X_i$ for the
  variable~$x_i$ as follows: if~$x_i$ appears non-negated in~$C_j$,
  then we identify $C_j(x_i)$ with an F-vertex of $X_i$, otherwise, we
  identify~$C_j(x_i)$ with a T-vertex of $X_i$. Since $X_i$ has
  $m_i$~pairwise disjoint edges with label 3, each of which has one F-
  and one T-vertex, we can realize all these connections such that no
  edge with label 3 shares vertices with more than one antenna. The
  connections are illustrated by dashed lines in
  Figures~\ref{fig:simpleclogad} and~\ref{fig:vargad}.  We set
  $G_1:=G_1'\cup G_3'$ and $G_2:=G_2'$ and output the \CIS{} instance
  $(G_1, G_2, k)$, where $k:=m+\sum_{i=1}^{n}m_i$.

  It remains to show that the construction is correct and that~$G_1$
  and~$G_2$ satisfy the required properties. Indeed, $G_1$~consists of
  pairwise disjoint paths of lengths at most two, since it only
  contains the edges labeled~1 or~3, of which each family forms a
  matching, and no edge with label~3 is ever connected to two edges
  labeled~1 and vice versa (only antennas are labeled~1).  Moreover,
  $G_2$~consists of $k$~isolated triangles and edges (labeled~2): it
  contains one isolated triangle for each of the $m$~clauses and
  $m_i$~isolated edges for each variable~$x_i$. It remains to
  establish the correctness of the reduction by showing
  that~$\phi$~is satisfiable if and only if $G_1\cup G_2$~has an independent
  set of size~$k$.

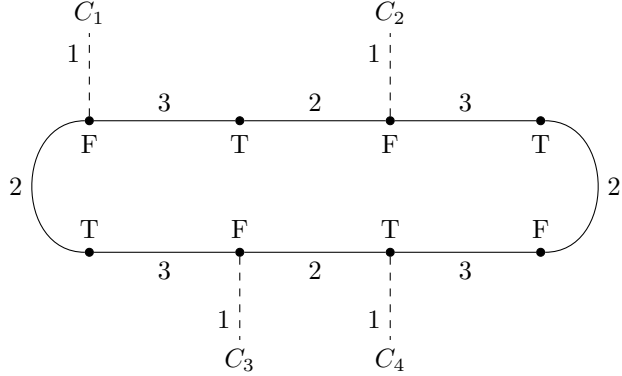
\begin{figure}
  \centering
  \begin{tikzpicture}[x=2cm, y=1.75cm]
    \node (l1) at (0,0) [vertex, label={above:T}] {};
    \node (l2) at (1,0) [vertex, label={above:F}] {};
    \node (l3) at (2,0) [vertex, label={above:T}] {};
    \node (l4) at (3,0) [vertex, label={above:F}] {};

    \node (u1) at (0,1) [vertex, label={below:F}] {};
    \node (u2) at (1,1) [vertex, label={below:T}] {};
    \node (u3) at (2,1) [vertex, label={below:F}] {};
    \node (u4) at (3,1) [vertex, label={below:T}] {};

    \draw (l1)--(l2) 
    node[pos=0.5, below] {$3$}
    
    --(l3) 
    node[pos=0.5, below] {$2$}
    
    --(l4)
    node[pos=0.5, below] {$3$}
    
    .. controls +(0:0.5)  and +(0:0.5) 
    .. (u4) 
    node [midway, right] {$2$}
    --(u3)  
    node[pos=0.5, above] {$3$}
    
    --(u2) 
    node[pos=0.5, above] {$2$}
    
    --(u1) 
    node[pos=0.5, above] {$3$}
    
    .. controls +(180:0.5) and +(180:0.5)
    ..(l1)
    node [midway, left] {$2$}
    
    --cycle;

    \draw[dashed] (u1) -- +(0,2/3) node[above] {$C_1$}
    node[near end, left] {$1$};

    \draw[dashed] (u3) -- +(0,2/3) node[above] {$C_2$}
    node[near end, left] {$1$};

    \draw[dashed] (l2) -- +(0,-2/3) node[below] {$C_3$}
    node[near end, left] {$1$};
    
    \draw[dashed] (l3) -- +(0,-2/3) node[below] {$C_4$}
    node[near end, left] {$1$};
  \end{tikzpicture}

  \caption{The solid edges form the gadget for a variable contained in  the clauses $C_1$, $C_2$, and $C_3$ non-negated, but in $C_4$ negated. Dashed edges belong to the clause gadgets of the clauses $C_1,\dots,C_4$. Edges labeled by a number~$\ell$ belong to the graph~$G_\ell'$.}
  \label{fig:vargad}
\end{figure}

First, let $I$~be an independent set for~$G_1\cup G_2$ that
  satisfies $|I|=m+\sum_{i=1}^{n}m_i=k$. %
  Note that, for each variable~$x_i$, the variable gadget of~$x_i$ is a
  cycle of $2m_i$~vertices. Clearly, $I$~contains at most half of
  these vertices. Moreover, $I$~contains at most one triangle vertex
  of each of our $m$~clause gadgets. Hence, $|I|\leq
  m+\sum_{i=1}^{n}m_i$ implies that $I$ contains
  a triangle vertex of each of the $m$~clause gadgets.
  Equivalently,
  \begin{itemize}
  \item for each variable gadget~$X_i$, either all T-vertices or all
    F-vertices are contained in~$I$, and
  \item for each clause gadget, one of its leaf vertices is \emph{not}
    contained in~$I$.
  \end{itemize}
  Equivalently, in each clause~$C_j$,
  we find at least one of the following situations:
  \begin{itemize}
  \item $C_j$~contains a positive literal~$x_i$ (then $C_j(x_i)$~is an
    F-vertex in~$X_i$) and $I$~contains all T-vertices of~$X_i$, or
  \item $C_j$~contains a negative literal~$\bar x_i$ (then $C_j(x_i)$
    is a T-vertex in~$X_i$) and $I$~contains all F-vertices of~$X_i$.
  \end{itemize}
  Therefore, on the one hand, setting a variable~$x_i$ to true if and
  only if $I$~contains the T-vertices of~$X_i$ yields a satisfying
  assignment for~$\phi$. 

  \looseness=-1 On the other hand, if we have a satisfying
  assignment for~$\phi$, putting into~$I$ all T-vertices of~$X_i$
  if $x_i$ is true and all F-vertices otherwise allows us to
  choose~$I$ so that it contains a triangle vertex of each clause gadgets and, thus, so that~$|I|\geq k$.

\begin{figure}[t]
  \centering
  
  \begin{tikzpicture}[x=0.9cm,y=0.9cm]
    \node (i2) at (210:1) [vertex] {};
    \node (i3) at (330:1) [vertex] {};
    \node (i1) at (90:1) [vertex] {};

    \node (i12a) at ($(i1)!1/3!(i2)$) [vertex] {};
    \node (i12b) at ($(i1)!2/3!(i2)$) [vertex] {};

    \node (i23a) at ($(i2)!1/3!(i3)$) [vertex] {};
    \node (i23b) at ($(i2)!2/3!(i3)$) [vertex] {};

    \node (i31a) at ($(i3)!1/3!(i1)$) [vertex] {};
    \node (i31b) at ($(i3)!2/3!(i1)$) [vertex] {};

    \node (o2) at (210:4) [vertex] {};
    \node (o3) at (330:4) [vertex] {};
    \node (o1) at (90:4) [vertex] {};

    \node (o1a) at ($(i1)!1/5!(o1)$) [vertex] {};
    \node (o1b) at ($(i1)!2/5!(o1)$) [vertex] {};
    \node (o1c) at ($(i1)!3/5!(o1)$) [vertex] {};
    \node (o1d) at ($(i1)!4/5!(o1)$) [vertex] {};

    \node (o2a) at ($(i2)!1/5!(o2)$) [vertex] {};
    \node (o2b) at ($(i2)!2/5!(o2)$) [vertex] {};
    \node (o2c) at ($(i2)!3/5!(o2)$) [vertex] {};
    \node (o2d) at ($(i2)!4/5!(o2)$) [vertex] {};

    \node (o3a) at ($(i3)!1/5!(o3)$) [vertex] {};
    \node (o3b) at ($(i3)!2/5!(o3)$) [vertex] {};
    \node (o3c) at ($(i3)!3/5!(o3)$) [vertex] {};
    \node (o3d) at ($(i3)!4/5!(o3)$) [vertex] {};

    \node (x1a) at ($(180:1)+(o1)$) {};
    \node (x1b) at ($(0:1)+(o1)$) {};
    \draw[dashed] (x1a)--(x1b) node[midway,above=0.1]  {$C_j(x_3)$}
    node[at end, below] {$3$}
    node[at start, below] {$2$} ;

    \node (x2b) at ($(300:1)+(o2)$) {};
    \node (x2a) at ($(120:1)+(o2)$) {};
    \draw[dashed] (x2a)--(x2b) node[midway,below left] {$C_j(x_1)$}
    node[near start, above=0.1] {$2$}
    node[near end, right] {$3$};

    \node (x3a) at ($(60:1)+(o3)$) {};
    \node (x3b) at ($(240:1)+(o3)$) {};
    \draw[dashed] (x3a)--(x3b) node[midway, below right] {$C_j(x_2)$}
    node[near start, above=0.1] {$2$}
    node[near end, left] {$3$};

    \draw (i1)--(i12a) node[midway,  left] {$1$}
    --(i12b) node[midway,  left] {$2$}
    --(i2) node[midway,  left] {$3$};

    \draw (i2)--(i23a) node[midway, below] {$1$}
    --(i23b) node[midway, below] {$2$}
    --(i3) node[midway, below] {$3$};

    \draw (i3)--(i31a) node[midway,  right] {$1$}
    --(i31b) node[midway,  right] {$2$}
    --(i1) node[midway,  right] {$3$};

    \draw (i1)--(o1a) node[midway, left] {$2$}
    --(o1b) node[midway, left] {$1$}
    --(o1c) node[midway, left] {$3$}
    --(o1d) node[midway, left] {$2$}
    --(o1) node[midway, left] {$1$};

    \draw (i2)--(o2a) node[midway, above left] {$2$}
    --(o2b) node[midway, above left] {$1$}
    --(o2c) node[midway, above left] {$3$}
    --(o2d) node[midway, above left] {$2$}
    --(o2) node[midway, above left] {$1$};

    \draw (i3)--(o3a) node[midway, above right] {$2$}
    --(o3b) node[midway, above right] {$1$}
    --(o3c) node[midway, above right] {$3$}
    --(o3d) node[midway, above right] {$2$}
    --(o3) node[midway, above right] {$1$};

  \end{tikzpicture}

  \caption{Advanced gadget for a clause~$C_j$ containing the
    variables~$x_1$, $x_2$, and~$x_3$. With each variable~$x_i$, we
    associate a leaf vertex~$C_j(x_i)$, which is merged with either
    a T- or an F-vertex in the gadget for variable~$x_i$
    (\autoref{fig:vargad}), depending on whether $C_j$ contains the
    variable~$x_i$ negated or not. The dashed edges represent edges
    of the variable gadgets. Edges labeled by a number~$\ell$ belong
    to the graph~$G_\ell'$.}
  \label{fig:clogad}
\end{figure}
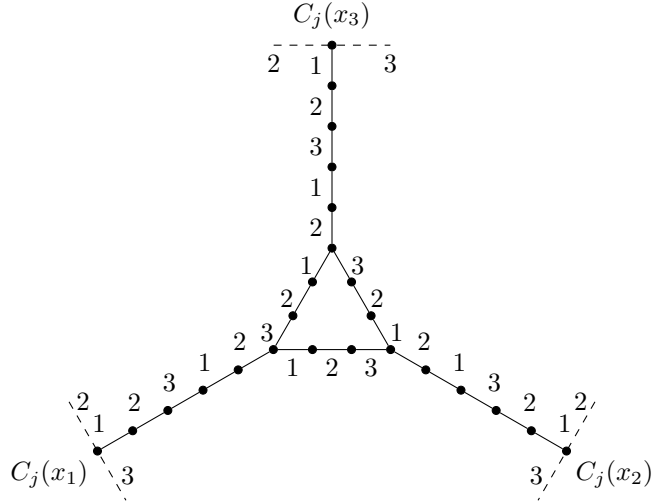

  \bigskip\noindent We can now easily turn this reduction into a
  reduction that also proves \autoref{lem:mcis-NPharder}. To this end,
  note that subdividing an edge of a graph twice increases the size of
  the graph's maximum independent set by exactly one. Thus, instead of
  using the clause gadget in \autoref{fig:simpleclogad}, we can use
  the gadget shown in \autoref{fig:clogad} and ask for an independent
  set of size~$k:=10m+\sum_{i=1}^{n}m_i$: indeed, the gadget in
  \autoref{fig:clogad} is obtained from the simpler one in
  \autoref{fig:simpleclogad} by subdividing each triangle edge
  twice and each antenna four times. Thus, we increase the maximum
  independent set size by $3+3\cdot 2=9$ per clause gadget and, hence,
  ask for $k:=10m+\sum_{i=1}^{n}m_i$ instead of $m+\sum_{i=1}^{n}m_i$.

  The benefit of replacing the gadget in \autoref{fig:simpleclogad} by
  the gadget in \autoref{fig:clogad} is that~$G_2$ now consists only of
  isolated edges and vertices instead of triangles. Moreover, the
  resulting graph $G_1\cup G_2=G_1'\cup G_2'\cup G_3'$ has chromatic
  index three, where the edge labels yield a proper edge coloring.
  Thus, using \autoref{fig:clogad}, we proved
  \autoref{lem:mcis-NPharder}.  %
\end{proof}

\noindent Using \autoref{lem:mcis-NPharder}, it is now easy to prove \autoref{thm:dichotomy}.

\begin{proof}[Proof of \autoref{thm:dichotomy}]
  Statement (i) immediately follows from \autoref{lem:clusolve}. It remains to show (ii). 
  To this end, observe that, without loss of generality, $\mathcal C_1$ contains not only cluster graphs. Therefore, it contains a graph that has a~$P_3$ as induced subgraph.  Since $\mathcal C_1$ is closed under induced subgraphs and disjoint unions, it follows that $\mathcal C_1$ contains all graphs that consist of pairwise disjoint paths of length at most two. With the same argument and exploiting that $\mathcal C_2$ contains at least one graph with an edge, we obtain that $\mathcal C_2$ contains all graphs consisting of isolated vertices and edges. Hence, NP-hardness follows from \autoref{lem:mcis-NPharder}.
\end{proof}

\noindent Concluding this section,  we derive further hardness results for \JIntSel{} from \autoref{lem:mcis-NPhard}.

\begin{corollary}\label{cor:NPhwithKColors}
  Even when restricted to instances 
  \begin{itemize}
  \item with an input graph that consists of disjoint paths of length at most two
  \item and that ask for a colorful independent set of size~$k$ with $k$~being equal to the number~$\gamma$ of input colors,
  \end{itemize}
  \JIntSel{} remains
  \begin{enumerate}[i)]
  \item\label{eins} NP-hard, and
  \item\label{zwei} cannot be solved in $2^{o(k)}\cdot n^{O(1)}$~time unless the Exponential Time Hypothesis fails.\footnote{The Exponential Time Hypothesis basically states that there is no~$2^{o(n)}$-time algorithm for $n$-variable \POITS{} \citep{journal-IPZ01,LMS11}.}
  \end{enumerate}
\end{corollary}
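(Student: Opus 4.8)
The plan is to transfer \autoref{lem:mcis-NPhard} through the colorful reformulation of \JIntSel{} established in \autoref{sec:colors}. Recall that in \autoref{lem:mcis-NPhard} the reduction from \POITS{} produces a \CIS{} instance $(G_1,G_2,k)$ in which $G_1$ consists of pairwise disjoint paths of length at most two and $G_2$ consists of exactly $k$ connected components, each a triangle or an edge; moreover $k$ is proportional to the number $m$ of clauses. Since $G_1$ is a disjoint union of paths it is in particular an interval graph, and $G_2$ is a cluster graph, so $(G_1,G_2,k)$ is a \JIntSel{}$\ast$ instance. Applying the equivalence between \JIntSel{}$\ast$ and the colorful \JIntSel{} (the maximal cliques of $G_2$ are the colors), we obtain an instance of colorful \JIntSel{} whose underlying interval graph is $G_1$ --- a disjoint union of paths of length at most two --- and whose number $\gamma$ of colors equals the number of connected components of $G_2$, which is exactly $k$. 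This gives the first bullet (input graph a disjoint union of $P_{\le 3}$) and the second bullet ($\gamma = k$) simultaneously, and NP-hardness in part~(\ref{eins}) follows immediately from the NP-hardness half of \autoref{lem:mcis-NPhard}.

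For part~(\ref{zwei}), the additional ingredient is a bound on $k$ in terms of the number of variables and clauses of $\phi$, together with the sparsification lemma. \POITS{} (3-CNF-SAT) has no $2^{o(n+m)}$-time algorithm unless ETH fails; by the sparsification lemma \citep{journal-IPZ01,LMS11} it suffices to rule this out for instances with $m = O(n)$, so ETH already implies there is no $2^{o(n)}$-time algorithm for \POITS{} even when $m = O(n)$. In \autoref{lem:mcis-NPhard} we have $k = m + \sum_{i=1}^n m_i$; since each clause contributes at most three to $\sum_i m_i$, we get $\sum_i m_i \le 3m$ and hence $k \le 4m = O(n)$ on the sparsified instances. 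Therefore a $2^{o(k)} \cdot n^{O(1)}$-time algorithm for colorful \JIntSel{} (with the stated restrictions) would, composed with the polynomial-time reduction, yield a $2^{o(n)} \cdot n^{O(1)}$-time algorithm for sparsified \POITS{}, contradicting ETH.

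The only mild subtlety --- and the place to be careful rather than the place where anything is genuinely hard --- is to make sure the reduction of \autoref{lem:mcis-NPhard} is compatible with the colorful reformulation in the precise sense required: that ``$G_2$ has $k$ components'' translates to ``$\gamma = k$ colors'' and that the resulting interval graph really is a disjoint union of $P_{\le 3}$'s and not merely an interval graph. Both are immediate from the construction (the components of $G_2$ are the triangles and edges, which become the colors; $G_1$ is built from edges labeled $1$ or $3$, each family a matching, with no label-$3$ edge adjacent to two label-$1$ edges). Once this bookkeeping is in place, parts~(\ref{eins}) and~(\ref{zwei}) are one-line consequences of \autoref{lem:mcis-NPhard} and the sparsification lemma respectively.
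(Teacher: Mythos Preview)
Your proposal is correct and follows essentially the same route as the paper: part~(\ref{eins}) is obtained by translating \autoref{lem:mcis-NPhard} into the colored model (the $k$~components of~$G_2$ become the $\gamma=k$ colors, and $G_1$ is the input interval graph), and part~(\ref{zwei}) follows from the fact that $k$~is proportional to the number of clauses combined with the sparsification lemma of \citet{journal-IPZ01}. You have simply spelled out the bound $k\le 4m$ and the sparsification step in more detail than the paper does.
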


\noindent Herein, \eqref{eins} simply translates into our colored
model
the statement of \autoref{lem:mcis-NPhard} that \CIS{} remains
NP-hard even if one input graph is a cluster graph with $k$~connected
components. Moreover, (ii) follows by combining a result of
\citet{journal-IPZ01} with the fact that $k$~is proportional to the
number of clauses in the input \POITS{} formula
(\autoref{lem:mcis-NPhard}).

\section{Job Interval Selection}\label{sec:jisp}

In this section, we investigate the parameterized complexity of \JIntSel{}. As warm-up for working with the colored model, \autoref{sec:searchtrees} first gives a simple search tree algorithm that solves \JIntSel{} in linear time if the sought solution size~$k$ and the maximum number~$\maxcolcli$ of colors in any maximal clique of~$G$ are constant. %

\autoref{sec:DP} then proceeds with a reformulation of the \fp{} algorithm of \citet{HK06} with respect to a structural parameter into our colored model, which makes it easy for us to generalize the algorithm to \MCIS{} and also to show that the problem is linear-time solvable if only $k$~is constant (as opposed to requiring both $k$ and $\maxcolcli$ being constant). However, the space requirements as well as the running time exponentially depend on~$k$.

We conclude our findings for \JIntSel{} in \autoref{sec:kernelization} by showing that the problem has no polynomial-size problem kernel in general, but on proper interval graphs.

\subsection{A Simple Search Tree Algorithm}
\label{sec:searchtrees}

As a warm-up for working with the colored formulation of \JIntSel{}, this section presents a simple search tree algorithm leading to the following theorem:

\begin{theorem}\label{thm:searchtree}
\JIntSel{} is solvable in $O(\maxcolcli^k\cdot n)$ time, where $\maxcolcli$ is the maximum number of colors occurring in any maximal clique.
\end{theorem}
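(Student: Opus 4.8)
The plan is to build a bounded search tree that branches on colors rather than on vertices. First I would argue that a colorful independent set of size $k$ can be built greedily from left to right: sort all intervals by increasing start point (which the $c$-compact representation already guarantees) and, at each node of the search tree, maintain a partial colorful independent set $S$ together with the right endpoint $r$ of the last interval placed in $S$. The key structural observation is that among all intervals whose start point exceeds $r$, the "leftmost-ending" one is always a safe choice \emph{for its color}: if some optimal completion uses a different interval of the same color, we can exchange it for the leftmost-ending one without creating new intersections and without colliding with later colors. This is the standard interval-scheduling exchange argument, lifted to the colored setting.

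The branching is then as follows. At a search-tree node with current partial solution $S$ (of size $|S| = i$) and current right boundary $r$, let $v$ be the interval with smallest end point among those with $v_s > r$. If no such interval exists, this branch fails. Otherwise, look at the maximal clique $Q$ of $G$ that contains $v$ and is "anchored" at $v_e$ (the clique of all intervals containing the point $v_e$); every candidate interval that is the leftmost-ending choice must lie in $Q$, and crucially $Q$ contains at most $\maxcolcli$ distinct colors. We branch on which color $c$ among the $\le \maxcolcli$ colors present in $Q$ is the color contributed next to $S$: in the branch for color $c$, we add the leftmost-ending interval of color $c$ among those with start point $> r$ (by the exchange argument this particular interval is an optimal representative of color $c$), update $r$ to its end point, and recurse with $i+1$. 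Each recursion increases $|S|$ by one, so the tree has depth at most $k$ and branching factor at most $\maxcolcli$, giving at most $\maxcolcli^k$ leaves.

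For the running time: with the intervals pre-sorted by start point (available in the $c$-compact representation), each node's work — finding the minimum-end-point interval with $v_s > r$, identifying the $\le \maxcolcli$ colors in the relevant clique, and, for each, locating the leftmost-ending interval of that color — can be done in $O(n)$ time after $O(n)$ preprocessing per color, or more carefully in $O(n)$ total per node by a single scan; multiplying by the $O(\maxcolcli^k)$ nodes yields the claimed $O(\maxcolcli^k \cdot n)$ bound. I expect the main obstacle to be the correctness of the color-branching: one must verify that branching over the colors appearing in the single clique anchored at $v_e$ does not miss an optimal solution — i.e., that \emph{some} optimal colorful independent set extending $S$ places, as its next interval (in left-to-right order), an interval whose color appears in that clique. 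This follows because the next interval $w$ in any optimal completion has $w_s > r$, hence $w_e \ge v_e$; if $w_s \le v_e$ then $w \in Q$ and we are done, and if $w_s > v_e$ then $v$ itself could be inserted before $w$, contradicting optimality of the size (or we simply use $v$ as the next interval, whose color is trivially in $Q$). Making this case analysis airtight, together with the exchange step justifying "leftmost-ending representative of color $c$", is the crux; the rest is bookkeeping.
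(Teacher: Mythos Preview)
Your approach is essentially the same as the paper's: your clique $Q$ anchored at $v_e$ is exactly the paper's set $K$ of intervals starting no later than the first end point, and your two exchange arguments are the paper's Lemmas~\ref{lem:firstclique} and~\ref{lem:firstend}. One minor imprecision: the claim that ``every candidate interval that is the leftmost-ending choice must lie in $Q$'' is not literally true (the leftmost-ending interval of a color present in $Q$ may itself start after $v_e$), but this is harmless since your algorithm only needs that the \emph{color} of the next optimal interval lies in $Q$, which your subsequent case analysis correctly establishes.
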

\noindent
\looseness=-1 Only for $\Gamma<6$ the worst-case running time of \autoref{thm:searchtree} can compete with our generalizations of the dynamic program of \citet{HK06} in \autoref{sec:DP} (\autoref{thm:colorcoding}). However, as opposed to the dynamic programs presented in \autoref{sec:DP}, the space requirements of the search tree algorithm are polynomial.

\bigskip\noindent\looseness=-1 The first ingredient in our search tree algorithm is
the following lemma, which shows that a search tree algorithm only has
to consider the ``first'' intervals of the interval graph for inclusion into
an optimal solution. This is illustrated in \autoref{fig:Kset}.

\begin{figure}
  \centering
  \begin{tikzpicture}

    \draw[interval,lightgray] (2.5,0)--(3.5,0);
    \draw[interval,lightgray] (3.1,0.3)--(4.5,0.3);
    \draw[interval] (0,0)--(2,0);
    \draw[interval] (2,-0.3)--(4,-0.3);
    \draw[interval] (-1,0.3)--(2.5,0.3);
    \draw[dashed] (2,-0.75)--(2,0.75);
  \end{tikzpicture}
  \caption{The set $K$~of intervals that start no later than any interval in~$G$ ends are drawn black. The other intervals are drawn gray.}
  \label{fig:Kset}
\end{figure}
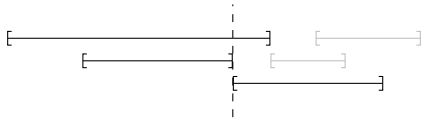

\begin{lemma}\label{lem:firstclique} %
  Let $K$~be the set of intervals that start no later than any interval
  in~$G$ ends. Then, there is a maximum colorful independent set that
  contains exactly one vertex of~$K$.
\end{lemma}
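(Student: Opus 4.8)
The claim is that among all maximum colorful independent sets, at least one uses exactly one interval from $K$, where $K$ is the set of intervals that start no later than the earliest interval end point in $G$. I would prove this in two steps: (1) every colorful independent set contains \emph{at most} one vertex of $K$, and (2) there is a maximum colorful independent set containing \emph{at least} one vertex of $K$, hence exactly one.

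\textbf{Step 1 (at most one).} Let $t$ be the smallest end point of any interval in $G$. By definition, every interval $u \in K$ satisfies $u_s \le t$; since always $u_s \le u_e$ and $u_e \ge t$ (as $t$ is the minimum end point), every interval in $K$ contains the point $t$. Hence any two intervals of $K$ intersect, so $K$ induces a clique in $G$. An independent set can therefore contain at most one vertex of $K$ — this needs no colorfulness at all. (One can equivalently phrase this via \autoref{obs:cliques=compactness}: $K$ is exactly the first maximal clique in the $c$-compact representation.)

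\textbf{Step 2 (at least one, via exchange).} Take any maximum colorful independent set $I$. If $I \cap K \ne \emptyset$ we are done by Step 1. Otherwise, let $v \in I$ be a vertex whose interval has the smallest end point $v_e$ among all intervals in $I$; I claim we can swap $v$ for some $u \in K$ of the same color while keeping independence, colorfulness, and size. Note $v \notin K$ means $v_s > t$. Let $u$ be the interval achieving the minimum end point $t = u_e$ (so $u \in K$). The key observations: first, $u_e = t < v_s \le v_e$, so $u$ lies entirely "to the left" of $v$'s right endpoint; more importantly, for every other $w \in I \setminus \{v\}$ we have $w_e \ge v_e > u_e$ (by choice of $v$) and, since $w$ is non-adjacent to $v$ and intervals are on the line, either $w$ is entirely left of $v$ or entirely right of $v$ — but $w_e \ge v_e$ forces $w$ to lie entirely to the right of $v$, i.e. $w_s > v_e$. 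Then $w_s > v_e > u_e$, so $w$ does not intersect $u$. Hence $(I \setminus \{v\}) \cup \{u\}$ is independent. For colorfulness: replacing $v$ by $u$ only changes colors if $\col(u)$ clashes with some $\col(w)$, $w \in I \setminus\{v\}$. Here I must be slightly careful — in the plain \JIntSel{} setting $\col$ is a single color per vertex; if $\col(u) \notin \{\col(w) : w \in I \setminus \{v\}\}$ we are done immediately; if it does clash, then among all $u' \in K$ I instead pick one whose color is that of $v$. That requires knowing some interval of color $\col(v)$ lies in $K$, which need not hold. So the cleaner route is: rather than fixing the color, observe that the set of colors appearing on $I \setminus \{v\}$ has size $|I|-1 \le \gamma - 1$, and I want \emph{some} $u \in K$ with $\col(u)$ avoiding those colors. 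This may fail, which signals that the exchange argument needs the right "which vertex of $K$" choice.

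\textbf{The main obstacle, and the fix.} The delicate point is Step 2's colorfulness: swapping in an arbitrary leftmost interval of $K$ could repeat a color. The robust argument picks $v \in I$ to be \emph{the unique interval of $I$ that intersects the point $t$} if one exists — but since $I \cap K = \emptyset$ and all of $K$ contains $t$, possibly no $w \in I$ contains $t$ at all; indeed if some $w \in I$ contained $t$ then $w_s \le t$ would put $w \in K$. So \emph{no} vertex of $I$ contains $t$. Now pick $v \in I$ with minimum $v_e$; as argued, all other $w \in I$ have $w_s > v_e \ge t$, lying strictly right of $t$. Thus the colors of $I \setminus \{v\}$ are whatever they are, and I need one interval $u \in K$ whose color differs from all of them. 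The honest conclusion is: if \emph{every} interval in $K$ has a color already used by $I \setminus \{v\}$, then — since $I$ is maximum — adding any $u\in K$ is impossible only because of a color clash, yet we could still try to \emph{replace} the clashing partner. I expect the intended proof sidesteps this entirely by a direct argument: consider a maximum colorful independent set $I$ and, among all such, one minimizing $\sum_{w\in I} w_e$ (or minimizing the leftmost right-endpoint); then show $I$ must already meet $K$, because otherwise the leftmost interval of $I$ could be shifted/replaced by one in $K$ of the same color — and crucially one shows such a same-color interval in $K$ exists whenever it is needed, or one argues the replacement reduces the potential, contradicting minimality. I would therefore structure Step 2 as: assume $I$ is a maximum colorful independent set with $I\cap K=\emptyset$ minimizing the smallest right endpoint; let $v$ realize that smallest right endpoint; replace $v$ by the interval $u\in K$ with smallest right endpoint; verify independence as above; for colorfulness, note $\col(u)$ can clash with at most the colors of $I\setminus\{v\}$, and handle a clash by noting the clashing vertex $w$ has $w_s > v_e \ge u_e$ so is disjoint from $u$ — wait, that doesn't remove the color clash. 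The genuinely hard part is thus establishing that a color-clean swap into $K$ always exists; I anticipate the paper resolves this using the structure that colors in $\JIntSel{}$ correspond to jobs and $K$ being the first clique, so that if $\col(u)$ is "used up," one re-runs the argument on the remaining graph $G - K$ restricted to unused colors, ultimately by an induction on the number of maximal cliques.
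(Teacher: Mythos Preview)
Your Step~1 is correct and matches the paper. Step~2 has a genuine gap, which you yourself diagnose: after fixing the vertex $v \in I$ to remove (the one with minimum right endpoint in~$I$) and the vertex $u \in K$ to insert (the globally earliest-ending interval), you cannot guarantee that $\col(u)$ avoids the colors of $I \setminus \{v\}$. Your proposed workarounds---minimizing a potential, or inducting on maximal cliques---do not resolve this and are unnecessary.

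The fix is to reverse the order of the two choices. Pick $v^* \in K$ \emph{first}, namely the interval of~$G$ with minimum end point. You have already argued that no interval of~$I$ intersects~$v^*$: anything intersecting $v^*$ has start point at most $v^*_e = t$ and hence lies in~$K$, but $I \cap K = \emptyset$. Now use maximality of~$I$ to determine the vertex to remove: since $I \cup \{v^*\}$ is independent, it fails to be colorful only if some $w \in I$ has $\col(w) = \col(v^*)$; otherwise $I \cup \{v^*\}$ contradicts maximality. Then $(I \setminus \{w\}) \cup \{v^*\}$ is independent (because $v^*$ intersects nothing in~$I$) and colorful (because $w$ was the unique vertex of color $\col(v^*)$ in~$I$), and it has size~$|I|$ and meets~$K$. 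This is exactly the paper's argument.

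The difference between your attempt and the paper's proof is simply which choice is made first: you fix the vertex to remove and then struggle to find a color-compatible replacement in~$K$; the paper fixes the vertex to add and lets maximality of~$I$ hand over the correct vertex to remove.
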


\begin{proof}
  Let $I$~be a maximum colorful independent set for~$G$ with $I\cap
  K=\emptyset$ and let $v^*$ be the interval in~$G$ that ends
  first. Obviously, $v^*\in K$ and any interval $v\in I$
  intersecting~$v^*$ is in~$K$. Hence, since $I\cap K=\emptyset$,
  $I$~contains no interval intersecting~$v^*$.  It follows that $I$
  contains a vertex~$w$ such that $\col(w)=\col(v^*)$, otherwise
  $I\cup\{v^*\}$ would be a larger colorful independent set. Now,
  $I'=(I\setminus\{w\})\cup\{v^*\}$ is a colorful independent set
  for~$G$ with $|I'|=|I|$ and~$v^*\in I'\cap K$.

  Finally, note that $I$~cannot contain more than one vertex of~$K$
  since the intervals in~$K$ pairwise intersect.  
\end{proof}

\noindent The second ingredient in our search tree algorithm is the
following lemma, which shows that knowing the color of the interval in
$K$ that is to be included in an optimal solution is sufficient to
choose an optimal interval from~$K$ into a maximum colorful independent
set.

\begin{lemma}\label{lem:firstend}
  Let $K$~be the set of intervals that start no later than any
  interval in~$G$ ends. Moreover, assume that there is
  a maximum
  colorful independent set containing an interval of color~$c$ from~$K$.

  Then, there is a maximum colorful independent set that contains the
  interval of color~$c$ from~$K$ that ends first.
\end{lemma}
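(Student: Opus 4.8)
The plan is a direct exchange argument, in the same spirit as the proof of \autoref{lem:firstclique}. I would start with a maximum colorful independent set~$I$ that contains some interval $v\in K$ with $\col(v)=c$, and let $v^*$ be an interval of color~$c$ in~$K$ whose end point~$v^*_e$ is smallest among all color-$c$ intervals in~$K$; in particular $v^*_e\le v_e$. The goal is then to show that $I':=(I\setminus\{v\})\cup\{v^*\}$ is again a maximum colorful independent set, which proves the lemma. If $v^*\in I$ already, we are done immediately: since $I$ is colorful and $v\in I$ has color~$c$, this forces $v^*=v$, so $I$ itself contains the first-ending color-$c$ interval of~$K$. Hence we may assume $v^*\notin I$.

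The crucial observation is that membership of~$v$ in~$K$ pins down where the other intervals of~$I$ can lie. By definition of~$K$, the start point~$v_s$ is no larger than the end point of \emph{every} interval of~$G$; hence for each $w\in I\setminus\{v\}$ we have $w_e\ge v_s$, so $w$ cannot lie entirely to the left of~$v$. Since $w$ and~$v$ are non-adjacent, their closed intervals are disjoint, and the only remaining possibility is $w_s>v_e$. Combined with $v^*_e\le v_e$ this gives $w_s>v^*_e$, so $w$ does not intersect~$v^*$ either. Therefore $I'$ is an independent set.

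It then remains to check that $I'$ is colorful and has the same size as~$I$. The size is clear, because we removed one interval ($v\in I$) and added one interval ($v^*\notin I$). Colorfulness follows because $\col(v^*)=\col(v)=c$, so $I'$ uses exactly the same multiset of colors as~$I$, which is colorful by assumption. Hence $I'$ is a maximum colorful independent set that contains the color-$c$ interval of~$K$ ending first.

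I do not expect a genuine obstacle here; the only point requiring care is the case distinction showing that every interval of $I$ other than~$v$ lies strictly to the right of~$v$ (and hence of~$v^*$), which hinges precisely on the defining property of~$K$ forcing $w_e\ge v_s$. Tie-breaking among color-$c$ intervals of~$K$ that share the minimum end point is irrelevant, since the argument goes through for any such interval.
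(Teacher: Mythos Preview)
Your proposal is correct and follows essentially the same exchange argument as the paper: both replace the color-$c$ interval $v\in I\cap K$ by the first-ending color-$c$ interval $v^*\in K$ and argue that $I'=(I\setminus\{v\})\cup\{v^*\}$ remains a maximum colorful independent set. The paper phrases the key step as ``$v$ intersects all intervals that intersect~$v^*$'' (and cites \autoref{lem:firstclique} to rule out a second interval of~$K$ in~$I$), whereas you spell out the contrapositive directly via the inequalities $w_e\ge v_s$ and $w_s>v_e\ge v^*_e$; the content is the same.
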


\begin{proof}
  Let $I$~be a maximum colorful independent set, let~$v\in K\cap I$ and let~$\col(v)=c$. Moreover, let $v^*$ be the interval in~$K$ with $\col(v^*)=c$ that ends first. By \autoref{lem:firstclique}, $I$~contains at most one interval of~$K$.
  Then, since $v$ intersects all intervals that intersect~$v^*$, we know that $I'=(I\setminus\{v\})\cup\{v^*\}$ is a colorful independent set with $|I'|=|I|$.  
\end{proof}

\noindent Using \autoref{lem:firstclique} and \autoref{lem:firstend}, it is easy to prove \autoref{thm:searchtree}.

\begin{proof}[Proof of \autoref{thm:searchtree}]
  The algorithm works as follows. First, find the set~$K$ of intervals
  that start no later than any interval in~$G$ ends. Let
  $C:=\bigcup_{v\in K}\col(v)$ be the set of colors occurring
  in~$K$. Note that these computations can be executed in
  $O(n)$~time. Since the intervals in~$K$ form a maximal clique, it
  follows that $|C|\leq\maxcolcli$. By \autoref{lem:firstclique} and
  \autoref{lem:firstend}, it is now sufficient, for each color~$c\in C$
  and the first-ending interval~$v$ with $\col(v)=c$, to try
  choosing~$v$ for inclusion into the solution and to try recursively
  finding a colorful independent set of size~$k-1$ in the interval
  graph~$G$ without vertices having color~$c$ or intersecting~$v$ (that
  is, starting after $v$~ends).

The recursion depth is bounded by~$k$, each recursion step causes at most $\maxcolcli$~new recursion steps, and each recursion step requires $O(n)$~time, yielding a total running time of $O(\maxcolcli^k\cdot n)$. 
\end{proof}

\subsection{Generalizations of \citet{HK06}'s Dynamic Program}
\label{sec:DP}

In this section, we first present the dynamic program for \JIntSel{} by \citet{HK06} in terms of our colored model. Based on this presentation, we show modifications in order to lower its space requirements, we generalize it to \MColIS{} and, finally, transform it into a \fp{} algorithm with respect to the parameter~$k$.

It is easy to see that the dynamic programs in this section can be straightforwardly generalized to the problem variant where each interval has assigned a weight and we search for a colorful independent set of maximum weight, rather than  of maximum size.

\paragraph{\boldmath Dynamic Program for Parameter ``Number~$\gamma$ of Colors''.}
Let $(G,k)$~be an instance of \JIntSel{}, where~$G$ is given in
$c$-compact representation for minimum~$c$. For $i\in [c+1]$ and
$C\subseteq[\gamma]$, we use~$\entry{i,C}$ to denote the size of a
maximum colorful independent set in~$G$ that uses only intervals whose
start point is at least~$i$ and whose color is in~$C$.  Obviously,
for~$i=c+1$ and any $C\subseteq[\gamma]$, we have
$\entry{i,C}=0$. Knowing~$\entry{i,C}$ for some $i\in[c+1]$ and
all~$C\subseteq[\gamma]$, we can easily compute $\entry{i-1,C}$ for
all~$C\subseteq[\gamma]$, since there are only two cases:
\begin{noindlist}
\item There is a maximum independent set of intervals with start point at least~$i-1$ and colors belonging to~$C$ that contains an interval~$v$ with $v_s=i-1$. Then, $\entry{i-1,C}=1+\entry{{v_e+1},C\setminus\{\col(v)\}}$.
\item Otherwise, $\entry{i-1,C}=\entry{i,C}$.
\end{noindlist}
It follows that we can compute the size $\entry{1,[\gamma]}$ of a maximum colorful independent set in~$G$ using the recurrence
\begin{align*}
&\entry{i-1,C}\tag{DP-$\gamma$}\label{DP-gamma}\\&=\max
  \begin{cases}
    T[i,C],\\
    1+{}\smashoperator{\mathop{\smashoperator{\max\limits_{v\in V,v_s=i-1}}}\limits_{\col(v)\in C}}T[v_e+1,C\setminus\{\col(v)\}].\hspace{-1em}
  \end{cases}
\end{align*}
In this way, we obtain an alternative formulation of the dynamic
program of \citet{HK06} using colored interval graphs instead of
a geometric formulation. We can evaluate recurrence~\eqref{DP-gamma}
in~$O(2^\gamma n)$~time by iterating over the intervals in~$G$ in order
of decreasing start points and, for each interval, iterating over all
subsets of~$[\gamma]$. In this way, we first handle all intervals with
start point~$c$, then with~$c-1$ and so on, so that we compute the
table entries for decreasing start points~$i\in[c+1]$. Herein, the
$c$-compact representation not only ensures that the intervals are
sorted by their start points, but also that, for each~$i\in[c]$, some
interval starts in~$i$ and, therefore, that the table entry~$T[i,C]$
indeed gets filled for all~$C\subseteq[\gamma]$. This algorithm yields an
alternative proof for a result by \citet{HK06}:

\begin{proposition}[\citet{HK06}]\label{lem:gamma}
  \JIntSel{} is solvable in $O(2^\gamma\cdot n)$~time.
\end{proposition}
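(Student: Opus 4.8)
The plan is to turn the table and recurrence sketched immediately above into a verified dynamic program. Fix a $c$-compact representation of~$G$ for minimum~$c$; by \autoref{obs:cliques=compactness}, $c$~is the number of maximal cliques of~$G$, and in particular $c\le n$. For $i\in[c+1]$ and $C\subseteq[\gamma]$, let $T[i,C]$ be the maximum size of a colorful independent set in~$G$ using only intervals~$v$ with $v_s\ge i$ and $\col(v)\in C$. The answer to the instance is ``yes'' iff $T[1,[\gamma]]\ge k$, and the base case is $T[c+1,C]=0$ for all~$C$, since no interval starts at a position~$\ge c+1$.

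Next I would verify recurrence~\eqref{DP-gamma}. Take an optimal colorful independent set~$I$ witnessing $T[i-1,C]$. Either no interval of~$I$ starts exactly at~$i-1$, in which case $I$~is also feasible for $T[i,C]$, so $T[i-1,C]=T[i,C]$; or some $v\in I$ has $v_s=i-1$, and then every other interval of~$I$ is disjoint from~$v$ and hence starts at~$\ge v_e+1$, so $I\setminus\{v\}$ witnesses $T[v_e+1,C\setminus\{\col(v)\}]\ge|I|-1$, giving $T[i-1,C]\le 1+T[v_e+1,C\setminus\{\col(v)\}]$ for that particular~$v$. The matching lower bounds in both cases are obtained by exhibiting the obvious candidate set. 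Taking the maximum over the two cases and over all $v$ with $v_s=i-1$ and $\col(v)\in C$ yields the recurrence. Note that $v_e+1$ can be as large as~$c+1$, which is precisely why the table is indexed up to~$c+1$.

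For the running time I would evaluate the table for $i$ from~$c+1$ down to~$1$. At stage~$i-1$, first copy $T[i,C]$ into $T[i-1,C]$ for all $2^\gamma$ subsets~$C$ (the first branch of the recurrence); this costs $O(2^\gamma)$ per stage, hence $O(2^\gamma c)\le O(2^\gamma n)$ in total. Then iterate once over all intervals in the given nondecreasing order of start points, and for each interval~$v$ with $v_s=i-1$ perform $T[i-1,C]\gets\max(T[i-1,C],\,1+T[v_e+1,C\setminus\{\col(v)\}])$ for each of the $2^\gamma$ subsets~$C$ containing~$\col(v)$; each interval is handled exactly once and contributes $O(2^\gamma)$ work, for $O(2^\gamma n)$ total. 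Because the minimum $c$-compact representation has an interval start at every position of~$[c]$, every entry $T[i,C]$ is filled before being read, and the intervals come already sorted. Summing the two contributions gives $O(2^\gamma\cdot n)$ time, reproving \citet{HK06}'s bound.

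The work here is bookkeeping rather than depth: one must check that the indices $v_e+1$ stay inside $[c+1]$ (handled by the base case), that entries are produced in the order they are consumed (handled by decreasing~$i$ together with the sortedness of the compact representation), and that the $O(2^\gamma c)$ overhead of the copy step does not dominate (handled by $c\le n$ from \autoref{obs:cliques=compactness}). The only nonroutine point is that obtaining $O(2^\gamma n)$ rather than $O(2^\gamma(n+c))$ genuinely uses compactness; everything else is a direct induction on~$i$.
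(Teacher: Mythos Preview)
Your proposal is correct and follows essentially the same approach as the paper: the same table definition, the same recurrence~\eqref{DP-gamma}, the same evaluation order by decreasing start points, and the same reliance on the minimum $c$-compact representation to guarantee that every position carries a start point so entries are filled before they are read. Your write-up is slightly more explicit than the paper's in separating the $O(2^\gamma c)$ copy cost from the per-interval cost and in invoking $c\le n$ to absorb it, but this is bookkeeping rather than a different idea.
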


\paragraph{\boldmath Dynamic Program for Parameter ``Maximum Number~$Q$ of Live Colors''.}
\citet{HK06} improved recurrence \eqref{DP-gamma} from using
the parameter~$\gamma$ to the structural parameter~$Q\leq \gamma$, which
is defined as follows:

\begin{definition}\label{def-Q} Let $G$~be an interval graph given in
  $c$-compact representation and with vertex colors in~$[\gamma]$. For
  each $i\in[c+1]$, let
  \begin{description}
  \item[{$L_i\subseteq[\gamma]$}] be the set of colors that appear on
    intervals with start point at most~$i$ (note that
    $L_{c+1}=[\gamma]$), and
  \item[{$R_i\subseteq[\gamma]$}] be the set of colors that appear on
    intervals with start point at least~$i$ (note that
    $R_{c+1}=\emptyset$).
  \end{description}
  Then $Q:=\max_{i\in[c+1]}|L_i\cap R_i|$ is the maximum number of
  \emph{live colors}.  That is, a color~$c$ is \emph{live} at a
  point~$i$ if there is an interval with color~$c$ that starts no
  later than~$i$ as well as an interval with color~$c$ that starts no
  earlier than~$i$.
\end{definition}

Using this definition, we first observe that,
when searching for a maximum colorful independent set containing only intervals with start point at least~$i$, it is safe to allow
this independent set to contain all colors of~$\bar L_i:=[\gamma]\setminus L_i$: this is because an interval with start point before~$i$ cannot have a color in~$\bar L_i$. Hence, we are only interested in the values~$\entry{i,C}$ for $i\in[c+1]$ and $\bar L_i\subseteq C\subseteq [\gamma]$.
Second, a colorful independent set that only contains intervals with start point at least~$i$ only contains intervals of color~$R_i$. Therefore, it is safe to allow only colors contained in~$R_i$ and we see that we are only interested in the values~$\entry{i,C}$ for $i\in[n+1]$ and $\bar L_i\subseteq C\subseteq R_i$.
There are at most~$2^Q$~such subsets, since for each~$C$ with $\bar L_i\subseteq C\subseteq R_i$, we have $C\setminus\bar L_i\subseteq L_i\cap R_i$.

Exploiting these observations in \eqref{DP-gamma}, we can compute $T[i-1,C]$ for all~$C$ with~$\bar L_{i-1}\subseteq C \subseteq R_{i-1}$ as
\begin{align*}
  &\entry{i-1,C}\tag{DP-$Q$}\label{DP-Q}\\&=\max
  \begin{cases}
    T[i,(C\cup\bar L_i)\cap R_i],\\
    1+{}\smashoperator{\mathop{\smashoperator{\max\limits_{v\in V,v_s=i-1}}}\limits_{\col(v)\in C}}T[v_e+1,(C\cup\bar L_{v_e+1})\cap (R_{v_e+1}\setminus\{\col(v)\})].
  \end{cases}
\end{align*}
As we have not changed the semantics of a table entry compared to \eqref{DP-gamma}, the size of a maximum colorful independent set in~$G$ is, as before, $\entry{1,[\gamma]}$. Hence, the improved dynamic program of \citet{HK06} also works in our colored model:

\begin{proposition}[\citet{HK06}]
  \JIntSel{} is solvable in $O(2^Q\cdot n)$~time, where $Q$~is the maximum number of live colors  as defined in \autoref{def-Q}.
\end{proposition}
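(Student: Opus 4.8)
The plan is to verify that the recurrence \eqref{DP-Q} correctly computes the same table entries as \eqref{DP-gamma} on the restricted index set, and then to bound the running time by counting the relevant entries. First I would argue correctness: the semantics of $\entry{i,C}$ is unchanged — it is the size of a maximum colorful independent set using only intervals with start point $\geq i$ and colors in $C$. The two observations preceding the recurrence show that the only entries we ever need are those with $\bar L_i \subseteq C \subseteq R_i$: colors in $\bar L_i$ are ``free'' (no interval starting before $i$ carries them, so including them never hurts), and colors outside $R_i$ are useless (no interval starting at $i$ or later carries them). So it suffices to check that each lookup on the right-hand side of \eqref{DP-Q} lands in this admissible range and that the value looked up equals the value that \eqref{DP-gamma} would have used. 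For the first branch, $(C \cup \bar L_i) \cap R_i$ is exactly the admissible ``projection'' of $C$ at level $i$: it contains $\bar L_i \cap R_i$, hence — since $\bar L_i \cap R_i \supseteq \bar L_i$ fails in general but $\bar L_i = \bar L_i \cap R_i$ does hold because every color in $\bar L_i$ appears only on intervals starting $\geq i$, so $\bar L_i \subseteq R_i$ — it contains $\bar L_i$, and it is contained in $R_i$ by construction. Moreover $\entry{i,(C\cup\bar L_i)\cap R_i} = \entry{i,C}$ because dropping colors from $R_i$'s complement and adding colors from $\bar L_i$ changes neither the available intervals (start point $\geq i$) nor which of them are color-admissible. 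The second branch is analogous with $i$ replaced by $v_e+1$ and the color $\col(v)$ removed.

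Second I would handle the initialization and the extraction of the answer. For $i = c+1$ we have $R_{c+1} = \emptyset$, so the only admissible $C$ is $C = \bar L_{c+1} = [\gamma]$ — wait, that is not in $R_{c+1}$; rather $\bar L_{c+1} = \emptyset$ since $L_{c+1} = [\gamma]$, so the sole admissible set is $C = \emptyset$ and $\entry{c+1,\emptyset} = 0$, consistent with \eqref{DP-gamma}. The final answer is $\entry{1,[\gamma]}$; here $\bar L_1$ is the set of colors absent from every interval starting at position $1$, and $R_1 = [\gamma]$, so $[\gamma]$ is admissible at level $1$ and the entry is genuinely computed. One should note the table is filled in order of decreasing $i$, and for each $i$ one iterates over intervals with $v_s = i-1$ (the $c$-compact representation guarantees such intervals exist and are already grouped), exactly as in the proof of \autoref{lem:gamma}.

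Third comes the running-time bound, which is the crux. I would show that for each fixed $i$, the number of admissible sets $C$ with $\bar L_i \subseteq C \subseteq R_i$ is at most $2^Q$. This is immediate from the identity the excerpt already states: the map $C \mapsto C \setminus \bar L_i$ is a bijection from $\{C : \bar L_i \subseteq C \subseteq R_i\}$ onto the subsets of $R_i \setminus \bar L_i = R_i \cap L_i$ (using $\bar L_i \subseteq R_i$), and $|R_i \cap L_i| \leq Q$ by \autoref{def-Q}. Hence each of the $c \leq n$ levels contributes at most $2^Q$ table entries, and computing one entry via \eqref{DP-Q} costs $O(1)$ amortized work plus the cost of scanning the intervals starting at that level; summed over all levels the interval scans total $O(n)$, and the per-entry work totals $O(2^Q \cdot n)$. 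Representing each admissible $C$ by its $Q$-bit ``offset'' $C \setminus \bar L_i$ makes the set operations in \eqref{DP-Q} — union with $\bar L_{j}$, intersection with $R_j$, removal of $\col(v)$ — constant-time bit manipulations once $\bar L_j$ and $R_j$ are precomputed, which takes $O(n)$ time overall. The main obstacle I anticipate is precisely this bookkeeping: one must be careful that when a lookup $\entry{j, C'}$ is performed with $j = v_e + 1 > i$, the set $C'$ is re-expressed relative to the offset basis at level $j$ (i.e.\ as $C' \setminus \bar L_j \subseteq L_j \cap R_j$) so that it indexes the already-computed sub-table correctly — the projection $(C \cup \bar L_j) \cap R_j$ in \eqref{DP-Q} is exactly what performs this rebasing, and checking that it always yields a set within the size-$2^Q$ index range (rather than some set that was never tabulated) is the one spot where the argument could go wrong. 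Everything else is routine.
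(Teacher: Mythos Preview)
Your proposal is correct and follows essentially the same approach as the paper: the paper's ``proof'' is precisely the discussion preceding the proposition---the two observations that only sets $C$ with $\bar L_i\subseteq C\subseteq R_i$ matter, the counting argument via $C\setminus\bar L_i\subseteq L_i\cap R_i$, and the recurrence~\eqref{DP-Q} itself---and you have filled in the details (that lookups land in the admissible range, that $\bar L_i\subseteq R_i$, the rebasing) that the paper leaves implicit. Your running-time accounting is slightly muddled in phrasing (the interval scans are not $O(n)$ total but $O(2^Q\cdot n)$ since each admissible $C$ triggers its own scan), but you arrive at the correct $O(2^Q\cdot n)$ bound, matching the paper.
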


\paragraph{Improving the Space Complexity.} Having stated the dynamic programs of \citet{HK06} in terms of our colored model, we now build upon these algorithms.  Obviously, the dynamic programming table of recurrence \eqref{DP-Q} has $2^Q\cdot (c+1)$~entries. We improve it to~$2^Q \cdot(\ell+2)$, where $\ell$ is the length of the longest interval in the input interval graph. That is, if $Q$ and $\ell$ are constant, we can solve arbitrarily large input instances using a constant-size dynamic programming table. %
Note that, even if $\ell$~is not bounded by a constant, we have $\ell\leq c-1$, and therefore $2^Q\cdot(\ell+2)\leq 2^Q\cdot(c+1)$, since the input instance is given in a $c$-compact representation.

The improvement of space complexity is based on a simple observation: when computing $\entry{i-1,C}$ in \eqref{DP-Q}, there is a largest possible~$i'>i-1$ and some color set~$C'$ for which we access $\entry{i',C'}$. By definition of~$\entry{}$, $i'=v_e+1$ for some interval~$v$ with start point~$v_s=i-1$. We have $i'-1=v_e\leq v_s+\ell=i-1+\ell$, and, hence, $i'\leq i+\ell$. It follows that we only need $2^Q(\ell+2)$ table entries, since the entry~$\entry{i-1,C}$ does not need the value~$\entry{i+\ell+1,C}$ and can therefore reuse the space previously occupied by~$\entry{i+\ell+1,C}$. This we simply achieve by storing~$\entry{i,C}$ for~$i\in [c+1]$ and~$C\subseteq [\gamma]$ in a table~$\entryp{i \bmod (\ell +2),C}$ that has only~$\ell+2$ entries in the first coordinate. Having shrunken the dynamic programming table in this way, we obtain the following lemma:

\begin{proposition}\label{lem:ell-Q}
  \JIntSel{} is solvable in $O(2^Q\cdot n)$~time and
  $O(2^Q\ell+\gamma c)$~space when the input graph is given in
  $c$-compact representation, $\ell$ is the maximum interval length,
  and $Q$~is the maximum number of live colors as defined in
  \autoref{def-Q}.
\end{proposition}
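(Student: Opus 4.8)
The plan is to take the dynamic program of recurrence~\eqref{DP-Q}, whose correctness and $O(2^Q\cdot n)$ running time have already been established, and argue that only a bounded ``window'' of the first coordinate needs to be kept in memory at any time. First I would observe, exactly as in the discussion preceding the proposition, that when we evaluate $\entry{i-1,C}$ the recurrence only refers to table entries $\entry{i,\cdot}$ and $\entry{v_e+1,\cdot}$ where $v$ is an interval with $v_s=i-1$; since the input is given in a $c$-compact representation, the maximum interval length~$\ell$ satisfies $v_e\le v_s+\ell$, so $v_e+1\le i+\ell$. Hence the computation of the row for first coordinate~$i-1$ depends only on rows with first coordinate in $\{i,i+1,\dots,i+\ell\}$, a set of at most $\ell+1$ consecutive values; allowing for the sentinel row at $c+1$ this is $\ell+2$ rows.

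The key step is then to implement this sliding window. I would replace the table $\entry{i,C}$ indexed by $i\in[c+1]$ with a table $\entryp{i\bmod(\ell+2),C}$, processing start points $i=c,c-1,\dots,1$ in decreasing order as before. When we are about to write row $i-1$, the only row among the last $\ell+2$ that we no longer need is row $i+\ell+1$, which occupies the slot $(i+\ell+1)\bmod(\ell+2)=(i-1)\bmod(\ell+2)$ — precisely the slot we are about to overwrite. So no entry is clobbered before it is read, and the algorithm computes the same values as \eqref{DP-Q}; in particular $\entry{1,[\gamma]}$ is read out correctly at the end. The running time is unchanged at $O(2^Q\cdot n)$ since we still touch each interval once and each of the $\le 2^Q$ relevant color sets once per start point.

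For the space bound I would separate the two contributions. The shrunken dynamic programming table has $(\ell+2)$ entries in the first coordinate and at most $2^Q$ relevant color sets $C$ (with $\bar L_i\subseteq C\subseteq R_i$) in the second, giving $O(2^Q\ell)$ cells. On top of that we need $O(\gamma c)$ space to store the $c$-compact input representation together with, for each position $i\in[c]$, the sets $L_i$ and $R_i$ (equivalently, the incremental color information needed to translate between the abstract index $C\subseteq[\gamma]$ of \eqref{DP-gamma} and the compressed index used in \eqref{DP-Q}); this is the term $\gamma c$. Summing gives $O(2^Q\ell+\gamma c)$ space, and the stated inequality $2^Q(\ell+2)\le 2^Q(c+1)$ follows from $\ell\le c-1$, which holds because the longest interval in a $c$-compact representation spans at most the $c$ event points and the representation is minimal (cf.\ \autoref{obs:shrinkints}).

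The main obstacle — really the only subtle point — is the bookkeeping that lets us index the table by $C$ while keeping only $2^Q$ entries per row: the color set relevant at position $i$ is not a fixed family but the interval $\bar L_i\subseteq C\subseteq R_i$, which shifts as $i$ decreases, so one must be careful that the remapping $C\mapsto(C\cup\bar L_i)\cap R_i$ used when jumping from row $i$ to row $i-1$ (and from row $v_e+1$ back) is consistent with the modular addressing of the window. Once one checks that these two re-indexings commute — the window shift touches only the first coordinate, the color remapping only the second — correctness is immediate from \eqref{DP-Q}, and the accounting above yields the claimed bounds.
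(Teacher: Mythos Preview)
Your proposal is correct and essentially identical to the paper's argument: the paper also observes that computing $\entry{i-1,C}$ accesses only entries with first coordinate at most $i+\ell$, reuses the slot of $\entry{i+\ell+1,C}$ for $\entry{i-1,C}$ via the table $\entryp{i\bmod(\ell+2),C}$, and attributes the $O(2^Q\ell)$ term to the shrunken table and the $O(\gamma c)$ term to storing the sets $L_i,R_i$. Your additional remarks on why the modular overwrite never clobbers a needed entry and on the independence of the window shift from the color remapping are more explicit than the paper but add no new ideas.
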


\noindent Herein, $O(2^Q\ell$)~space is used by the dynamic
programming table and $O(\gamma c)$~space is used to hold the
sets~$L_i$ and~$R_i$ from \autoref{def-Q}, which we used to speed up
the dynamic programming.

\paragraph{Generalization to \MColIS{}.}  We now generalize \eqref{DP-Q} to \MColIS{}. That is, vertices are now allowed to have multiple colors instead of just one and we search for a maximum independent set that is colorful in the sense that no pair of vertices may have common colors. The algorithm for \MColIS{} will allow us to solve \MCIS{} in \autoref{fpt-mcis}.

Due to the formulation of \eqref{DP-Q} in our colored model, the generalization to \MColIS{} turns out to be easy. For $i\in [c+1]$ and $C\subseteq [\gamma]$, we use $\entry{i,C}$ to denote the size of a maximum colorful independent set in~$G$ that uses only intervals with start point at least~$i$ and whose colors are a subset of~$C$.

Completely analogously to \JIntSel{}, we can compute the size~$\entry{1,[\gamma]}$ of a maximum colorful independent set by computing $\entry{i-1,C}$ for each~$i\in[c+1]$ and all color sets~$C$ with $\bar L_{i-1}\subseteq C\subseteq R_{i-1}$ as
\begin{align*}
  &\entry{i-1,C}\tag{DP-$Q$*}\label{DP-Qp}\\&=\max
  \begin{cases}
T[i,(C\cup\bar L_i)\cap R_i],\\
1+{}\smashoperator{\mathop{\smashoperator{\max\limits_{v\in V,v_s=i-1}}}\limits_{\col(v)\subseteq C}}T[v_e+1,(C\cup\bar L_{v_e+1})\cap (R_{v_e+1}\setminus\col(v))].
  \end{cases}
\end{align*}
The improvement of the space complexity demonstrated for \JIntSel{} also works here. Hence, we can merge Propositions~\ref{lem:gamma}--\ref{lem:ell-Q} into the following theorem:

\begin{theorem}\label{thm:mcolis-fpt}
  Given an interval graph with $\gamma$~colors and maximum interval
  length~$\ell$ in $c$-compact interval representation, \MColIS{} is
  solvable in $O(2^Q\cdot n)$~time and $O(2^Q\ell+\gamma
  c)$~space, where $Q$~is the maximum number of live colors as defined
  in \autoref{def-Q}.
\end{theorem}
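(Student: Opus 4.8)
The plan is to assemble \autoref{thm:mcolis-fpt} directly from the pieces already developed in this subsection, verifying only that the generalization from \JIntSel{} to \MColIS{} does not break any of the running-time or space arguments. First I would establish correctness of the recurrence~\eqref{DP-Qp}: by induction on decreasing~$i$, the entry~$\entry{i,C}$ equals the size of a maximum colorful independent set using only intervals with start point at least~$i$ and color set contained in~$C$. The base case~$i=c+1$ is immediate since no interval starts at~$c+1$, so $\entry{c+1,C}=0$. For the inductive step, an optimal set either uses no interval starting exactly at~$i-1$ (first case: the solution lives entirely among intervals starting at~$i$ or later, and the color-set massaging $(C\cup\bar L_i)\cap R_i$ leaves the semantics unchanged by the two observations preceding \eqref{DP-Q} — colors in~$\bar L_i$ can be freely added and colors outside~$R_i$ are never usable), or it includes some interval~$v$ with $v_s=i-1$, whose color set must satisfy $\col(v)\subseteq C$ (colorfulness) and after which we recurse on intervals starting at~$v_e+1$ with the remaining colors $R_{v_e+1}\setminus\col(v)$ together with the now-free colors~$\bar L_{v_e+1}$. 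The only change from \JIntSel{} is replacing the single color~$\col(v)$ by the set~$\col(v)$ in the membership test and the set difference; no other part of the argument uses that a vertex has exactly one color. Hence $\entry{1,[\gamma]}$ is the sought maximum.

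Next I would handle the running time. Evaluate~\eqref{DP-Qp} by iterating over intervals in order of non-increasing start point; because the representation is $c$-compact, the intervals are already sorted by start point and every position in~$[c]$ is some interval's start point, so the loop visits positions $c, c-1, \dots, 1$ and fills in all needed entries. For a fixed position~$i-1$ we only need entries~$\entry{i-1,C}$ with $\bar L_{i-1}\subseteq C\subseteq R_{i-1}$; as already argued, there are at most~$2^Q$ such sets since $C\setminus\bar L_{i-1}$ ranges over subsets of $L_{i-1}\cap R_{i-1}$. Charging the work for the second case of~\eqref{DP-Qp} to the intervals starting at~$i-1$ rather than to the color sets, each interval~$v$ is processed once per relevant color set~$C$, i.e.\ $O(2^Q)$ times, and each such processing is constant-time (lookups plus set operations on a fixed machine word, assuming $\gamma$ fits in $O(1)$ words, or $O(\gamma/w)$ otherwise which is absorbed). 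Summing over all $n$~intervals and over the $O(2^Q)$ first-case updates per position gives $O(2^Q\cdot n)$ total, exactly as in \autoref{lem:ell-Q}.

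Finally the space bound. The sets~$L_i$ and~$R_i$ from \autoref{def-Q} are precomputed and stored, costing $O(\gamma c)$. For the dynamic-programming table I would reuse verbatim the rolling-array trick from the paragraph ``Improving the Space Complexity'': when computing $\entry{i-1,C}$, any accessed entry $\entry{i',C'}$ has $i'=v_e+1$ for some~$v$ with $v_s=i-1$, so $i'\le i-1+\ell+1 = i+\ell$; thus the entry $\entry{i+\ell+1,\cdot}$ is never needed again and we store $\entry{i,C}$ in slot $i\bmod(\ell+2)$, using only $2^Q(\ell+2)$ cells. This argument is insensitive to whether colors are singletons or sets, so it carries over unchanged, yielding $O(2^Q\ell+\gamma c)$ space in total. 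Combining the three points proves the theorem, subsuming Propositions~\ref{lem:gamma}--\ref{lem:ell-Q} as the special case where each $\col(v)$ is a singleton. I do not anticipate a genuine obstacle here — the work is entirely in checking that the single-color-to-set substitution preserves each earlier argument; the one place warranting a sentence of care is confirming that the two semantic observations before \eqref{DP-Q} (adding $\bar L_i$-colors is free; restricting to $R_i$ is free) still hold with list-colorings, which they do because they only concern which colors \emph{can possibly} appear on intervals in a given start-point range, independently of how many colors each vertex carries.
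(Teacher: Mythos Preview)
Your proposal is correct and follows exactly the paper's approach: the paper's ``proof'' of \autoref{thm:mcolis-fpt} is simply the remark that the recurrence~\eqref{DP-Qp} works ``completely analogously to \JIntSel{}'' and that ``the improvement of the space complexity demonstrated for \JIntSel{} also works here,'' thereby merging Propositions~\ref{lem:gamma}--\ref{lem:ell-Q}. You have spelled out in detail precisely the verifications the paper leaves implicit---that replacing the single color by a color set in both the membership test and the set difference preserves correctness, the $2^Q$ bound on relevant color sets, the running-time accounting, and the rolling-array space argument---but the underlying route is identical.
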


\paragraph{\boldmath Algorithm for Parameter ``Solution Size~$k$''.}
\label{sec:JIS k fpt}
We now improve recurrence \eqref{DP-gamma} to a \fp{} algorithm for \JIntSel{} with respect to the parameter~${k\leq \gamma}$. Our first step is providing a randomized \fp{} algorithm for \JIntSel{}. The algorithm correctly answers if a no-instance of \JIntSel{} is given. In contrast, it rejects ``yes''-instances with a given error probability~$\varepsilon$. The randomized algorithm can be derandomized to show the following theorem:

\begin{theorem}\label{thm:colorcoding}
  \JIntSel{} can be solved
 with error probability~$\varepsilon$ in $O(5.5^k\cdot |\ln\varepsilon|\cdot n)$~time and $O(2^k\cdot\ell)$~space. The algorithm can be derandomized to deterministically solve \JIntSel{} in $O(12.8^{k}\cdot \gamma n)$~time.
\end{theorem}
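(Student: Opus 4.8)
The plan is to apply the color-coding technique of Alon, Yuster, and Zwick on top of recurrence~\eqref{DP-gamma}. We want a colorful independent set of size exactly~$k$, which uses at most~$k$ distinct (original) colors. So first I would randomly assign to each original color in~$[\gamma]$ one of~$k$ new ``labels''. If there is a colorful independent set~$I$ of size~$k$, then its~$k$ vertices use~$k$ distinct original colors, and with probability at least~$k!/k^k \ge e^{-k}$ these~$k$ original colors receive~$k$ pairwise distinct labels; in that case~$I$ is also ``rainbow'' with respect to the labeling. Conversely, any rainbow solution of size~$k$ is in particular colorful. So after relabeling it suffices to search for a size-$k$ independent set in which the~$k$ selected intervals carry~$k$ distinct labels from~$[k]$ — and now we can run the dynamic program of \eqref{DP-gamma} with $\gamma$ replaced by~$k$, iterating over subsets of the label set~$[k]$, in $O(2^k \cdot n)$~time. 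Running this $O(e^k \cdot |\ln\varepsilon|)$ times (each trial with a fresh random labeling) makes the error probability at most~$\varepsilon$; this already gives an $O((2e)^k \cdot |\ln\varepsilon|\cdot n)$-time algorithm. The space improvement of \autoref{lem:ell-Q} (storing the table modulo $\ell+2$ in the first coordinate) carries over verbatim, giving the claimed $O(2^k\cdot\ell)$~space.

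The constant~$2e \approx 5.44$ is already below~$5.5$, but to be safe and to match the stated bound cleanly I would use the standard refinement: instead of demanding that a fixed optimum solution become rainbow, one balances the number of labels against the per-trial cost. With $t$ labels for~$k$ elements, the probability a fixed $k$-set becomes rainbow is $\binom{t}{k}k!/t^k$, and the DP costs $O(2^t n)$; the number of repetitions needed scales like the reciprocal of that probability. Optimizing $2^t$ against $t^k/(\binom{t}{k}k!)$ over $t\ge k$ — using Stirling — yields the best base, and one checks it is at most~$5.5^k$. (Taking simply $t=k$ already yields base $2e<5.5$, so this step is only cosmetic; I would present the $t=k$ version and remark that the base is $2e$, or do the mild optimization to state $5.5$.) I would also note, as the excerpt already remarks for \eqref{DP-gamma}, that the $c$-compact representation ensures the intervals are sorted by start point and that every position in~$[c]$ is a start point, so the table fills correctly.

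For the derandomization I would invoke a standard $(\gamma,k)$-perfect hash family: a family $\mathcal F$ of functions $[\gamma]\to[k]$ such that for every $k$-subset $S\subseteq[\gamma]$ some $f\in\mathcal F$ is injective on~$S$. Such families of size $e^{k}k^{O(\log k)}\log\gamma$ (Naor–Schulman–Srinivasan / Alon et al.) are constructible in time linear in their size times $\gamma$; using the sharper splitters of Naor et al. one gets $k^{O(\log k)}\cdot \gamma\log\gamma$, but to land on the stated $O(12.8^k\cdot\gamma n)$ one uses a perfect hash family of size $O(6.4^k \cdot \gamma\log\gamma)$ — this is exactly the size obtainable from the construction of \citet{CG99} (or Chen et al.'s refinements). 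Running the label-DP once per $f\in\mathcal F$ costs $|\mathcal F|\cdot O(2^k n) = O(6.4^k\cdot 2^k\cdot \gamma n\log\gamma) = O(12.8^k\cdot \gamma n)$ after absorbing the $\log\gamma \le \log n$ factor (or by a more careful accounting), and correctness is immediate: a size-$k$ solution exists iff for some $f$ the label-DP finds a rainbow size-$k$ solution.

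The main obstacle is purely bookkeeping of constants: getting the randomized base down to the advertised~$5.5$ (rather than the free~$2e\approx 5.44$) and, more delicately, citing or stating a perfect hash family whose size multiplied by the~$2^k$ DP cost is exactly $O(12.8^k)$. The conceptual content — color-coding plus the already-established recurrence \eqref{DP-gamma} and the already-established modular-table space trick — is routine; the care goes into choosing the right off-the-shelf splitter/hash-family construction and tracking the $\gamma$ versus $n$ and $\log$ factors so the final running-time expression matches the theorem statement verbatim.
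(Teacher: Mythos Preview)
Your proposal is correct and follows essentially the same approach as the paper: random recoloring $[\gamma]\to[k]$, run \eqref{DP-gamma} in $O(2^k n)$~time, repeat $O(e^k|\ln\varepsilon|)$~times, and derandomize via a perfect hash family. Two of your ``bookkeeping obstacles'' dissolve in the paper's write-up: the paper simply takes $t=k$ and notes $2e<5.5$, so no optimization over the number of labels is performed; and for the derandomization it cites the $k$-color coding scheme of Chen, Lu, Sze, and Zhang (CLSZ07), which has size and construction time $O(6.4^k\cdot\gamma)$ with \emph{no} $\log\gamma$ factor, so nothing needs to be ``absorbed'' to reach $O(12.8^k\cdot\gamma n)$.
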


\noindent Comparing this theorem with the hardness result in \autoref{cor:NPhwithKColors} from \autoref{sec:dichotomy}, the running time of the derandomized algorithm is optimal up to factors in the base. However, in practical applications, the randomized algorithm is probably preferable over the derandomized one, since the error probability can be chosen very low without increasing the running time significantly. 

 To prove \autoref{thm:colorcoding}, we use the color-coding technique by \citet{AYZ95} to reduce the number~$\gamma$ of colors in the given instance to~$k$. After that, recurrence \eqref{DP-gamma} can be evaluated in~$O(2^k\cdot n)$~time. Depending on whether we reduce the number of colors randomly or deterministically, this method will yield the first or the second running time.

\newcommand{\recoloring}{\delta}
\begin{proof}[Proof of \autoref{thm:colorcoding}]
  Let~$(G,\col,k)$ be an instance of \JIntSel.  In a first step, we
  assign each color in~$[\gamma]$ a color in~$[k]$ uniformly at
  random.  Let $\recoloring\colon[\gamma]\rightarrow[k]$ denote this recoloring
  and let $(G,\col',k)$ denote the resulting instance
  with~$\col'(v)=\recoloring(\col(v))$ for all vertices~$v$. Note that, in
  general, $\recoloring$~is not injective.  Then, we use~\eqref{DP-gamma} to
  compute a size-$k$ colorful independent set in the resulting
  instance. Since the resulting instance has only $k$~colors, this
  works in $O(2^k\cdot n)$~time.

  We now first analyze the probability that a colorful
  independent set for~$(G,\col,k)$ is also a colorful independent set
  for~$(G,\col',k)$ and vice versa. Then, we analyze how often we have
  to repeat the procedure of recoloring and computing
  recurrence~\eqref{DP-gamma} in order to achieve the low error
  probability~$\varepsilon$.

  First, assume that the recolored instance~$(G,\col',k)$ is a
  ``yes''-instance. Then, there is a colorful independent set~$I$ with
  $|I|\geq k$. The set~$I$ is a colorful independent set also for the
  original instance $(G,\col,k)$, since each color in $\col$ is mapped
  to only one color in~$\col'$. It follows that $(G,\col,k)$~is a
  ``yes''-instance.

  Now, assume that the original instance~$(G,\col,k)$ is a
  ``yes''-instance. We analyze the probability of the recolored instance
  $(G,\col',k)$ being a ``yes''-instance. Let~$I$ be a colorful
  independent set for~$(G,\col,k)$. The set~$I$ is a colorful
  independent set for~$(G,\col',k)$ if the vertices in~$I$ have
  pairwise distinct colors with respect to~$\col'$.  Since the
  vertices in~$I$ have pairwise distinct colors with respect
  to~$\col{}$ and we assign each color in~$[\gamma]$ a color in~$[k]$
  uniformly at random, the colors of the vertices of~$I$ with respect
  to~$\col'$ are also chosen uniformly at random and independently
  from each other.  Thus, the probability of $I$~being colorful with
  respect to~$\col'$ is~$p:=k!/k^k$: out of $k^k$~possible ways of
  coloring the $k$~vertices in~$I$ with $k$~colors, there are
  $k!$~ways of doing so in a colorful manner. Hence, the probability
  of $(G,\col',k)$ also being a ``yes''-instance, is~$p:=k!/k^k$.

  In order to lower the error probability of not finding a colorful
  independent set if it exists to~$\varepsilon$, we repeat the process
  of recoloring and running recurrence~\eqref{DP-gamma}
  $t(\varepsilon)$~times. That is, we want
\begin{align*}
  \hspace{3em}(1-p)^{t(\varepsilon)}&\leq\varepsilon.
\intertext{Exploiting that $1+x\leq e^x$ holds for all~$x\in\mathbb R$, the above inequality is satisfied by any number~$t(\varepsilon)$ of recoloring trials that satisfies}\
  \hspace{3em}e^{-p\cdot t(\varepsilon)}&\leq\varepsilon.
\end{align*}
Taking the logarithm on both sides and rearranging terms,
\begin{align*}
  \hspace{3em}t(\varepsilon)&\geq \ln\varepsilon\cdot \frac{1}{-p}=|\ln\varepsilon|\cdot \frac{k^k}{k!}.
\end{align*}
Using Stirling's lower bound for the factorial, one obtains
$k^k/k!\in O(e^k)$. To conclude the proof, it is now enough to put
together the observations that each run of recurrence~\eqref{DP-gamma}
with $k$~colors takes $O(2^k\cdot n)$~time and that we have to
repeat it only $t(\varepsilon)\in O(|\ln\varepsilon|\cdot
e^k)$~times to get an error probability of~$\varepsilon$. Thus, the
overall procedure takes $O(|\ln\varepsilon|\cdot (2e)^k\cdot
n)$~time.

\medskip\noindent We now derandomize the presented
algorithm: instead of repeatedly choosing random
recolorings~$\recoloring\colon[\gamma]\to[k]$, we deterministically enumerate
the recolorings according to a \emph{$k$-color coding scheme}
\citep{CLSZ07}: a $k$-color coding scheme~$\mathcal F$ is a set of
recolorings such that, for each subset~$C\subseteq[\gamma]$ with
$|C|=k$, there is a recoloring~$\recoloring\in\mathcal F$ such that the colors
in~$C$ will be mapped to pairwise distinct colors by~$\recoloring$. That is,
whatever colors a colorful independent set~$I$ of size~$k$ in~$G$
might have, there is one recoloring in~$\mathcal F$ such that $I$~is
colorful after recoloring. Thus, the dynamic
program~\eqref{DP-gamma} will find it.

A $k$-color coding scheme~$\mathcal F$ can be computed in
$O(6.4^k\cdot\gamma)$~time \citep{CLSZ07}. Moreover, it consists
of $O(6.4^k\cdot\gamma)$ colorings.  That is, in
$O(6.4^k\gamma\cdot 2^kn)$~time, we can run~\eqref{DP-gamma} for
each coloring in~$\mathcal F$, thus proving~(ii).
\end{proof}

\noindent Many algorithms that are based on the color-coding techniques
can be sped up using algebraic techniques~\citep{KW09}.  It would be
interesting to see whether they can also be used to speed up the running
time of \autoref{thm:colorcoding} (at least in the asymptotic sense).

Finally, note that the color-coding technique as used in
\autoref{thm:colorcoding} for \JIntSel{} could be applied to \MColIS{}
in the same way. However, the result will not be a \fp{} algorithm with
respect to the parameter ``solution size~$k$'', but with respect to the
total number of colors found in the lists of the solution vertices. This
number could potentially be much larger than~$k$ and even~$n$, thus
making such a \fp{} algorithm not particularly attractive for \MColIS{}.

\subsection{Polynomial-Time Preprocessing}
\label{sec:kernelization}

In this section, we first show that efficient and effective data reduction in form of polynomial-size problem kernels is most likely unfeasible for \JIntSel{}. Then, we show that it becomes feasible when we restrict the colored input graph to be a \emph{proper} interval graph.

\paragraph{Non-Existence of Polynomial-Sized Problem Kernels.}\label{sec:no-poly-kernel}
\newcommand{\shift}{\ensuremath{\mathrm{shift}}}
\newcommand{\flip}{\ensuremath{\mathrm{flip}}}
\looseness=-1 We show that \JIntSel{} is unlikely to admit problem kernels of polynomial size with respect to various parameters.
To this end, we employ the ``cross composition'' technique introduced by \citet{BJK14}.
A \emph{cross composition} is a polynomial-time algorithm that, given $t$~instances~$x_i$ with~$0\leq i <t$ of an \NP-hard starting problem~$A$, outputs an instance~$(y,k)$ of a parameterized problem~$B$ such that $k\in\poly(\max_{0\leq i< t}|x_i|+\log t)$ and $(y,k)$ is a ``yes''-instance for $B$ if and only if there is some $0\leq i< t$ with $x_i$~being a ``yes''-instance for~$A$. A theorem by \citet{BJK14} now states that if a problem~$B$ admits such a cross composition, then there is no polynomial-size problem kernel for~$B$ unless the polynomial hierarchy collapses to the third level, which is widely disbelieved. %

In the following, we present a cross composition for \JIntSel{} parameterized by the combination of the size~$\pw$ of a maximum clique and the number~$\cols$ of colors, yielding the following theorem:

\begin{theorem}\label{thm:uig-nopoly}
  Unless the polynomial hierarchy collapses, \JIntSel{} does not admit a po\-ly\-nom\-ial-size problem kernel with respect to
  the combined parameter  ``number~$\cols$ of colors'' and ``maximum clique size~$\pw$''.

\nopagebreak
  In particular, there are no polynomial-size problem kernels for the combined parameters $(\pw,k)$ or $(\pw,Q)$, where $Q$ is the number of ``live colors'' (\autoref{def-Q}).
\end{theorem}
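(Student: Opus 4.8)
The plan is to construct a cross composition from an NP-hard starting problem into \JIntSel{} where the parameter $(\pw,\cols)$ stays bounded. The natural choice for the starting problem is \JIntSel{} itself (or, via \autoref{cor:NPhwithKColors}, \POITS{} encoded as \JIntSel{} on disjoint paths of length at most two); by the standard trick of \citet{BJK14}, one first defines a polynomial-time computable equivalence relation so that one may assume all $t$~input instances $x_0,\dots,x_{t-1}$ share the same parameter values --- in particular the same number of colors, the same maximum clique size, the same number $n$~of intervals, and even the same $c$-compactness value $c$ (instances not matching the most common signature are padded or declared trivially answered). So the real task is: given $t$~instances of \JIntSel{}, each a $c$-compact colored interval graph with $\cols$~colors, clique size $\leq\pw$, and target $k$, build one colored interval graph $G^\ast$ with $\poly(\max_i|x_i|+\log t)$ colors and clique size that is a ``yes''-instance iff some $x_i$ is.

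The key idea for keeping the clique size and color count polynomially bounded while the instance count $t$ is unbounded is to \emph{lay the $t$~instances out sequentially along the real line} rather than overlapping them, and to use an auxiliary ``selector'' construction of size $O(\log t)$ to force the solution to live entirely inside a single block. Concretely, I would place disjoint copies $G_0,\dots,G_{t-1}$ of the $t$~instances consecutively (each copy occupying $c$~compactness positions, reusing the \emph{same} $\cols$~job-colors across all copies --- this is legitimate because they never overlap), and then attach to each block $i$ a gadget encoding the binary representation of $i$ using $O(\log t)$ fresh ``address colors''. The target is raised to $k^\ast := k + (\text{contribution of one selector gadget})$ so that exactly one block's worth of selector intervals can be chosen; the address-color structure is arranged so that a feasible colorful independent set of size $k^\ast$ must pick the selector intervals of a \emph{single} block $i$ and then a size-$k$ colorful independent set inside $G_i$. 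The final parameter is $\pw^\ast \in \pw + O(\log t)$ and $\cols^\ast \in \cols + O(\log t)$, both polynomial in $\max_i|x_i| + \log t$ as required, and $k^\ast$ likewise. Then \autoref{thm:uig-nopoly} follows from the cross-composition framework; the ``in particular'' statement follows because along this construction $k^\ast$ and $Q^\ast$ are also $\poly(\max_i|x_i|+\log t)$ (indeed $Q\leq\cols$ always), so $(\pw,k)$ and $(\pw,Q)$ inherit the lower bound.

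The main obstacle --- and the step I expect to require the most care --- is designing the selector gadget so that it is simultaneously (a) an interval graph (i.e.\ realizable by intervals, with the $\log t$ address colors interacting correctly across the sequential layout), (b) forces \emph{all-or-nothing within one block}: you cannot gain by mixing selector intervals from two different blocks, nor by taking a solution inside $G_i$ together with selector intervals pointing at block $j\neq i$, and (c) contributes a fixed, block-independent amount to the optimum so the arithmetic ``$k^\ast = k + \text{const}$'' cleanly separates the block choice from the internal choice. A clean way to get (b) is the classic binary-address trick: introduce, for each bit position $b \in \{1,\dots,\lceil\log t\rceil\}$, a pair of colors $\{0_b, 1_b\}$ and, in block $i$, force the solution to ``commit'' to the bit string of $i$ by making the job-intervals of $G_i$ reachable only through a path of selector intervals carrying exactly the colors $\{(i_b)_b : b\}$; two different blocks then conflict on at least one bit-color and cannot be combined. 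Making this an honest interval representation while the blocks sit side by side on the line (so that, e.g., a ``wrong-bit'' interval genuinely blocks entry into the later block) is where the geometric bookkeeping lives; once that is set up, verifying the ``yes''-iff-``yes'' equivalence and the parameter bounds is routine.
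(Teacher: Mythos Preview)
Your high-level plan---a cross composition from the NP-hard restriction of \JIntSel{} with $k=\gamma$ (\autoref{cor:NPhwithKColors}), placing the $t$~instances side by side on the line with the job colors \emph{reused} across blocks, and adding $O(\log t)$ auxiliary colors for a selector---is exactly the paper's approach. The parameter accounting and the ``in particular'' part are also right.

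The gap is in the selector itself. With the $t$~blocks laid out disjointly on the line and sharing the same $\gamma$~job colors, nothing \emph{geometric} prevents picking job intervals from several blocks at once; the only obstacle to mixing is whatever the selector enforces. Your proposed gadget---per-block ``address'' intervals carrying colors $(i_b)_b$ for the bits of~$i$---does not enforce this. If blocks $i$ and $j$ differ in bit~$b$, their selectors carry the \emph{distinct} colors $0_b$ and $1_b$, so they are neither color-conflicting nor (being in disjoint regions) geometrically overlapping; you can take both blocks' selectors simultaneously and then spread the remaining budget over job intervals from both blocks. The sentence ``two different blocks then conflict on at least one bit-color'' is precisely where the argument breaks. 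You acknowledge that the ``geometric bookkeeping'' is the crux, but the bit-pair scheme as stated cannot be repaired by local bookkeeping alone.

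The paper's fix is to make the auxiliary intervals \emph{span} the instances rather than sit inside them: it uses $\log t$ extra colors (not $2\log t$), and at level~$i$ it creates $2^i$~intervals of color~$k+i$, each covering a contiguous run of $t/2^i$~instances. A colorful solution must contain exactly one interval per level; any such choice covers $\sum_{i=1}^{\log t} t/2^i = t-1$ instances, so precisely one instance is left uncovered and all non-auxiliary solution intervals must come from it. The ``spanning'' is the missing idea---it turns the selector into a geometric blocker rather than a mere label.
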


\noindent The second part of the theorem follows from the first part since both $k$ and $Q$ are at most $\cols$.

\begin{figure}[t]
  \begin{center}
  \begin{tikzpicture}[x=0.95cm,y=0.95cm]
    \pgfmathtruncatemacro{\loginstances}{3};
    \pgfmathtruncatemacro{\instances}{2^\loginstances};
    \pgfmathsetmacro{\radius}{8/\instances/2}
    \pgfmathsetmacro{\diameter}{2*\radius};
    \pgfmathsetmacro{\height}{0.3};

    \foreach \i in {1,2,3,\instances-1,\instances}{
        \pgfmathsetmacro{\xpos}{(\i-1)*\diameter+\radius};
        \draw[black] (\xpos,0) circle (\radius-0.03);
    }
    \draw (1-\radius,0) node {$x_0$};
    \draw (2-\radius,0) node {$x_1$};
    \draw (3-\radius,0) node {$x_2$};
    \draw (5-\radius,0) node {\huge $\ldots$};
    \draw (\instances-\radius-1,0) node {$x_{t-2}$};
    \draw (\instances-\radius,0) node {$x_{t-1}$};

    \foreach \y in {1,...,\loginstances}{
      \pgfmathtruncatemacro{\maxindex}{2^\y-1}
      \foreach \x in {0,...,\maxindex}{
        \pgfmathsetmacro{\intlength}{2^(\loginstances-\y) * \diameter}
        \pgfmathsetmacro{\ypos}{\radius+\y*\height}
        \draw[interval,lightgray] (\x*\intlength+0.1,\ypos) -- (\x*\intlength+\intlength-0.1,\ypos);
      }
    }
    \draw [decorate,decoration={brace}] (0,\radius+\height) -- (0,\radius+\loginstances*\height) node [xshift=-12,black,midway] {$\log{t}$};

    \pgfmathtruncatemacro{\y}{1}
    \pgfmathtruncatemacro{\x}{1}
    \pgfmathsetmacro{\intlength}{2^(\loginstances-\y) * \diameter}
    \pgfmathsetmacro{\ypos}{\radius+\y*\height}
    \draw[interval,black] (\x*\intlength+0.1,\ypos) -- (\x*\intlength+\intlength-0.1,\ypos);

    \pgfmathtruncatemacro{\y}{2}
    \pgfmathtruncatemacro{\x}{0}
    \pgfmathsetmacro{\intlength}{2^(\loginstances-\y) * \diameter}
    \pgfmathsetmacro{\ypos}{\radius+\y*\height}
    \draw[interval,black] (\x*\intlength+0.1,\ypos) -- (\x*\intlength+\intlength-0.1,\ypos);

    \pgfmathtruncatemacro{\y}{3}
    \pgfmathtruncatemacro{\x}{3}
    \pgfmathsetmacro{\intlength}{2^(\loginstances-\y) * \diameter}
    \pgfmathsetmacro{\ypos}{\radius+\y*\height}
    \draw[interval,black] (\x*\intlength+0.1,\ypos) -- (\x*\intlength+\intlength-0.1,\ypos);
  \end{tikzpicture}
  \end{center}

  \caption{Schematic view of the cross composition for \JIntSel{}. Circles at the bottom represent the~$t$ input instances. Bars at the top represent the auxiliary intervals spanning over the input instances. Here, each of the~$\log{t}$ rows stands for a new color. A solution (black intervals) for the instance must select one interval in each row, thereby selecting one of the~$t$ input instances ($x_2$ in this example).}
  \label{fig:no-kernel}
\end{figure}
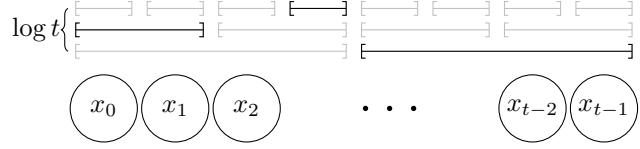

\begin{proof}
  We present a cross composition from the NP-hard starting problem \JIntSel{} with the further restriction that the sought solution size~$k$ equals the number of colors~$\gamma$. We saw in \autoref{cor:NPhwithKColors} in \autoref{sec:dichotomy} that this restriction remains \NP-hard. The framework of \citet{BJK14} allows us to force all of the $t$~input instances~$x_i$ to have the same value for~$k$ and, thus, each instance uses the same color set~$[k]$. 
We assume, without loss of generality, that~$t$~is a power of two (otherwise, we add some ``no''-instances to the list of input instances).
The steps of the cross composition are as follows (see \autoref{fig:no-kernel}):
\begin{noindlist}
  \item Place the start and end points of the $n$~intervals of each input instance~$x_i$ into the integer range~$[i\cdot n, (i+1)\cdot n-1]$.
  \item Introduce~$\log{t}$ extra colors~$k+1,k+2,\dots,k+\log{t}$; the resulting instance then asks for an independent set of size~$k+\log{t}$.
  \item\label{stp:constr-v} For each~$1\leq i\leq\log{t}$, introduce~$2^i$ auxiliary intervals~$v_0,v_1,\dots,v_{2^i-1}$ with color~$k+i$ such that the auxiliary interval~$v_j$ spans exactly over the instances~$x_\ell$ with
    \begin{align*}
      &&j\cdot \frac{t}{2^i}\leq \ell\leq (j+1)\cdot\frac{t}{2^i}-1.
    \end{align*}
\end{noindlist}

To show that this construction is indeed a cross composition for the parameters
``number~$\cols$ of colors'' and ``maximum clique size~$\pw$'', 
observe that $\cols,\pw\leq\max_i{|x_i|}+\log{t}$ and it remains to prove that the constructed instance~$(G,k+\log t)$ is a ``yes''-instance if and only if one of the input instances is a ``yes''-instance.

 First, if the constructed graph~$G$ has a colorful independent set~$I$ of size~$k+\log{t}$, then~$I$ contains an interval of each color. In particular, $I$ contains an auxiliary interval of each of the colors~$k+1$ to $k+\log{t}$. We show that all of the~$k$ non-auxiliary intervals of~$I$ are from the same input instance. To this end, note that, for each~$1\leq i\leq\log{t}$, each auxiliary interval~$v_j$ of color~${k+i}$ spans over exactly
 \begin{align*}
&&   (j+1)\cdot \frac{t}{2^i} - j\cdot\frac{t}{2^i} = \frac{t}{2^i} \text{\quad instances.}
 \end{align*}
Since instances spanned by auxiliary intervals of~$I$ are disjoint, the~$\log{t}$ auxiliary intervals in~$I$ span exactly
\begin{align*}
&&  \sum_{i=1}^{\log t}\frac{t}{2^i}=t-1 \text{\quad instances.}
\end{align*}
Hence, \emph{exactly one} input instance is \emph{not} spanned, implying that all non-auxiliary intervals in~$I$ are from this very instance.

  Second, let~$x_\ell$ be a ``yes''-instance, that is, there is an independent set of size~$k$ in~$x_\ell$ that contains the colors~$[k]$. We extend this to a colorful independent set of size~$k+\log t$ for~$G$. To this end, it is sufficient to add the intervals from a \emph{$(\log t)$-separating} colorful independent set, where a colorful independent set~$I$ is \emph{$i$-separating} for some integer~$i$ if
  \begin{itemize}
  \item no interval in~$I$ spans~$x_\ell$,
  \item $I$ has size~$i$ and contains all colors~$\{k+1,\dots,k+i\}$, and
  \item there is a single interval of color~$k+i$ that is not in~$I$ and covers all instances not spanned by the intervals in~$I$.
  \end{itemize}
  Obviously, there is a $1$-separating colorful independent set, since the intervals with color~$k+1$ separate the input instances in exactly two halves. To complete the proof, it remains to extend this $1$-separating independent set to be $(\log t)$-separating. To this end, we use induction.

  Assume that $I$ is an $i$-separating colorful independent set for some $1\leq i<\log t$. We show how to extend it to be $(i+1)$-separating. The auxiliary intervals in~$I$ span exactly
  \begin{align*}
    &&\sum_{j=1}^{i}\frac{t}{2^j}=t-\frac{t}{2^{i}} \text{\quad instances.}
  \end{align*}
That is, $t/2^{i}$ instances are not spanned by~$I$ but by a single interval of color~$k+i$.
 Since each interval with color~${k+i+1}$ spans~$t/2^{i+1}$~instances and is contained in an interval of color~$k+i$, there are precisely two intervals of color~${k+i+1}$ that span the instances not spanned by~$I$. Since they are disjoint, one of them does not span~$x_\ell$, add this interval to~$I$.
\end{proof}

\paragraph{Polynomial-Size Problem Kernel on Proper Interval Graphs.}
\looseness=-1 We restrict \JIntSel{} to proper interval graphs, which remains \NP-hard, as shown in \autoref{sec:dichotomy}. Here, the negative result of \autoref{thm:uig-nopoly} collapses: the following polynomial-time  data reduction routine produces a problem kernel containing $4k^2\cdot\pw$~intervals, where $\pw$ is the maximum clique size in the input graph.

\begin{comment}
%
%
%
%
%

%
%
%
%
%
%
%
%
%
%
%
%
%
%
%
%
%
%
%
%
%
%
%
%
%
%
%
%
%
%
%
%
%
%
%
%
%
%
%
%
%
%
%
%
%
%
%
%
%
%
%
%
%
%
%
%
%
%
%
%
%
%
%
%
%
%
%
%
%
\end{comment}

\begin{rrule}\label{rr:color packing}
  For a graph~$G$ and a color~$c$, let $G[c]$~denote the subgraph of~$G$ induced by the intervals of color~$c$.

  If $G[c]$  has an independent set of size at least~${2k-1}$, then remove all intervals of color~$c$ and decrease~$k$ by one.
\end{rrule}

\noindent In order to show that a reduction rule is \emph{correct}, one
has to show that the output instance is a ``yes''-instance if and only
if the input instance is.

\begin{lemma}\label{lem:color packing}
  \autoref{rr:color packing} is correct and can be applied exhaustively in $O(n)$~time.
\end{lemma}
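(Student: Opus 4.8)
The plan is to prove correctness by a forward/backward argument and to handle the running time by an explicit algorithm that uses the $c$-compact representation together with the fact that proper interval graphs admit a very simple structure for independent sets.

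For correctness, let $(G,\col,k)$ be the input instance and suppose $G[c]$ has an independent set of size at least $2k-1$; let $(G',\col',k-1)$ be the output, where $G'=G-V(G[c])$. For the backward direction, if $G'$ has a colorful independent set $I'$ of size $k-1$, then $I'$ uses at most $k-1$ colors, so at least $k$ of the $2k-1$ intervals of color $c$ avoid the color set of $I'$. Among those $\geq k$ pairwise nonintersecting intervals of color $c$, each intersects at most\dots\ here the key point is that the $k-1$ intervals of $I'$ can collectively ``block'' at most $k-1$ of them (since the color-$c$ intervals in question are pairwise disjoint, an interval of $I'$ that hits two of them would make those two intersect, contradiction; so each interval of $I'$ blocks at most one). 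Hence at least one interval $w$ of color $c$ is disjoint from every interval of $I'$ and does not share a color with $I'$, and $I'\cup\{w\}$ is a colorful independent set of size $k$ in $G$. For the forward direction, if $G$ has a colorful independent set $I$ of size $k$, then since $G[c]$ still has an independent set of size $\geq 2k-1 > k-1 \geq |I\setminus V(G[c])|$ after\dots\ more simply: $I$ contains at most one interval of color $c$ (colorfulness), so $I\setminus V(G[c])$ is a colorful independent set in $G'$ of size at least $k-1$, which does not use color $c$, hence is colorful for $\col'$ as well. This gives the equivalence.

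For the running time, the plan is: first, since $G$ is a proper interval graph given in $c$-compact form with intervals sorted by start point (which for proper interval graphs means also sorted by end point), a maximum independent set in any induced subgraph of $G$ can be computed by the standard greedy ``earliest finishing time'' sweep in linear time in the number of intervals. I would partition the intervals by color in $O(n)$ time (a single pass, bucketing by $\col(v)$), and for each color $c$ run the greedy sweep on the intervals of color $c$ in time linear in their number; summing over all colors gives $O(n)$ total. Whenever a color's independent-set number reaches $2k-1$, mark it for deletion and decrement $k$; then in a final pass delete all marked intervals and renumber. Care is needed because deleting a color and decrementing $k$ changes the threshold $2k-1$ for the remaining colors, so I would iterate: recompute which colors exceed the current threshold, delete, decrement, and repeat. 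Since each iteration deletes at least one color and $k$ only decreases, there are at most $\gamma \le n$ rounds, and a naive bound is $O(n^2)$; to get $O(n)$ I would instead sort the colors once by their independent-set number (bucket/counting sort, values in $[n]$), then walk a pointer: a color with independence number $s$ gets deleted iff $s \ge 2k'-1$ where $k'$ is the current (decremented) value, and since deletions only decrease $k'$, once we process colors in decreasing order of $s$ the set of deletable colors is a prefix and can be determined in a single linear scan with a running counter for $k'$.

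The step I expect to be the main obstacle is the last one: making the running time genuinely linear despite the fact that each deletion lowers the threshold for all other colors, i.e.\ arguing carefully that processing colors in nonincreasing order of their independence number lets one decide all deletions in one pass (one must check that if color $c_1$ with count $s_1 \ge s_2$ is not deleted, then no color $c_2$ with count $s_2$ can be deleted later, because $k'$ only goes down and $s_2 \le s_1 < 2k'-1$ at the moment $c_1$ was examined, hence $s_2 < 2k''-1$ for the possibly smaller $k'' \le k'$ as well --- wait, that's backwards, so in fact one must process in \emph{nondecreasing} order, or equivalently observe the deleted colors form a \emph{suffix} in nondecreasing order; I would sort out the correct direction and confirm the one-pass claim). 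The correctness argument, by contrast, is the short blocking argument sketched above and should be routine.
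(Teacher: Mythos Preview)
Your backward-direction correctness argument contains a genuine error. You claim that ``an interval of $I'$ that hits two of them would make those two intersect,'' concluding that each interval of $I'$ blocks at most one interval of the colour-$c$ independent set $I_c$. This is false: the interval $[2,3]$ intersects both $[1,2]$ and $[3,4]$, which are disjoint. In a general interval graph a single interval can overlap arbitrarily many pairwise disjoint intervals, so with your reasoning the rule would be wrong---and indeed it \emph{is} wrong for general interval graphs (take $k=2$, three tiny disjoint colour-$c$ intervals, and one long interval $u$ of another colour covering all three: $(G',1)$ is ``yes'' via $\{u\}$, but $(G,2)$ is ``no''). The hypothesis you never invoke is that $G$ is a \emph{proper} interval graph: then $G$ is $K_{1,3}$-free, so each $u\in I'$ overlaps at most \emph{two} intervals of $I_c$, giving $|I_c\setminus N[I']|\ge (2k-1)-2(k-1)=1$. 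This is exactly why the threshold is $2k-1$ rather than $k$, and it is the argument the paper uses. (Your preliminary sentence ``at least $k$ of the $2k-1$ intervals of colour $c$ avoid the colour set of $I'$'' is also muddled: all $2k-1$ of them do, since $I'\subseteq V(G')$ contains no colour-$c$ vertex; the only obstruction to adding one to $I'$ is interval overlap.)

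On the running time you are actually more careful than the paper, which simply asserts that ``applying the rule for one colour does not affect other colours'' and stops there; the point is that the independence numbers $\alpha_c$ of the monochromatic subgraphs $G[c]$ do not change when another colour is deleted, so after one $O(n)$ pass to compute all $\alpha_c$ you only need to decide which colours get deleted. Your sort-then-scan idea is correct, and your last-minute doubt is unfounded: process colours in \emph{nonincreasing} order of $\alpha_c$. If the current colour has $\alpha_c\ge 2k'-1$ delete it and decrement $k'$; the first time this fails you may stop, since every remaining colour has $\alpha_{c'}\le\alpha_c<2k'-1$ and $k'$ no longer decreases. One pass over the (counting-)sorted list suffices.
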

\begin{proof}
  Let $(G',k-1)$ denote the instance produced by \autoref{rr:color packing} from an instance~$(G,k)$ by removing all intervals of a color~$c$ from~$G$. Clearly, a colorful independent set~$I$ for~$G$ is also a colorful independent set for~$G'$ if we remove the interval with color~$c$ from~$I$. Hence, if~$(G,k)$ is a ``yes''-instance, then so is~$(G',k-1)$.

  In the following, let~$(G',k-1)$ be a ``yes''-instance with solution~$I$ and let~$I_c$ denote an independent set of size~$2k-1$ in~$G[c]$. If $|I|\geq k$, then $(G,k)$ is a ``yes''-instance. Otherwise, $|I|\leq k-1$.
  Furthermore, $I$ does not contain an interval with color~$c$.
  Consider an interval~$u\in I$. If $N_G(u)\cap I_c$ contains at least three intervals~$x,y,z$, then~$G[\{u,x,y,z\}]$ is a~$K_{1,3}$, contradicting~$G$ being a proper interval graph. Therefore, each interval in~$I$ overlaps at most two intervals in~$I_c$. Hence, the intervals in~$I$ overlap at most~$2|I|\leq 2(k-1)<|I_c|$ intervals, implying that~$I_c\setminus N[I]\ne\emptyset$. Thus, there is an interval in~$I_c$ that can be added to~$I$, thereby obtaining a solution for~$(G,k)$.

  It remains to argue the claimed running time. To this end, note that a maximum independent set in~$G[c]$ can be computed in~$O(n_c)$~time with~$n_c:=|V(G[c])|$, since $G[c]$ is an ordinary (that is, monochromatic) proper interval graph. Hence, computing maximum independent sets for all colors can be done in $O(n)$~time in total. Since applying the rule for one color does not affect other colors, this application is exhaustive, that is, \autoref{rr:color packing} is not applicable to the resulting instance.
\end{proof}

\noindent In order to prove the problem kernel bound, we further need the following trivial ``data reduction rule'' that returns a ``yes''-instance if we can greedily find an optimal solution.
\begin{rrule}\label{rr:maximal CIS}
  Let~$I$ be a maximal colorful independent set of~$G$.
  If~$|I|\geq k$, then return a small trivial ``yes''-instance.
\end{rrule}

\begin{lemma}\label{lem:maximal CIS correct}
  \autoref{rr:maximal CIS} is correct can be applied in $O(n)$~time.
\end{lemma}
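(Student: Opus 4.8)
The plan is to treat correctness and running time separately, both being short. For correctness, recall that a reduction rule is correct precisely when the instance it outputs is a ``yes''-instance if and only if its input is. If \autoref{rr:maximal CIS} does not fire, the output equals the input and nothing is to be shown. If it fires, it has found a colorful independent set~$I$ with~$|I|\ge k$; then any size-$k$ subset~$I'\subseteq I$ is still an independent set and still colorful (discarding vertices can create neither an edge nor a common color), so~$(G,\col,k)$ is a ``yes''-instance, matching the trivial ``yes''-instance that the rule returns. I would stress that maximality of~$I$ is irrelevant for correctness; it is only used to guarantee that the rule fires whenever it could, which is what the kernel-size analysis will later exploit.

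For the running time, it suffices to compute \emph{some} maximal colorful independent set~$I$ in~$O(n)$ time. Since the input is given in a $c$-compact representation, the intervals already arrive sorted by non-decreasing start point, so I would simply scan them once in that order, maintaining the running maximum~$e$ of the end points of the intervals selected so far together with a Boolean array~$A$ indexed by~$[\gamma]$ that flags the colors already used; the current interval~$v$ is put into~$I$ exactly when~$v_s>e$ and~$A[\col(v)]$ is unset, in which case~$e$ and~$A[\col(v)]$ are updated. The scan performs~$O(1)$ work per interval, and initializing (and, if the array is reused elsewhere, resetting) the array~$A$ of length~$\gamma\le n$ costs~$O(n)$, so the computation---and hence the application of \autoref{rr:maximal CIS}---runs in~$O(n)$ time.

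It remains to check that the greedy output is indeed maximal. Take any~$v\notin I$; at the moment~$v$ was processed the selection test failed, so either~$A[\col(v)]$ was already set, meaning some interval of~$I$ has~$v$'s color, or~$v_s\le e'$ for the then-current value~$e'$ of~$e$. In the latter case the selected interval~$u$ attaining~$e'$ has~$u_s\le v_s$ (it was processed before~$v$) and~$u_e=e'\ge v_s$, so~$u$ and~$v$ intersect, and~$u$ is still in~$I$ because selected intervals are never discarded. Either way~$I\cup\{v\}$ is not a colorful independent set, so~$I$ is maximal. There is no genuine obstacle here; the only place that warrants a moment's care is precisely this last step---confirming that an interval skipped because of an end-point conflict overlaps a \emph{selected} interval, not merely an earlier-processed one---which is immediate from scanning in start-point order and never removing a selected interval.
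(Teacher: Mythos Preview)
Your proof is correct and essentially follows the paper's approach: both argue correctness is immediate and then exhibit an $O(n)$ single-pass greedy that maintains a size-$\gamma$ Boolean array of used colors to build a maximal colorful independent set. The only difference is cosmetic---the paper's greedy repeatedly picks the first-ending valid interval, whereas you scan by start point while tracking the maximum selected end point---and your explicit verification of maximality is a welcome addition that the paper glosses over.
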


\begin{proof}
  The correctness of \autoref{rr:maximal CIS} is obvious. It remains to
  prove the running time.

A maximal colorful independent set of~$G$ can be found by greedily picking the first-ending valid interval~$v$ into an independent set~$I$ and deleting all intervals that overlap~$v$. Herein, we can keep a size-$\gamma$ array whose $i$-th entry is~$1$ if color~$i$ is already used. Using this array, we can check in constant time whether an interval is valid for inclusion in~$I$. Moreover, since invalid vertices do not become valid again, the whole procedure can be executed in $O(n)$~time, given that the intervals are sorted.
\end{proof}

\noindent Given these two data reduction rules,  we can now prove the following theorem.

\begin{theorem}\label{thm:splitUIG kernel} %
  \JIntSel{} on proper interval graphs admits a problem kernel with at most $4k^2\cdot\pw$~intervals that is computable in $O(n)$~time. Herein, $\pw$ is the maximum clique size of the input graph.
\end{theorem}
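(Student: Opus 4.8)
The plan is to apply \autoref{rr:color packing} and \autoref{rr:maximal CIS} exhaustively and then bound the size of an instance to which neither rule applies. Concretely, I would first apply \autoref{rr:color packing} exhaustively, which takes $O(n)$~time by \autoref{lem:color packing}, only removes intervals, and only decreases~$k$ (and cannot increase the maximum clique size); if $k$ reaches~$0$ along the way the instance is trivially a ``yes''-instance, so assume $k\ge 1$ afterwards. Then I would apply \autoref{rr:maximal CIS} once, in $O(n)$~time by \autoref{lem:maximal CIS correct}; it either returns a constant-size ``yes''-instance (and we are done) or leaves the instance untouched, so no further rounds are needed, the whole procedure runs in $O(n)$~time, and by correctness of the two rules it produces an equivalent instance. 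It remains to bound the size, so from now on assume neither rule applies to the resulting instance $(G,\col,k)$: here $G$~is a proper interval graph (an induced subgraph of the input graph) whose maximum clique size is at most the input~$\pw$, and $k\ge 1$.

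Inapplicability of the rules yields two facts. Since \autoref{rr:maximal CIS} did not fire, the maximal colorful independent set~$I$ it computed has $|I|\le k-1$; fix this~$I$ and let $C(I)$ denote the set of colors used by~$I$, so $|C(I)|=|I|\le k-1$. Since \autoref{rr:color packing} no longer applies, for every color~$c$ the monochromatic proper interval graph~$G[c]$ has independence number at most~$2k-2$; since an interval graph can be covered by as few cliques as its independence number (a classical fact) and every clique of~$G[c]$ has at most~$\pw$ vertices, this gives $|V(G[c])|\le(2k-2)\pw$.

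Next I would observe that~$I$ ``hits'' every interval: for each $v\in V(G)$, either $\col(v)\in C(I)$, or $v$ intersects some interval of~$I$ --- because otherwise $I\cup\{v\}$ would be a colorful independent set strictly larger than the maximal~$I$. Hence $V(G)=\{v:\col(v)\in C(I)\}\cup N_G(I)$, and I would bound the two parts separately. The first part has size $\sum_{c\in C(I)}|V(G[c])|\le(k-1)(2k-2)\pw$. For $N_G(I)$ I would exploit properness: for $u\in I$ with interval $[u_s,u_e]$, any neighbour of~$u$ containing neither~$u_s$ nor~$u_e$ would be strictly contained in~$u$, which is impossible in a proper interval representation (this is the $K_{1,3}$-free argument already used in \autoref{lem:color packing}); therefore $N_G[u]$ lies in the union of the clique of all intervals through~$u_s$ and the clique of all intervals through~$u_e$, so $|N_G[u]|\le 2\pw$ and $|N_G(I)|\le 2(k-1)\pw$. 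Adding up, $|V(G)|\le(k-1)(2k-2)\pw+2(k-1)\pw=2k(k-1)\pw\le 4k^2\pw$, as claimed.

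The steps carrying real content are the two clique-covering estimates --- $|V(G[c])|\le\alpha(G[c])\cdot\pw$ via perfectness of interval graphs, and $|N_G[u]|\le 2\pw$ via the no-strict-containment property of proper interval graphs --- together with the observation that a maximal colorful independent set covers every vertex either by color or by adjacency. I expect these to be the only delicate points; the remaining bookkeeping (the order and $O(n)$ running time of the two rules, and that $k$ and~$\pw$ only shrink under them) is routine.
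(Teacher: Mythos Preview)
Your proof is correct and follows the same overall plan as the paper's: apply the two reduction rules, take the maximal colorful independent set~$I$ with $|I|<k$ guaranteed by \autoref{rr:maximal CIS}, split~$V(G)$ according to~$I$, and use the proper-interval neighborhood bound $|N_G[u]|\le 2\omega$.

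The one place you diverge is in bounding the vertices that are not covered by adjacency to~$I$. The paper looks at $G':=G\setminus N[I]$, takes a maximum independent set~$I_c$ of each monochromatic piece~$G'[c]$, and then applies the $2\omega$-neighborhood bound a \emph{second} time to~$I_c$ to get $|V(G'[c])|\le 4(k-1)(\omega-1)$. You instead invoke perfectness of interval graphs (clique-cover number equals independence number) to obtain $|V(G[c])|\le\alpha(G[c])\cdot\omega\le(2k-2)\omega$ directly. Your route is a little cleaner---it uses the \emph{proper} interval hypothesis only once (for bounding~$N[I]$), needs only the ordinary interval-graph property for the per-color bound, and even yields the marginally tighter total $2k(k-1)\omega\le 4k^2\omega$.
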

\begin{proof}
  To show the problem kernel bound, consider an instance~$(G,k)$ of
  \JIntSel{} that is reduced with respect to \autoref{rr:color packing}
  and to which \autoref{rr:maximal CIS} has been applied. It follows
  that there is a maximal colorful independent set~$I$ of~$G$
  with~$|I|<k$.  Since~$G$ is a \emph{proper} interval graph, the neighborhood of each vertex~$v$ can be partitioned
  into two cliques: one consisting of intervals containing~$v_s$, one
  consisting of intervals containing~$v_e$.  Thus, each vertex in~$I$
  has at most~$2\pw-1$ neighbors and, hence, we can bound~$|N[I]|\leq
  2k\pw$.

  Now, let~$X:=V(G)\setminus N[I]$ and let~$G':=G[X]$. Then, since~$I$ is maximal, all intervals in~$X$ have a color that appears in~$I$, of which there are at most~$k-1$. For each color~$c$ of these, let~$G'[c]$ denote the subgraph of~$G'$ that is induced by all intervals of color~$c$ in~$X$ and let~$I_c$ denote a maximum independent set of~$G'[c]$. Since~$G$ is reduced with respect to \autoref{rr:color packing}, $|I_c|\leq 2(k-1)$. Again, since~$G'[c]$ is a proper interval graph, each interval~$u\in I_c$ has at most~${2\pw-1}$ neighbors in~$G'[c]$. Thus, the total number of intervals in~$G'[c]$ is at most~$4(k-1)(\pw-1)$. Since~$G'$ contains at most~$k-1$ colors, we can bound~$|V(G')|\leq4(k-1)^2(\pw-1)$, implying a bound of~$|V(G)|+|N[I]|\leq 4(k-1)^2(\pw-1)+2k\pw\leq 4k^2\pw$ for the number of intervals in~$G$.

  The running time bound follows from \autoref{lem:color packing} and \autoref{lem:maximal CIS correct}.
\end{proof}

\section{2-Union Independent Set}\label{sec:mcis}
\label{sec:independentsetin2uniographs}
In \autoref{sec:jisp}, we studied the \JIntSel{} problem, which is equivalent to \MCIS{} where one of the two input interval graphs is a cluster graph. In this section, we investigate the parameterized complexity \MCIS{}. 

\autoref{cor:mcis-veryhard} has already shown that \MCIS{} is NP-hard even if the maximum clique size of both input interval graphs and the maximum vertex degree are at most two. Moreover, we already know that \MCIS{} is W[1]-hard with respect to the %
parameter~$k$ \citep{Jia10}. Hence, with respect to these three parameters, \MCIS{} is unlikely to be \fp{} tractable.

In contrast, this section shows how the compactness of the input
interval graphs affects the computational complexity of \MCIS{}. To this
end, as before, let $\CompMax$~be the minimum number such that both
input interval graphs are $\CompMax$-compact and let $\CompMin$~be the
minimum number such that at least one of both input interval graphs is
$\CompMin$-compact (see \autoref{def:compactness}).

\looseness=-1 First, in \autoref{fpt-mcis}, we show a \fp{} algorithm with respect to the parameter~$\CompMin$. The algorithm is an adaption of our algorithm for \MColIS{} (\autoref{thm:mcolis-fpt}) to \MCIS{}. In the analysis of its complexity, the parameter~$\CompMin$ naturally arises as complexity measure.

Second, in  \autoref{mcis-dr}, we show a simple polynomial-time data reduction rule for \MCIS{}. Again, in the analysis of its effectiveness, the parameter~$\CompMax$ naturally arises as complexity measure.

Since in both applications, compactness-related parameters arose quite
naturally, we suspect that the parameter may be useful in the
development in \fp{} algorithms for other NP-hard problems on interval
graphs.

\subsection{A Dynamic Program for 2-Union Independent Set}\label{fpt-mcis}
We describe an algorithm that solves \MCIS{} in $O(2^{\CompMin} \cdot n)$~time. To this end, we reformulate \MCIS{} as a special case of \MColIS{} and then solve the resulting instance using the dynamic program~\eqref{DP-Qp} from \autoref{sec:DP} (\autoref{thm:mcolis-fpt}).

An instance of \MCIS{} can be solved by an algorithm for \MColIS{} as follows: without loss of generality, assume that of the input interval graphs $G_2$~is $\CompMin$-compact. We interpret each number in $[\CompMin]$ as a color and give the input graph~$G_1$ as input to \MColIS{} such that each vertex~$v$ of~$G_1$ gets the colors corresponding to the numbers contained in the interval that represents~$v$ in~$G_2$. Then a solution for \MColIS{} is a solution for \MCIS{} and vice versa: 

\begin{itemize}
\item Two vertices~$v$ and~$w$ may be together in a solution of \MCIS{} if and only if their intervals neither intersect in~$G_1$ nor in~$G_2$.
\item Two vertices~$v$ and~$w$ may be together in a solution of \MColIS{} if and only if neither their intervals in~$G_1$ intersect nor their color lists  intersect (which are precisely their intervals in~$G_2$).
\end{itemize}
We stated earlier that \MColIS{} is a more general problem than \MCIS{}. This now becomes clear: whereas \MColIS{} allows arbitrary color lists, the instances generated from \MCIS{} only use intervals of natural numbers as color lists.

To execute the transformation from \MCIS{} to \MColIS{}, we just take each interval of~$G_2$ and add the numbers that it contains to the color list of the corresponding vertex in~$G_1$. Since each interval in~$G_2$ contains at most $\CompMin$~numbers, the transformation from \MCIS{} to \MColIS{} is executable in $O(\CompMin\cdot n)$~time. The resulting \MColIS{} instance has $\CompMin$~colors and, by \autoref{thm:mcolis-fpt}, is solvable in additionally $O(2^{\CompMin}\cdot n)$~time.

\begin{theorem}\label{thm:fpt-2union}
  \MCIS{} is solvable in $O(2^{\CompMin}\cdot n)$ time when at least one input interval graph is $\CompMin$-compact.
\end{theorem}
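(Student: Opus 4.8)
The plan is to reduce \MCIS{} to \MColIS{} exactly as described in the preceding paragraphs and then invoke \autoref{thm:mcolis-fpt}. First I would assume, without loss of generality, that $G_2$ is the interval graph that is $\CompMin$-compact; we are given such a representation for minimum~$c=\CompMin$. Since every interval in a $\CompMin$-compact representation has both endpoints in $[\CompMin]$, the set of integers it contains is a subset of $[\CompMin]$, so I can define a list-coloring $\col\colon V\to 2^{[\CompMin]}$ by letting $\col(v)$ be the set of integers contained in the interval representing~$v$ in~$G_2$. I would then form the \MColIS{} instance $(G_1,\col,k)$.

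The correctness of this reduction is exactly the two bullet points already stated: $v$ and $w$ may lie together in a solution of \MCIS{} iff their intervals intersect in neither $G_1$ nor $G_2$; and they may lie together in a solution of \MColIS{} iff their $G_1$-intervals are disjoint and $\col(v)\cap\col(w)=\emptyset$. The only nontrivial point is that $\col(v)\cap\col(w)=\emptyset$ really is equivalent to the $G_2$-intervals of $v$ and $w$ being disjoint: two closed integer intervals intersect iff they share an integer point, which holds precisely because in a $\CompMin$-compact representation all endpoints are integers, so there is no possibility of intervals that "touch" only at a non-integer. Hence a size-$k$ colorful independent set in $(G_1,\col,k)$ is precisely a size-$k$ independent set in $G=(V,E_1\cup E_2)$, and the two instances are equivalent.

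For the running time, building the color lists takes $O(\CompMin\cdot n)$ time, since each of the $n$ intervals of $G_2$ contains at most $\CompMin$ integers. The resulting \MColIS{} instance uses $\gamma=\CompMin$ colors, and its maximum number of live colors satisfies $Q\le\gamma=\CompMin$. By \autoref{thm:mcolis-fpt}, it is then solvable in $O(2^Q\cdot n)=O(2^{\CompMin}\cdot n)$ time. Adding the $O(\CompMin\cdot n)$ preprocessing cost (which is dominated by $O(2^{\CompMin}\cdot n)$) gives the claimed bound.

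The main obstacle, if any, is cosmetic rather than mathematical: one must be slightly careful that the \MColIS{} algorithm of \autoref{thm:mcolis-fpt} expects its input interval graph ($G_1$) to be given in a $c$-compact representation for its own minimum $c$, which is unrelated to $\CompMin$; this affects only the $O(\gamma c)$ space term and the ``$n$'' factor, not the $2^{\CompMin}$ factor, so the stated time bound is unaffected. I expect the whole proof to be short, essentially a restatement of the reduction already spelled out in the text together with one invocation of \autoref{thm:mcolis-fpt}.
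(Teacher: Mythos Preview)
Your proposal is correct and follows essentially the same approach as the paper: reduce \MCIS{} to \MColIS{} by taking~$G_1$ as the interval graph and using the integer points of the $\CompMin$-compact representation of~$G_2$ as color lists, then invoke \autoref{thm:mcolis-fpt} with $\gamma=\CompMin$. Your additional remarks about integer intervals sharing a point and about the compact representation of~$G_1$ are fine but not needed for the argument.
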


\subsection{Polynomial-Time Preprocessing}\label{mcis-dr}
We provide polynomial-time data reduction for \MCIS{}. It will turn out that the presented data reduction rule yields a polynomial-size problem kernel for the parameter~$\CompMax$, where both input interval graphs~$G_1$, $G_2$ are $\CompMax$-compact.

The intuition behind the data reduction rule is simple: assume that we have a vertex that is represented by the interval~$v$ in the first input interval graph~$G_1$ and by~$v'$ in the second input interval graph~$G_2$. Moreover, assume that there is another vertex represented by the intervals~$u$ in~$G_1$ and~$u'$ in~$G_2$. Then, if $v\subseteq u$ and $v'\subseteq u'$, we would never choose the vertex represented by $u$~and~$u'$ into a maximum independent set, as it ``blocks'' a superset of vertices for inclusion into a maximum independent set compared to the vertex represented by~$v$ and~$v'$. Hence, we delete the intervals $u$ and~$u'$.

To lead this intuitive idea to a problem kernel, we introduce the
concept of the \emph{signature} of a vertex, give a reduction rule that
bounds the number of vertices having a given signature, and finally
bound the number of signatures in a 2-union graph.

\begin{definition}\label{def:signature}
  Let~$(G_1,G_2,k)$ denote an instance of \MCIS{} and let $v$ be a vertex of $G_1$ and $G_2$.
  The \emph{signature}~$\sig(v)$ of~$v$ is a four-dimensional vector $(-v_s,v_e,\allowbreak -v_s',v_e')$, where $v_s$ and $v_e$ are $v$'s start and end points in~$G_1$, and $v_s'$ and $v_e'$ are its start and end points in~$G_2$.
\end{definition}

\begin{rrule}\label{rule:del-supersig}
  Let $(G_1,G_2,k)$ denote an instance of \MCIS{}. For each pair of vertices~$u,v$ of~$G_1$ and~$G_2$ such that~$\sig(v)\leq \sig(u)$ (component-wise), delete~$u$ from~$G_1$ and~$G_2$.
\end{rrule}
\begin{lemma}\label{lem:del-supersig}
  \autoref{rule:del-supersig} is correct and can be applied in $O(n\log^2{n})$~time.
\end{lemma}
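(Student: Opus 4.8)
The plan is to split the lemma into its two assertions — correctness and running time — and handle them separately. For \emph{correctness}, recall that a data reduction rule is correct if the instance it produces is a ``yes''-instance iff the input instance is. The rule only deletes vertices, so any independent set surviving in the reduced instance is still independent in $(G_1,G_2,k)$; hence the ``only if'' direction is immediate. For the ``if'' direction, suppose $(G_1,G_2,k)$ has an independent set $I$ of size $k$ and that the rule deletes some vertex $u$ because there is a vertex $v$ with $\sig(v)\le\sig(u)$ componentwise. Unfolding \autoref{def:signature}, the inequality $\sig(v)\le\sig(u)$ means $-v_s\le -u_s$, $v_e\le u_e$, $-v_s'\le -u_s'$, $v_e'\le u_e'$, i.e.\ $v_s\ge u_s$, $v_e\le u_e$, $v_s'\ge u_s'$, $v_e'\le u_e'$, so the interval of $v$ is contained in the interval of $u$ in $G_1$ \emph{and} the interval of $v$ is contained in the interval of $u$ in $G_2$. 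Consequently $N_{G_1\cup G_2}[v]\subseteq N_{G_1\cup G_2}[u]$: any interval intersecting $v$ in $G_i$ also intersects the larger interval $u$ in $G_i$. Therefore, if $u\in I$, the set $I':=(I\setminus\{u\})\cup\{v\}$ is still an independent set of size $k$ (note $v\notin I$ already, since $v$ and $u$ intersect in both graphs and $I$ is independent); and if $u\notin I$, then $I$ itself survives the deletion of $u$. Iterating this argument over all deleted vertices — and observing that the swap $u\mapsto v$ never reintroduces a previously deleted vertex, since we may always pick $v$ to be a vertex whose signature is minimal among those dominating the signature of a deleted vertex — shows the reduced instance is a ``yes''-instance, establishing correctness.

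For the \emph{running time}, the task is: among all vertices (each a point in $\mathbb{Z}^4$ via its signature), mark every vertex $u$ that is componentwise dominated by some \emph{other} vertex $v$, then delete all marked vertices in one pass. This is a classical \emph{maxima / domination} problem in $d=4$ dimensions. The plan is to invoke a standard divide-and-conquer (or range-tree) algorithm: sorting by the first coordinate and recursing, while maintaining a 3-dimensional dominance data structure (itself built by the same recursion one dimension lower, bottoming out at a 1-dimensional sorted-array / balanced-BST query), solves $d$-dimensional dominance counting in $O(n\log^{d-1} n)$ time. With $d=4$ this gives $O(n\log^3 n)$ — so to hit the claimed $O(n\log^2 n)$ one exploits structure specific to \MCIS{}: because we start from $c$-compact representations for minimum $c$ (the running convention of the paper, \autoref{sec:preliminaries}), each of the four signature coordinates takes only $O(c)\le O(n)$ distinct integer values and, more importantly, within $G_1$ (resp.\ $G_2$) the endpoints can be processed clique by clique; effectively one coordinate can be handled by bucketing/counting sort rather than by a comparison-based layer, shaving a $\log n$ factor. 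Concretely: sort vertices by $G_1$-start $v_s$ descending (counting sort, $O(n)$); sweep, and for each fixed value of $v_s$ maintain, keyed by $v_e$, the set of ``candidate dominators'' seen so far, querying a 2-dimensional dominance structure on the pair $(-v_s',v_e')$ — a range tree supporting insertion and dominance query in $O(\log^2 n)$ time — and reporting a vertex as dominated if such a dominator exists. Summed over all $n$ vertices this is $O(n\log^2 n)$; the final deletion pass is $O(n)$.

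I expect the running-time analysis to be the main obstacle, in two respects. First, one must be careful about the boundary case $\sig(v)=\sig(u)$ for distinct vertices $v\neq u$: the rule as stated would delete \emph{both}, which is wrong, so the algorithm must break ties (e.g.\ keep the one with smallest vertex index) and the data structure's query must distinguish ``strictly dominated'' from ``equal''; this is a routine but easy-to-get-wrong detail. Second, squeezing $O(n\log^2 n)$ rather than the off-the-shelf $O(n\log^3 n)$ genuinely requires exploiting the minimum-$c$-compactness assumption to replace one comparison-sorted layer by counting sort; I would make this explicit by noting that after the compactification of \autoref{obs:shrinkints} the start points in $G_2$ (say) can be bucketed in linear time, so only three coordinates remain for the range-tree machinery. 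The correctness argument, by contrast, I expect to be short and clean once the signature inequality is translated into the double interval-containment statement $v\subseteq u$ in both graphs, which is the whole point of \autoref{def:signature}.
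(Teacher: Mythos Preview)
Your correctness argument is essentially the paper's: unfold the signature inequality to interval containment in both graphs, conclude $N_{G_1\cup G_2}[v]\subseteq N_{G_1\cup G_2}[u]$, so $u$ can always be swapped for $v$. The paper states this in two lines; your extra care about iterating the swap and choosing minimal-signature replacements is not wrong, but also not needed at the paper's level of rigor.

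The running-time argument is where you diverge, and unnecessarily so. You assume the standard divide-and-conquer for $d$-dimensional maxima yields only $O(n\log^{d-1}n)$ and then try to win back a logarithmic factor by exploiting the $c$-compact representation via counting sort on one coordinate. The paper does neither: it simply cites the classical result of Kung, Luccio, and Preparata (1975) that the set of maxima of $n$ vectors in $d$ dimensions can be found using $O(n\log^{d-2}n)$ comparisons for $d\ge 3$. With $d=4$ this is $O(n\log^2 n)$ directly, with no appeal to compactness. Your compactness-based shortcut is also not quite right as written: after sorting on the first coordinate you still have three remaining coordinates, and the phrase ``keyed by $v_e$'' does not explain how the second coordinate is handled; a 2D range tree on $(-v_s',v_e')$ alone ignores $v_e$. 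So the detour is both unnecessary and incomplete. Your tie-breaking concern for $\sig(v)=\sig(u)$ is a fair implementation remark (keep one representative per signature), but the paper's proof does not address it either.
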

\begin{proof}
  Let~$(G_1,G_2,k)$ be an instance of \MCIS{} and let~$u,v$ be vertices
  of~$G_1$ and~$G_2$ such that~$\sig(v)\leq\sig(u)$.
  Observe that this implies $v_s\geq u_s$, $v_e\leq u_s$, $v_s'\geq u_s'$, and $v_e'\leq u_e'$. Hence, $N_{G_1}[v]\subseteq N_{G_1}[u]$ and~$N_{G_2}[v]\subseteq N_{G_2}[u]$ and, therefore, $N_{G_1\cup G_2}[v]\subseteq N_{G_1\cup G_2}[u]$. Hence, instead of choosing~$u$ into an independent set, we can always choose~$v$. Therefore, it is safe to delete~$u$.

  Regarding the running time,
  \citet{journal-KLP75} showed that the set of maxima of $n$~vectors in $d$~dimensions can be computed using $O(n\log^{d-2}{n})$~comparisons, directly implying the stated running time.
\end{proof}

\begin{theorem}\label{thm:sigkern} %
  Let $(G_1,G_2,k)$ be an instance of \MCIS{} such that~$G_1$ and~$G_2$
  are $\CompMax$-compact. 

  Then, a problem kernel with
  $\CompMax^3$~vertices can be constructed in~$O(n\log^2 n)$~time.
  The size reduces to $2\CompMax^2$~vertices if one of the input graphs
  is proper interval.
\end{theorem}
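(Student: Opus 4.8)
The plan is to combine \autoref{rule:del-supersig} with a counting argument on the number of distinct signatures, using compactness to bound the ranges of each coordinate. First I would apply \autoref{rule:del-supersig} exhaustively, which by \autoref{lem:del-supersig} takes $O(n\log^2 n)$ time and yields an equivalent instance in which no signature dominates another component-wise; that is, the surviving signatures form an \emph{antichain} in the product order on $\mathbb{Z}^4$. Crucially, since $G_1$ and $G_2$ are $\CompMax$-compact, every coordinate $-v_s, v_e, -v_s', v_e'$ takes values in a range of size at most $\CompMax$ (the start and end points lie in $[\CompMax]$). Moreover, \autoref{rule:del-supersig} applied to a pair with \emph{equal} signatures deletes one of them, so after exhaustive application each signature is realized by at most one vertex. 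Hence the number of remaining vertices equals the number of distinct surviving signatures.

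The next step is to bound the size of an antichain in $[\CompMax]^4$ that additionally respects the constraints $v_s \le v_e$ and $v_s' \le v_e'$. The key observation is that fixing the pair $(-v_s, v_s')$ (i.e.\ the two start points) leaves only the pair $(v_e, v_e')$ free, and among vertices sharing the same start points, no one can dominate another, so the end-point pairs $(v_e, v_e')$ must themselves form an antichain in $[\CompMax]^2$. An antichain in $[\CompMax]^2$ has size at most $\CompMax$ (one element per "level" $v_e$, or by Dilworth/Mirsky). Since there are at most $\CompMax^2$ choices for the start-point pair, the total number of surviving vertices is at most $\CompMax^2 \cdot \CompMax = \CompMax^3$. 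This gives the first bound.

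For the proper-interval refinement, I would use the fact that if $G_1$ (say) is a proper interval graph, then in its interval representation no interval properly contains another, so $v_s \le u_s$ forces $v_e \le u_e$ (otherwise $u \subsetneq v$). Equivalently, sorting the $G_1$-intervals by start point also sorts them by end point, so $-v_s$ and $v_e$ are \emph{comonotone}: the $G_1$-coordinates of a signature are essentially determined by a single value, namely the rank of $v$ in the start-point order, taking at most $\CompMax$ distinct values. Thus a surviving signature is determined by this single $G_1$-value together with the pair $(-v_s', v_e')$; among vertices sharing the $G_1$-value, the $G_2$-end-point $v_e'$ values for a fixed $v_s'$ must be distinct (no domination), but more simply: fixing the $G_1$-value, the remaining signature lives in $[\CompMax]^2$ and must be an antichain there, hence has at most $\CompMax$ elements — wait, that would give $\CompMax^2$ only if the $G_1$-value ranges over $\CompMax$ values; indeed $\CompMax$ values times $\CompMax$ for the antichain gives $\CompMax^2$. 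To get the factor $2$, I would instead argue directly: a surviving vertex is determined by $(\text{$G_1$-rank}, v_s', v_e')$ with the antichain condition on $(v_s', v_e')$ for fixed $G_1$-rank; but actually the cleanest route is to observe the whole signature, after collapsing the comonotone $G_1$-coordinates to one, is a point in $[\CompMax]^3$, and the antichain condition (in the appropriate mixed order, decreasing in the $G_1$-rank and in $v_s'$, increasing in $v_e'$) restricts this to at most $2\CompMax^2$ points by a standard two-slice argument — for each value of $v_e'$ the admissible $(\text{$G_1$-rank}, v_s')$ pairs form a staircase of length $\le \CompMax$, but one must take care at the boundary, contributing the factor $2$.

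The main obstacle I expect is making the proper-interval counting argument tight enough to land at $2\CompMax^2$ rather than the cruder $\CompMax^2$-or-$4\CompMax^2$ bounds; this requires pinning down exactly which mixed partial order the surviving signatures avoid chains in, and invoking the right extremal bound (essentially that a monotone staircase in a $\CompMax \times \CompMax$ grid plus its "shadow" has $\le 2\CompMax$ cells, summed over the $\CompMax$ values of the third coordinate, minus an overcount). The $\CompMax^3$ bound, by contrast, is routine once the antichain-and-uniqueness structure is established. Throughout, correctness of the kernel is immediate from \autoref{lem:del-supersig}, and the running time is dominated by the single exhaustive application of \autoref{rule:del-supersig}, namely $O(n\log^2 n)$.
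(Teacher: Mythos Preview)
Your argument for the $\CompMax^3$ bound is correct and is essentially the paper's: fix the two start points, observe that the remaining end-point pairs form an antichain in $[\CompMax]^2$, hence at most $\CompMax$ per start-pair.

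The $2\CompMax^2$ part, however, has a real gap. You assume that in the $\CompMax$-compact representation of~$G_1$ no interval properly contains another (``$v_s\le u_s$ forces $v_e\le u_e$''). This is false: the $c$-compact representation produced by \autoref{obs:shrinkints} is \emph{not} in general a proper representation, even when $G_1$ is a proper interval graph. Already for $P_3$ the minimum-$c$ representation is $a=[1,1],\,b=[1,2],\,c=[2,2]$, where $a\subsetneq b$. Switching to a genuinely proper representation does not save the count either: $P_3$ has $\CompMax=2$ maximal cliques but needs four integer endpoints in any proper representation, so your ``$G_1$-rank takes at most $\CompMax$ values'' fails. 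Your subsequent hedging (``wait'', ``a standard two-slice argument'', ``take care at the boundary'') reflects that the argument never closes.

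The paper avoids the representation issue entirely by using the $K_{1,3}$-free characterization of proper interval graphs. It covers the surviving signatures by two partial functions $R_1,R_2\colon[\CompMax]^2\to\text{signatures}$: $R_1(i,j)$ is the signature in $\mathcal S_{i,j}$ with minimum $v'_e$, and every remaining signature~$\sig(w)$ is indexed by $(w_e,w'_s)$ under~$R_2$. That $R_2$ is single-valued is where properness enters: two signatures sharing $(w_e,w'_s)$ but not hit by~$R_1$ would force, via the guarantee of \autoref{obs:shrinkints}(i) that every coordinate position carries both a start and an end point, a $K_{1,3}$ in~$G_1$. This yields exactly $2\CompMax^2$ with no slack to negotiate.
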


\begin{proof}
  Let~$(G_1,G_2,k)$ be an instance of \MCIS{}. We assume that~$G_1$
  and~$G_2$ have been preprocessed according to
  \autoref{obs:shrinkints}, that is, at each position of the interval
  representations of~$G_1$ and~$G_2$, there is an interval start point
  as well as an interval end point.  The problem kernel
  $(G^*_1,G^*_2,k)$~is then obtained from~$(G_1,G_2,k)$ by applying
  \autoref{rule:del-supersig} to~$(G_1,G_2,k)$.  By definition
  of~$G_1^*$ and~$G_2^*$, the graphs~$G_1^*$ and~$G_2^*$ contain at
  most one vertex of each signature. Hence, it is sufficient to show
  that there are at most $\CompMax^3$ different signatures
  corresponding to vertices in the new instance~$(G_1^*,G_2^*,k)$.

  Consider the set~$\mathcal{S}_{i,j}$ of all
  signatures~$s=(-v_s,v_e,-v_s',v_e')$ with $v_s=i$ and $v_s'=j$ such
  that $v$~remains in~$G_1^*$ and~$G_2^*$. If
  $|\mathcal{S}_{i,j}|>\CompMax$, then we find~$s_1,s_2\in \mathcal
  S_{i,j}$ such that $s_1$ and~$s_2$ agree in the second or fourth
  coordinate, since there are at most $\CompMax$~possible values for
  each of them. Since then $s_1$ and $s_2$ agree in three coordinates,
  it follows that either $s_1\leq s_2$ or $s_2\leq s_1$, contradicting
  the assumption that \autoref{rule:del-supersig} has been applied
  to~$(G_1^*,G_2^*,k)$.  Obviously, there are at most~$\CompMax^2$ sets
  of signatures~$\mathcal{S}_{i,j}$ and, thus, there are at most
  $\CompMax^3$ signatures in total.

  \medskip\noindent In the following, we show that the described
  instance has at most $2\CompMax^2$~vertices if $G_1$ is a proper
  interval graph.  To this end, we will use two relations~$R_1,R_2$
  between pairs~$(i,j)\in[\CompMax]\times[\CompMax]$ and signatures
  corresponding to vertices of~$G^*_1$.  We then show that each signature is the image under one of~$R_1$ and~$R_2$ and that both
  relations are in fact functions, that is, they map each pair to at
  most one signature.
  This proves that there are at most~$2\cdot
  |[\CompMax]\times[\CompMax]|=2\CompMax^2$~signatures corresponding
  to vertices in~$G_1^*$. Since~$G_1^*$~contains at most one vertex
  per signature, the theorem will follow.

  \looseness=-1 We define the relations $R_1$ and $R_2$ as follows: for a
  pair~$(i,j)$, the relation~$R_1$ associates $(i,j)$ with all
  signatures~$s=\sig(v)=\vsig{v}$ of~$\mathcal{S}_{i,j}$ that
  minimize~$v'_e$.  For all signatures~$s=\sig(w)=\vsig{w}$ with
  $w$~remaining in~$G_1^*$ and that are \emph{not} images under~$R_1$,
  the relation~$R_2$ associates~$(w_e,w'_s)$ with~$s$. By definition,
  every signature is the image of some pair under \emph{either}~$R_1$
  or~$R_2$.

  Observe that $R_1$~is a function: since~$G^*_1$ and~$G^*_2$ are
  reduced with respect to \autoref{rule:del-supersig}, for each
  pair~$(i,j)\in[\CompMax]\times[\CompMax]$, there is at most one
  signature~$s=\vsig{v}\in\mathcal{S}_{i,j}$ that minimizes $v_e'$.

  It remains to show that $R_2$~is also a function.  Towards a
  contradiction, assume that $R_2$ maps some pair to two
  signatures. Then, there are distinct vertices~$v$ and~$w$ in~$G_1^*$
  with signatures~$s_1:=\sig(v)=\vsig{v}$ and $s_2:=\sig(w)=\vsig{w}$
  such that $(v_e,v'_s)=(w_e,w'_s)$.  Since the vertices $v$ and~$w$
  are in~$G^*_1$, we know that $v_s\neq w_s$, since otherwise $s_1\leq
  s_2$ or $s_2\leq s_1$.  By symmetry, let $w_s<v_s$. Since
  $s_2\in\mathcal S_{w_s,w_s'}$, and, therefore, $\mathcal
  S_{w_s,w_s'}$~is nonempty, there is a
  signature~$s_3:=R_1(w_s,w_s')=\sig(u)=\vsig{u}$. Since $s_2$~is an
  image under~$R_2$, it is not an image under~$R_1$ and, thus, we have
  $s_2\ne s_3$.  %
  The definition of~$R_1$ implies
  $u_s=w_s%
  $, $u_s'=w_s'$, and~$u'_e\leq w'_e$. Therefore, we have $u_e>w_e%
  $
  since, otherwise, $s_3\leq s_2$.

  \begin{figure}
    \centering
    \begin{tikzpicture}
      \draw[{-]}] (-2,0.3)--node[midway,above]{$x$}(-1,0.3);
      \draw[{[-}] (1,0.3)--node[midway,above]{$y$}(2,0.3);
      \draw[interval] (-0.75,0.3)--node[midway,above]{$v$}(0.75,0.3);
      \draw[interval] (-1,0)--node[midway,below]{$u$}(1,0);
    \end{tikzpicture}
    \caption{The constellation $u_s= x_e< v_s$ and $v_e<y_s=u_e$ that
      induces a~$K_{1,3}$ in~$G_1$.}
    \label{fig:interclaw}
  \end{figure}
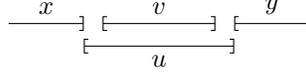

  \looseness=-1 Since $u_s=w_s<v_s$ and $v_e=w_e<u_e$, we now have a constellation~$u_s<v_s\leq v_e<u_e$ of
  intervals that contradicts $G_1$ being a proper interval graph:
  since $G_1$~has been preprocessed according to
  \autoref{obs:shrinkints}, the start point~$u_s$ is also the end
  point~$x_e$ of some interval~$x$ and the end point~$u_e$ is also the
  start point~$y_s$ of some interval~$y$.  Note that $x_e=u_s<v_s\leq v_e<u_e=y_s$ implies that $x,y,v$, and $u$ are pairwise distinct.
  This, as depicted in
  \autoref{fig:interclaw}, implies that $G_1$~contains a~$K_{1,3}$ as
  induced subgraph, which contradicts~$G_1$ being a proper interval
  graph. The $K_{1,3}$~consists of the
  central vertex~$u$ and the leaves~$v,x,y$. 
\end{proof}

\noindent We can generalize \autoref{thm:sigkern} for the problem of finding an independent set of \emph{weight} at least~$k$: we only have to keep that vertex for each signature in the graph that has the highest weight.
Since there are at most~$\CompMax^4$ different signatures, we obtain a problem kernel with $\CompMax^4$~vertices for the weighted variant of \MCIS{}.

%
\begin{comment}
%
%
%
%
%
%
%
%
%
%
%
%
%
%
%
%
%
%
%
%
%
%
%
%
%
%
%
%
%
%
%
%
%
%
%
%
%
%
%
%
%
%
%
%
%
%
%
%
%
%
%
%
%
%
%
%
%
%
%
%
%
%
%
%
%
%
%
%
%
%
%
%
%
%
%
%
%
%
%
\end{comment}

%
%
%
%
%
%
%

%
%
%
%
%
%
%
%
%

 %
%
%
%
%

%

%
%
%
%
%

%
%
%

%
%
%

%
%
%

%
%

%
%
%
%
%
%
%
%

%
%
%
%

%
%
%

\section{Experimental Evaluation}
\label{sec:experiments}\label{sec:exp}

In this section, we aim for giving a proof of concept by demonstrating to which extent instances of \MColIS{} are solvable within an acceptable time frame of five minutes. Herein, we chose \MColIS{} (see \autoref{sec:mcolisdef} for the definition) since it is the most general problem studied in our work and algorithms for it also solve \MCIS{} and \JIntSel{}. 

\looseness=-1 We implemented the dynamic programming algorithm~\eqref{DP-Qp} from
\autoref{sec:DP} that solves \MColIS{} in $O(2^Q\cdot n)$~time and
$O(2^Q\ell+\gamma c)$~space (\autoref{thm:mcolis-fpt}), where $\gamma$
is the number of colors, $c$~is the compactness of the input interval
graph, $\ell$~is the maximum length of an interval, and $Q$~is the
structural parameter ``maximum number of live colors''
(\autoref{def-Q}). We applied the implemented algorithm to randomly
generated instances.

Note that we abstained from implementing our data reduction rules (Section \ref{sec:kernelization} and \ref{mcis-dr}), since they do not apply to the most general form of \MColIS{}, which we aim to experiment with.

\paragraph{Implementation Details.} 
 The implementation of the algorithm is based on recurrence~\eqref{DP-Qp} from \autoref{sec:DP}, but allows the vertices to have weights and finds a colorful independent set of maximum weight. The source code uses about 700 lines of C++ and is freely available.\footnote{\url{http://fpt.akt.tu-berlin.de/cis/}} 
The experiments were run on a computer with a 3.6\,GHz Intel Xeon processor and 64\,GiB RAM under Linux~3.2.0, where the source code has been compiled using the GNU C++ compiler in version~4.7.2 and using the highest optimization level~(-O3). 

\paragraph{Data.} \looseness=-1 In order to test the influence of various parameters on the running time and memory usage of the algorithm, we evaluated the algorithm on artificial, randomly generated data. To generate random interval graphs, we use a model that is strongly inspired by \citet{Sch88}. However, while \citet{Sch88} chooses integer interval endpoints uniformly at random without repetitions from~$[2n]$, we choose integer interval endpoints uniformly at random from~$[c]$, where~$c$~is a maximum compactness chosen in advance. It then remains to assign colors and weights to the vertices.

In detail, to generate a random interval graph, we fix a maximum compactness~$c$, a maximum number~$\gamma$ of colors, and a number~$n$ of intervals to generate. We then randomly generate $n$~intervals: for each interval~$v$, we choose a start point~$v_s$ and an end point~$v_e$ uniformly at random from $[c]$. Then, we add each color in $[\gamma]$ to the color list of~$v$ with probability~$1/2$ and uniformly at random assign $v$~a weight from~1 to~10.

To interpret the experimental results, it is important to make some structural observations about the data generated by this random process.

\begin{noindlist}

\item \looseness=-1 The maximum interval length~$\ell$ is at most $c-1$. Moreover, with a growing number~$n$ of generated intervals, the probability $(1-1/c^2)^n$ of \emph{not} generating an interval that indeed has length~$c-1$ approaches zero.
  That is, we expect the chosen parameter~$c\approx\ell+1$ to roughly linearly influence the memory usage of the algorithm (\autoref{thm:mcolis-fpt}).

\item The maximum number of live colors~$Q$ is at most the number~$\gamma$ of colors. However, since every interval contains each color with equal probability, with increasing number~$n$ of intervals we will have $Q\approx\gamma$. Hence, we expect $\gamma$ to exponentially influence the running time and memory usage of the algorithm (\autoref{thm:mcolis-fpt}). 
\item The sizes of the generated vertex color lists follow a binomial distribution. The expected color list size is~$\gamma/2$.
\end{noindlist}

\paragraph{Experimental Results.}

We generated three data sets by varying each time one of the parameters~$\{n,\gamma,c\}$ and keeping the other two constant. We applied our algorithm to find a maximum colorful independent set in each of the graphs.

\begin{figure*}
  \begin{tikzpicture}
    \begin{axis}[change y base, change x base, y unit=s, ymax=999,
      xmax=20, xlabel=number~$\gamma$ of colors, ylabel=running time,
      width=0.45\textwidth, ymax=500, xmax=18.5]

      \addplot[color=black,mark=+,only marks] table [x=COLS, y=T1]{INTS	LEN	COLS	MEM	CLIQUES	T1	T2	T3	T4	T5
100000 1000 10	7.73841	990	1.78	1.77	1.76	1.78	1.78	
100000 1000 11	15.465	989	3.55	3.55	3.53	3.52	3.54	
100000 1000 12	30.9337	989	7.05	7.05	7.06	7.05	7.06	
100000 1000 13	61.8087	991	14.09	14.1	14.08	14.08	14.08	
100000 1000 14	123.871	991	28.05	28.05	28.05	28.05	28.06	
100000 1000 15	247.746	991	55.92	55.94	55.95	55.93	55.99	
100000 1000 16	494.996	990	111.57	111.91	111.67	111.6	111.59	
100000 1000 17	988.996	988	223.17	223.14	223.13	242.03	223.23	
100000 1000 18	1990	994	445.72	445.89	445.89	445.92	445.77	
100000 1000 19	3948	987	891.69	891.77	891.76	891.87	891.64	
100000 1000 20	7928	991	1793.43	1793.84	1793.56	1793.41	1793.65	
};
  \end{axis}
\end{tikzpicture}\hfill{}
\begin{tikzpicture}
  \begin{axis}[change y base, change x base, use units, y unit=GB,
    axis base prefix={axis y base -3 prefix {}},
    xlabel=number~$\gamma$ of colors, ylabel=memory usage,
    width=0.45\textwidth, xmax=18.5]
    \addplot[color=black,mark=+,only marks] table [x=COLS, y=MEM]
    {INTS	LEN	COLS	MEM	CLIQUES	T1	T2	T3	T4	T5
100000 1000 10	7.73841	990	1.78	1.77	1.76	1.78	1.78	
100000 1000 11	15.465	989	3.55	3.55	3.53	3.52	3.54	
100000 1000 12	30.9337	989	7.05	7.05	7.06	7.05	7.06	
100000 1000 13	61.8087	991	14.09	14.1	14.08	14.08	14.08	
100000 1000 14	123.871	991	28.05	28.05	28.05	28.05	28.06	
100000 1000 15	247.746	991	55.92	55.94	55.95	55.93	55.99	
100000 1000 16	494.996	990	111.57	111.91	111.67	111.6	111.59	
100000 1000 17	988.996	988	223.17	223.14	223.13	242.03	223.23	
100000 1000 18	1990	994	445.72	445.89	445.89	445.92	445.77	
100000 1000 19	3948	987	891.69	891.77	891.76	891.87	891.64	
100000 1000 20	7928	991	1793.43	1793.84	1793.56	1793.41	1793.65	
};
  \end{axis}
\end{tikzpicture}
\caption{Dependence of running time and space requirements of the
  dynamic program~(\ref{DP-Qp}, \autoref{sec:DP}) for \MColIS{} on the
  number~$\gamma$ of colors in the input interval graph, each having
  $10^5$ intervals and being $10^3$-compact.}
\label{fig:vary-gamma}
\end{figure*}
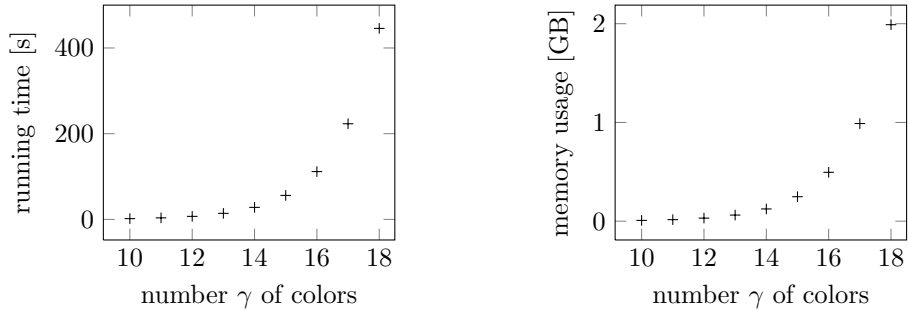

For the first data set, we let the number~$\gamma$ of colors vary between $10$ and $18$ and fixed $n=10^5$ and $c=10^3$. \autoref{fig:vary-gamma} clearly exhibits the exponential dependence of running time and memory usage on the number~$\gamma$ of colors, which both roughly double when increasing the number of colors by one. We see that, in this setup, we can solve \MColIS{} within a time frame of five minutes for $\gamma\leq17$.

\begin{figure*}
  \begin{tikzpicture}
    \begin{axis}[change y base, change x base, axis base prefix={axis x base -3 prefix {}}, 
      y unit=s, x unit=10^3, xlabel=number~$n$ of intervals,
      ylabel=running time, width=0.45\textwidth, ymin=0]
      
      \addplot[color=black,mark=+,only marks, each nth point=2] table
      [x=N, y=T, col sep=comma] {N,T,Mem
105000,58.79,246.996
110000,61.54,248.996
115000,64.31,248.496
120000,67.1,248.746
125000,69.86,248.246
130000,72.66,248.996
135000,75.43,247.996
140000,78.33,248.746
145000,81.09,248.246
150000,83.85,247.746
155000,86.7,248.746
160000,89.42,247.996
165000,92.18,249.246
170000,94.96,247.996
175000,97.73,248.746
180000,100.54,248.496
185000,103.29,249.246
190000,106.09,247.746
195000,108.88,248.996
200000,111.7,248.746
205000,114.46,249.246
210000,117.28,248.996
215000,120.05,248.996
220000,122.84,248.996
225000,125.57,249.246
230000,128.4,248.996
235000,131.14,249.246
240000,134.02,249.496
245000,136.8,248.996
250000,139.53,249.496
255000,142.28,249.246
260000,155.04,249.246
265000,148.01,249.746
270000,150.65,249.746
275000,153.49,248.746
280000,156.2,249.496
285000,159.12,249.496
290000,161.88,249.746
295000,164.55,249.496
300000,167.32,248.746
305000,170.08,249.496
310000,172.91,249.246
315000,175.81,249.746
320000,178.69,249.496
325000,181.34,248.746
330000,184.14,249.746
335000,186.97,249.746
340000,190.02,248.996
345000,192.48,249.246
350000,195.26,249.496
355000,197.97,248.996
360000,200.84,249.496
365000,203.54,249.496
370000,206.36,249.996
375000,209.14,249.746
380000,211.92,249.746
385000,214.7,249.996
390000,217.53,249.496
395000,220.25,249.746
400000,223.04,249.246
405000,225.84,249.746
410000,228.61,250.246
415000,231.34,249.496
420000,234.18,249.996
425000,237.06,249.246
430000,239.79,249.496
435000,242.52,249.496
440000,245.48,249.746
445000,248.39,249.496
450000,250.99,248.996
455000,253.68,249.496
460000,256.44,249.746
465000,259.24,249.496
470000,262.08,249.246
475000,265.02,249.246
480000,267.53,249.996
485000,270.39,249.996
490000,273.2,249.746
495000,276.02,249.996
500000,278.81,249.996
505000,281.57,248.996
510000,284.29,249.496
515000,287.1,249.246
520000,289.9,249.746
525000,292.67,249.746
530000,295.37,249.746
535000,298.2,249.496
540000,300.98,249.996
545000,303.93,249.246
550000,306.55,249.496
555000,309.33,249.746
560000,312.36,249.746
565000,314.98,249.746
570000,317.67,249.496
575000,320.49,249.746
580000,323.29,249.996
585000,326.16,249.746
590000,328.81,249.746
595000,331.68,249.496
600000,334.41,249.746
};
    \end{axis}
  \end{tikzpicture}\hfill{}
  \begin{tikzpicture}
    \begin{axis}[change y base, change x base, axis base prefix={axis x base -3 prefix {}},%
      y unit=MB, x unit=10^3, ymin=240, ymax=260, xlabel=number~$n$
      of intervals, ylabel=memory usage, width=0.45\textwidth]

      \addplot[color=black,mark=+,only marks, each nth point=2] table
      [x=N, y=Mem, col sep=comma] {N,T,Mem
105000,58.79,246.996
110000,61.54,248.996
115000,64.31,248.496
120000,67.1,248.746
125000,69.86,248.246
130000,72.66,248.996
135000,75.43,247.996
140000,78.33,248.746
145000,81.09,248.246
150000,83.85,247.746
155000,86.7,248.746
160000,89.42,247.996
165000,92.18,249.246
170000,94.96,247.996
175000,97.73,248.746
180000,100.54,248.496
185000,103.29,249.246
190000,106.09,247.746
195000,108.88,248.996
200000,111.7,248.746
205000,114.46,249.246
210000,117.28,248.996
215000,120.05,248.996
220000,122.84,248.996
225000,125.57,249.246
230000,128.4,248.996
235000,131.14,249.246
240000,134.02,249.496
245000,136.8,248.996
250000,139.53,249.496
255000,142.28,249.246
260000,155.04,249.246
265000,148.01,249.746
270000,150.65,249.746
275000,153.49,248.746
280000,156.2,249.496
285000,159.12,249.496
290000,161.88,249.746
295000,164.55,249.496
300000,167.32,248.746
305000,170.08,249.496
310000,172.91,249.246
315000,175.81,249.746
320000,178.69,249.496
325000,181.34,248.746
330000,184.14,249.746
335000,186.97,249.746
340000,190.02,248.996
345000,192.48,249.246
350000,195.26,249.496
355000,197.97,248.996
360000,200.84,249.496
365000,203.54,249.496
370000,206.36,249.996
375000,209.14,249.746
380000,211.92,249.746
385000,214.7,249.996
390000,217.53,249.496
395000,220.25,249.746
400000,223.04,249.246
405000,225.84,249.746
410000,228.61,250.246
415000,231.34,249.496
420000,234.18,249.996
425000,237.06,249.246
430000,239.79,249.496
435000,242.52,249.496
440000,245.48,249.746
445000,248.39,249.496
450000,250.99,248.996
455000,253.68,249.496
460000,256.44,249.746
465000,259.24,249.496
470000,262.08,249.246
475000,265.02,249.246
480000,267.53,249.996
485000,270.39,249.996
490000,273.2,249.746
495000,276.02,249.996
500000,278.81,249.996
505000,281.57,248.996
510000,284.29,249.496
515000,287.1,249.246
520000,289.9,249.746
525000,292.67,249.746
530000,295.37,249.746
535000,298.2,249.496
540000,300.98,249.996
545000,303.93,249.246
550000,306.55,249.496
555000,309.33,249.746
560000,312.36,249.746
565000,314.98,249.746
570000,317.67,249.496
575000,320.49,249.746
580000,323.29,249.996
585000,326.16,249.746
590000,328.81,249.746
595000,331.68,249.496
600000,334.41,249.746
};
    \end{axis}
  \end{tikzpicture}
  \caption{Dependence of running time and space requirements
    of the dynamic program~(\ref{DP-Qp}, \autoref{sec:DP}) for
    \MColIS{} on the number~$n$ of intervals in the input interval
    graph, each being colored with subsets of~$\{1,\dots,15\}$ and being
    $10^3$-compact.}
    \label{fig:vary-n}
\end{figure*}
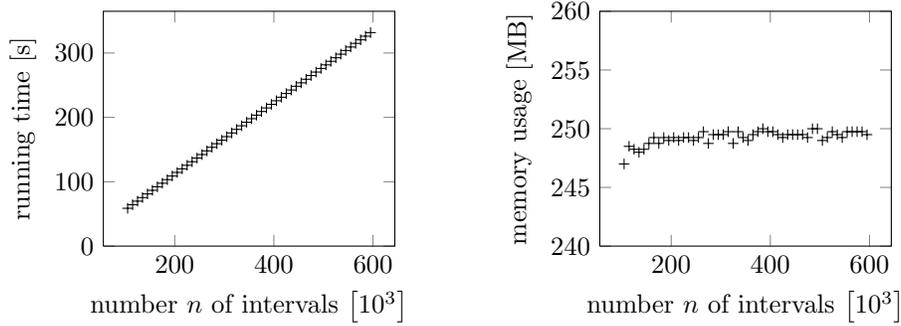

For the second data set, we let the number~$n$ of intervals vary between~$10^5$ and~${6\cdot 10^5}$. We again fixed~$c=10^3$. We chose $\gamma=15$ as number of colors. As expected, \autoref{fig:vary-n} shows a roughly linear dependence of the running time on the number~$n$ of intervals. %
Moreover, the memory usage is almost constantly about 250\,MB with a slight increase, since we left the compactness~$c$ constant and  with increasing number~$n$ of intervals, the maximum interval length~$\ell$ approaches the maximum compactness~$c$.

\begin{figure*}
\begin{tikzpicture}
  \begin{axis}[change y base, change x base, %
    legend style={at={(0.05,0.05)}, anchor=south west}, 
    y unit=s, %
    width=0.45\textwidth,
    ymin=50, ymax=65,
    xlabel=compactness~$c$, ylabel=running time]

    \addplot[color=black,mark=+, only marks, each nth point=2] table
      [x=M, y=T, col sep=comma] {M,Mem,T
100,25.2496,56.11
110,27.7496,55.84
120,30.2495,55.67
130,32.7495,56.65
140,35.2495,56.58
150,37.7494,56.59
160,40.2494,56.47
170,42.7493,56.4
180,45.2493,56.3
190,47.4993,56.18
199,49.9992,56.11
210,52.7492,56.02
220,55.2492,55.93
230,57.7491,55.85
240,60.2491,55.78
249,62.499,55.67
259,64.999,56.64
270,67.749,56.64
280,69.9989,56.68
289,72.2489,56.61
300,75.2489,56.57
308,77.2488,56.55
316,79.2488,56.54
329,82.4987,56.62
339,84.9987,56.43
350,87.4987,56.39
360,89.9986,56.34
369,92.4986,56.32
378,94.7486,56.31
388,97.2485,56.25
399,99.7485,56.17
407,101.998,56.16
418,104.748,56.11
429,107.248,56.09
438,109.748,56.02
449,112.248,55.98
456,114.248,55.96
469,117.248,55.94
480,120.248,55.9
487,121.998,55.82
497,124.498,55.81
509,127.248,55.75
518,129.498,56.66
528,131.998,59.52
537,134.498,56.66
548,137.248,56.69
558,139.498,56.67
568,141.998,56.68
576,144.248,56.65
585,146.498,56.65
596,149.248,56.67
606,151.748,56.61
615,153.998,56.66
625,156.248,56.62
637,159.498,56.59
648,162.248,56.77
656,164.247,56.56
667,166.497,56.54
673,168.497,56.52
685,170.997,56.51
692,173.247,56.5
705,176.497,56.44
714,178.747,56.44
724,181.247,56.42
735,183.997,56.4
746,186.747,56.38
754,188.747,56.37
766,191.497,56.34
774,193.747,56.33
787,196.747,56.29
796,198.997,56.32
803,200.747,56.27
813,203.247,56.23
822,205.747,56.27
834,208.497,56.2
844,210.997,56.21
852,213.247,56.23
863,215.997,56.17
872,218.247,56.17
884,221.247,56.16
890,222.747,56.13
902,225.747,56.14
912,228.247,56.08
924,231.246,56.09
934,233.746,56.08
940,235.246,56.24
950,237.496,56.01
961,240.496,56
971,242.996,56
979,244.996,55.99
994,247.996,55.96
};
  \end{axis}
\end{tikzpicture} \hfill{}
\begin{tikzpicture}
  \begin{axis}[change y base, change x base, %
    y unit=MB, %
    width=0.45\textwidth, xlabel=compactness~$c$, ylabel=memory usage]

    \addplot[color=black,mark=+,only marks, each nth point=2] table
    [x=M, y=Mem, col sep=comma] {M,Mem,T
100,25.2496,56.11
110,27.7496,55.84
120,30.2495,55.67
130,32.7495,56.65
140,35.2495,56.58
150,37.7494,56.59
160,40.2494,56.47
170,42.7493,56.4
180,45.2493,56.3
190,47.4993,56.18
199,49.9992,56.11
210,52.7492,56.02
220,55.2492,55.93
230,57.7491,55.85
240,60.2491,55.78
249,62.499,55.67
259,64.999,56.64
270,67.749,56.64
280,69.9989,56.68
289,72.2489,56.61
300,75.2489,56.57
308,77.2488,56.55
316,79.2488,56.54
329,82.4987,56.62
339,84.9987,56.43
350,87.4987,56.39
360,89.9986,56.34
369,92.4986,56.32
378,94.7486,56.31
388,97.2485,56.25
399,99.7485,56.17
407,101.998,56.16
418,104.748,56.11
429,107.248,56.09
438,109.748,56.02
449,112.248,55.98
456,114.248,55.96
469,117.248,55.94
480,120.248,55.9
487,121.998,55.82
497,124.498,55.81
509,127.248,55.75
518,129.498,56.66
528,131.998,59.52
537,134.498,56.66
548,137.248,56.69
558,139.498,56.67
568,141.998,56.68
576,144.248,56.65
585,146.498,56.65
596,149.248,56.67
606,151.748,56.61
615,153.998,56.66
625,156.248,56.62
637,159.498,56.59
648,162.248,56.77
656,164.247,56.56
667,166.497,56.54
673,168.497,56.52
685,170.997,56.51
692,173.247,56.5
705,176.497,56.44
714,178.747,56.44
724,181.247,56.42
735,183.997,56.4
746,186.747,56.38
754,188.747,56.37
766,191.497,56.34
774,193.747,56.33
787,196.747,56.29
796,198.997,56.32
803,200.747,56.27
813,203.247,56.23
822,205.747,56.27
834,208.497,56.2
844,210.997,56.21
852,213.247,56.23
863,215.997,56.17
872,218.247,56.17
884,221.247,56.16
890,222.747,56.13
902,225.747,56.14
912,228.247,56.08
924,231.246,56.09
934,233.746,56.08
940,235.246,56.24
950,237.496,56.01
961,240.496,56
971,242.996,56
979,244.996,55.99
994,247.996,55.96
};
  \end{axis}
\end{tikzpicture}
\caption{Dependence of running time  and space requirements  of the dynamic program~(\ref{DP-Qp}, \autoref{sec:DP}) for \MColIS{} on the compactness~$c$ of the input interval graph, each having $10^5$~intervals colored using subsets of~$\{1,\dots,15\}$.}
\label{fig:vary-c}
\end{figure*}
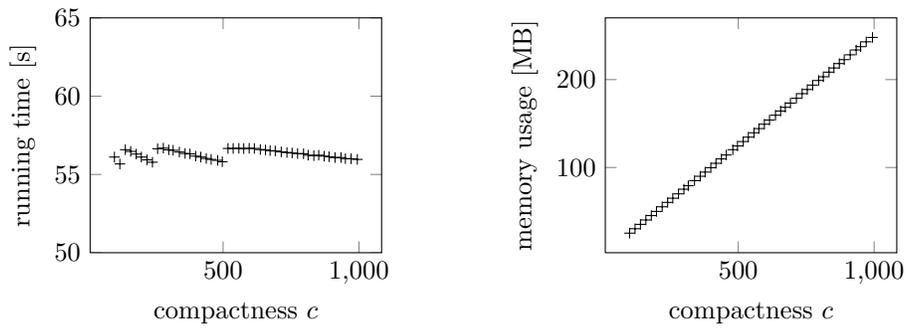

For the third data set, we finally let the compactness~$c$ vary between
$10^2$ and $10^3$. We again fixed $\gamma=15$ and
$n=10^5$. \autoref{fig:vary-c} shows the linear dependence of the memory
usage on the compactness~$c\approx\ell+1$. In contrast, the running time
remains roughly constant with increasing~$c$.  The observed local minima
of the running time are exactly at those values of~$c$ where $c$~is a
power of two.  In this case, we observed that the time spent per table
look-up decreases.  We suspect that this has technical reasons.

\paragraph{Summary.}\looseness=-1 The running time and memory usage of the algorithm on randomly generated data very reliably behave as predicted by \autoref{thm:mcolis-fpt} and most likely scale to larger data. We have seen that on moderate values of~$\gamma\leq15$, the algorithm can solve instances with up to $5.5\cdot 10^5$~intervals in a time frame of about five minutes. However, in application data, like for example from the steel manufacturing application of \citet{HKML11}, the number of colors can be much higher. To efficiently solve such instances with our algorithm, it is crucial that these instances have a low maximum number~$Q$ of ``\emph{live} colors'', that is, these instances must be more structured than our randomly generated interval graphs.

\section{Conclusion}
\label{sec:discussion}
We charted the complexity landscape of {\sc Independent Set} on
subclasses of 2-union graphs, which are of relevance for applications in
scheduling, and which generalize interval graphs. Our focus was on
determining the complexity of finding exact solutions, whereas, so far,
approximation algorithms have been much better researched in the literature~\citep{BNR96,Spi99,BHNSS06,COR06}.

Besides hardness results from
our complexity dichotomy, we provided first results on effective
polynomial-time preprocessing (kernelization) in this context.  We also
developed encouraging algorithmic results and evaluated them
experimentally, which might find use in practical
applications. %

For future work, it would be interesting to determine whether \MCIS{} is fixed-parameter tractable with respect to the ``$M$-compositeness'' parameter that is small in the steel manufacturing application considered by \citet{HKML11}.  Moreover, it seems worthwhile trying to speed up our randomized algorithm for \JIntSel{} (\autoref{thm:colorcoding}) using the algebraic techniques described by \citet{KW09}.

\paragraph{Acknowledgments.}
  We thank Michael Dom and Hannes Moser for discussions on coil coating, which initiated our investigations on \MCIS{}, as well as Wiebke Höhn for providing details regarding the application of \MCIS{} in steel manufacturing. %

  René van Bevern was supported by the Deutsche Forschungsgemeinschaft (DFG), project DAPA, NI~369/12. Part of the work was done while being supported by DFG project AREG, NI~369/9. Mathias Weller was supported by the DFG, project DARE, NI~369/11.

\bibliographystyle{abbrvnat}
\bibliography{mcis}

\end{document}